\newcommand{\correct}{\textsc{Correct} }
\newcommand{\inc}{\textsc{Incorrect} }
\newtheorem{theorem}{Theorem}
\newtheorem{corollary}{Corollary}
\newtheorem{lemma}{Lemma}
\newtheorem{proposition}{Proposition}
\newtheorem{fact}{Fact}
\newtheorem{definition}{Definition}
\newtheorem*{conjecture*}{Conjecture}
\begin{document}

\title{Asynchronous Majority Dynamics in Preferential Attachment Trees}
\author{Maryam Bahrani\footnote{Princeton University, mbahrani@alumni.princeton.edu.} 
\and
Nicole Immorlica\footnote{Microsoft Research, nicimm@microsoft.com} 
\and
Divyarthi Mohan\footnote{Princeton University, dm23@cs.princeton.edu} 
\and
S. Matthew Weinberg\footnote{Princeton University, smweinberg@princeton.edu. Supported by NSF CCF-1717899.}
}
\date{}

\newcommand{\E}{\mathbb{E}}
\maketitle
\begin{abstract}
We study information aggregation in networks where agents make binary decisions (labeled incorrect or correct). Agents initially form independent private beliefs about the better decision, which is correct with probability $1/2+\delta$. The dynamics we consider are asynchronous (each round, a single agent updates their announced decision) and non-Bayesian (agents simply copy the majority announcements among their neighbors, tie-breaking in favor of their private signal). 

Our main result proves that when the network is a tree formed according to the preferential attachment model~\cite{BarabasiA99}, with high probability, the process stabilizes in a correct majority within $O(n\log n/\log\log n)$ rounds. We extend our results to other tree structures, including balanced $M$-ary trees for any $M$.


\end{abstract}
\addtocounter{page}{-1}
\thispagestyle{empty}
\newpage


\section{Introduction}

Individuals form opinions about the world both through private investigation and through discussion with one-another.  A citizen, trying to decide which candidate's economic policies will lead to more jobs, might form an initial belief based on her own employment history. However, her stated opinion might be swayed by the opinions of her friends.  The dynamics of this process, together with the social network structure of the individuals, can result in a variety of societal outcomes.  Even if individuals are well-informed, i.e., are more likely to have correct than incorrect initial beliefs, certain dynamics and/or network structures can cause large portions of the population to form mistaken opinions. 

A substantial body of work exists modeling these dynamics mathematically, which we overview in Section~\ref{sec:related}. This paper focuses on the model of {\em asynchronous majority dynamics}.  Initially, individuals have private beliefs over a binary state of the world, but no publicly stated opinion.  Initial beliefs are independent: \correct with probability $1/2+\delta$, and \inc with probability $1/2-\delta$. In each time step, a random individual is selected to announce a public opinion. Each time an individual announces a public opinion, they simply copy the majority of their neighbors' announcements, tie-breaking in favor of their private belief. This is clearly naive: a true Bayesian would reason about the redundancy of information among the opinions of her friends, for example.  Majority (or other non-Bayesian) dynamics are generally considered a more faithful model of agents with bounded rationality (e.g. voters), whereas Bayesian dynamics are generally considered a more faithful model of fully rational actors (e.g. financial traders). We consider asynchronous announcements\footnote{Unlike synchronous models where all agents announce simultaneously.} which are a more faithful model of human decisions (e.g. citizens deciding which candidate is better).

It's initially tempting to conjecture that these dynamics in a connected network should result in a \correct consensus; after all, the majority is initially \correct (with high probability) by assumption. Nonetheless, it's well-understood that individuals can fail miserably to learn. Suppose for instance that the individuals form a complete graph. Then in asynchronous majority dynamics, whichever individual is selected to announce first will have their opinion copied by the entire network. As this opinion is \inc with constant probability, there's a good chance that the entire network makes the wrong decision (this is known as an \emph{information cascade}, and is not unique to asynchonous majority dynamics~\cite{Banerjee92,BikchandaniHW92}). So the overarching goal in these works is to understand in \emph{which} graphs the dynamics stabilize in correctness with high probability.

For most previously studied dynamics (discussed in Section~\ref{sec:related}), ``correctness'' means a \correct consensus. This is because the models terminate in a consensus with probability $1$, and the only question is whether this consensus is correct or not. With majority dynamics, it is certainly possible that the process stabilizes without a consensus. To see this, suppose individuals form a line graph.  In this case, two adjacent individuals with the same initial belief are likely to form a ``road block'' (if both announce before their other neighbors), sticking to their initial beliefs throughout the process. In this case, with high probability a constant fraction of individuals terminate with a \correct opinion, but also a constant fraction terminate with an \inc opinion. As consensus is no longer guaranteed, we're instead interested in understanding network structures for which the dynamics converge, with high probability, to a \emph{majority} of nodes having the \correct opinion (i.e., if a majority vote were to be taken, would it be correct w.h.p.?). 

Prior work shows that, to reach a \correct consensus, it's sufficient for the social network to be sparse (every individual has only a constant number of neighbors) and expansive (every group of individuals have many friends outside the group)~\cite{FeldmanILW14}, and the tools developed indeed make strong use of both assumptions.  Many networks of interest, however, like the hierarchy of employees in a corporation, are neither sparse nor expansive. Therefore, the focus of this paper is to push beyond these assumptions and develop tools for more general graphs.

\subsection{Our Results and Techniques}
We focus our attention on trees, the simplest graphs outside the reach of prior techniques. In addition to modeling certain types of social networks (including hierarchical ones, or communication networks in which redundancy is expensive), and forming the backbone of many more, trees already present a number of technical challenges whose absence enabled the prior results.  We study preferential attachment trees, which are well-studied graphs with rich structure\footnote{The more general preferential attachment graphs are a popular model of real networks.}. Our main result is the following:

\begin{theorem} Let $G$ be a tree. Then with probability $1-o(1)$\footnote{As $n\to \infty$ the probability converges to $1$, where $n$ is the number of nodes in $G$.}, asynchronous majority dynamics in $G$ stabilizes in a \correct majority if:
\begin{itemize}
\item $G$ is formed according to the preferential attachment model.\footnote{That is, $G$ is created by adding nodes one at a time. When a node is added, it attaches a single edge to a random previous node, selected proportional to its degree.}
\item $G$ is a balanced, $M$-ary tree of any degree.\footnote{That is, $G$ can be rooted at some node $v$. All non-leaf nodes have $M$ children, and all leaves have the same distance to $v$.}
\end{itemize}
\end{theorem}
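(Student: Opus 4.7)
The plan is to use two simple reductions that apply to any tree. First, in the stable state every leaf $L$ with unique neighbor $v$ must satisfy $\mathrm{opinion}(L) = \mathrm{opinion}(v)$: whenever $L$ is next picked its update just copies $v$. So it suffices to show that a large fraction of internal nodes end with the correct opinion, since the many leaves will then inherit correctness and tip the total count over $1/2$. Second, I would condition on the uniformly random permutation giving the first-pick order of all nodes; conditional on this permutation the only remaining randomness is the iid private signals, which is what Chernoff-style bounds exploit.

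For the preferential attachment tree, I would introduce the notion of a leaf-heavy hub: an internal node $v$ whose neighborhood contains $\ell_v = \omega(1)$ leaves with $\ell_v > (1/2 + \Omega(1))\,d_v$. A structural lemma, using standard PA concentration estimates together with the fact that most of a hub's neighbors are fresh new-node attachments that remain leaves, would show that w.h.p.\ a $1-o(1)$ fraction of nodes are leaves or are leaf neighbors of such a hub. For each leaf-heavy hub $v$, I would run a star-style analysis: among $\{v\}$ and its $\ell_v$ leaf neighbors, the position of $v$ in the first-pick order is uniform, so the number $k$ of leaves preceding $v$ is uniform on $\{0,\dots,\ell_v\}$. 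Each such leaf announces its independent private signal, and by Chernoff the majority of these announcements is correct with probability $1 - e^{-\Omega(\delta^2 k)}$; averaging over $k$ yields correctness of $v$'s first announcement w.p.\ $1-o(1)$. Since leaves strictly dominate $v$'s neighborhood, non-leaf neighbors cannot overturn this correct majority, and once $v$ has first announced correctly every subsequently re-updated leaf copies $v$, producing a self-reinforcing correct majority that keeps $v$ correct through all later updates.

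For the balanced $M$-ary tree I would perform an induction from leaves upward. Penultimate nodes are handled by the star-style analysis above with $\ell = M$; the single additional parent neighbor cannot overturn the leaf majority for $M \geq 2$. For a deeper internal node $v$, the inductive hypothesis gives that each of $v$'s $M$ children has correct stable opinion w.h.p.\ and, by the persistence argument, was correct from its first announcement onward. Conditional on the first-pick permutation, $v$'s position among $\{v\} \cup \{\text{children of } v\}$ is uniform, so w.h.p.\ many children first-announce before $v$, correctly, and a second Chernoff bound shows that $v$'s first announcement is also correct. Note that conditional on the permutation the children's first announcements are independent because their subtrees are disjoint.

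The main obstacle will be controlling the compounding of failure probabilities so that the final union bound closes. For the $M$-ary tree, this requires showing that the per-node correctness probability $p_k$ at depth $k$ approaches $1$ rapidly enough across the $O(\log_M n)$ layers; for large $M$, Chernoff amplification at each level suffices, but for small $M$ (especially $M = 2$) each single-level amplification is only a constant-factor improvement, and one likely needs a different approach, e.g.\ computing the expected fraction of correct nodes and applying a McDiarmid-style concentration inequality (which should work because flipping one private signal only affects a bounded-size subtree in the dynamics). For the PA tree, the structural claim about leaf-heavy hubs must be established with enough concentration that a union bound over all hubs succeeds; handling the nodes of intermediate degree, neither clearly leaves nor clearly hubs, will likely be the most delicate technical ingredient.
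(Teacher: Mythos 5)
Your plan has a fatal structural gap in the preferential-attachment case, and the persistence step that both cases lean on is unproven and is in fact the hardest part of the problem. For PA trees, the claim that a $1-o(1)$ fraction of nodes are leaves hanging off hubs with $\ell_v=\omega(1)$ leaf-neighbors is false: the degree sequence of a PA tree follows a power law with exponent $3$, so the total degree of nodes of degree at least $D$ is $O(n/D)$, and hence only an $O(1/D)$ fraction of all nodes (in particular, of all leaves) can be adjacent to any node of degree at least $D$. Letting $D\to\infty$, almost every leaf is attached to a constant-degree internal node, where your star analysis gives only a constant (not $o(1)$) failure probability --- e.g.\ with probability at least $\frac{1/2-\delta}{\deg(v)+1}$ the hub announces before all its neighbors and simply announces its own signal. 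The paper gets around this by anchoring entire polylogarithmic-size subtrees (not just leaves) to the few genuinely high-degree early nodes, and controlling the interior of each such subtree with a separate ``nearly-finalized'' argument that propagates up from the leaves; the high-degree anchor is used only as a block that finalizes early.

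Second, the assertion that once a hub first announces correctly the configuration is ``self-reinforcing'' does not follow. At $v$'s second announcement, the announced neighbors consist of the $k$ leaves that preceded $v$ (roughly a $1/2-\delta$ fraction of which are stuck on incorrect signals until they re-announce), whatever leaves announced in between (these do copy $v$), and the non-leaf neighbors, all of which may be incorrect; when $k$ is small and few new leaves have spoken, the incorrect announcements can form a majority and flip $v$. Making this rigorous is exactly the content of the paper's Proposition~\ref{prop:redalwaysred} (being ``safe thru $T$''), which requires a delicate recursion over $v$'s successive announcements and still only yields failure probability $\Theta(1/\deg(v))$, and only for $T\leq n e^{b\deg(v)}$. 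Relatedly, for balanced $M$-ary trees with constant $M$ you correctly observe that the level-by-level amplification stalls, but your fallback (McDiarmid over the signals) only bounds the number of correct announcements at a fixed time; it says nothing about whether that majority survives to stabilization, which is the crux. Moreover the Lipschitz constant is not worst-case bounded: a single signal's influence travels along critical chains, and is only bounded (by distance $O(t/n)$, hence $n^{o(1)}$ nodes in a bounded-degree tree at the relevant time) with high probability over the announcement order, not deterministically. The paper's resolution of all of this is a two-part decomposition your proposal is missing: (i) a correct majority exists at time $\Theta(n\ln n/\ln\ln n)$, proved via pairwise near-independence of announcements (long paths, or high-degree blocks on short paths, with a counting argument for the few remaining pairs) together with Chebyshev; and (ii) a $1-o(1)$ fraction of nodes have permanently finalized by that same time, so the majority cannot subsequently be lost.
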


\paragraph*{Beyond Prior Tools.} In prior work~\cite{FeldmanILW14}, the authors have two key ideas. Without yet getting into full details, one key idea crucially invokes sparsity to claim that most pairs of nodes $u,v$ have distance $d(u,v) = \Omega(\ln n /\ln \ln n)$, which allows them to conclude that after $O(n\ln n/\ln \ln n)$ steps, most nodes are announcing a \correct opinion.

Still, just the fact that the dynamics hit a \correct majority along the way does not imply that the \correct majority will hold thru termination. To wrap up, they crucially invoke expansiveness (building off an argument of~\cite{MosselNT13}) to claim that once there is a \correct majority, it spreads to a \correct consensus with high probability.

Both properties are necessary for prior work, and both properties fail in trees. For instance, the star graph is a tree, but $d(u,v) \leq 2$ for all $u,v$ (precluding their ``majority at $O(n \ln n/\ln \ln n)$'' argument). Additionally, trees are not expansive. In particular, the line graph discussed earlier is a tree which hits a \correct majority at some point (as this tree happens to be sparse), but does not converge to consensus, so there is no hope for an argument like this. However, we believe that the process stabilizes in a correct majority in all trees.

\begin{conjecture*}
Let $G$ be any tree. Then the asynchronous majority dynamics stabilizes in a \correct majority with probability $1 - o(1)$.
\end{conjecture*}

\paragraph*{New Tools.} Our main technical innovation is an approach to reason about majority without going through consensus. Specifically, we show in Sections~\ref{sec:stable} and~\ref{sec:wrapup} for preferential attachment trees, or balanced $M$-ary trees, that with probability $1-o(1)$, a $1-o(1)$ fraction of nodes have \emph{finalized} after $O(n\ln n / \ln \ln n)$ steps. That is, after $O(n\ln n /\ln \ln n)$ steps, most (but not all) of the network has stabilized. The main barrier to extending our results to general trees is Section~\ref{sec:stable}, as we require additional structure on the graphs to prove that the process stabilizes quickly. We postpone further details to Section~\ref{sec:stable}, but just wish to highlight this approach as a fairly significant deviation from prior work.

From here, our task is now reduced to showing that a \correct majority exists w.h.p. after $O(n\ln n / \ln \ln n)$ steps. Our main insight here is that most nodes with $d(u,v) = O(\ln n/\ln \ln n)$ must have some high-degree nodes along the path from $u$ to $v$. We prove that such nodes act like a ``road block,'' causing announcements on either side to be independent with high probability (and all nodes with $d(u,v) = \Omega(\ln n / \ln \ln n)$ can be handled with similar arguments to prior work).


\subsection{Related Work}\label{sec:related}
Information aggregation in social networks is an enormous field, and we will not come close to overviewing it in its entirety. Below, we'll briefly summarize the most related literature, restricting attention to works that consider two states of the world and independent initial beliefs are independently \correct with probability $1/2+\delta$.

\paragraph*{Bayesian Dynamics.} In Bayesian models, agents are fully rational and sequentially perform Bayesian updates to their public opinion based on the public opinions of their neighbors. Seminal works of Banerjee~\cite{Banerjee92} and Bikchandani, Hirshleifer, and Welch~\cite{BikchandaniHW92} first identified the potential of information cascades in this model. Subsequent works consider numerous variations, aiming to understand what assumptions on the underlying network or information structure results in \correct consensus~\cite{SmithS00,BanerjeeF04,AcemogluDLO11}. Many other works studied repeated interactions of Bayesian agents in Social Networks~\cite{GaleK03,RosenbergSV09,KanoriaT13,MullerF13,MosselST14,MosselOT16}. While the high-level goals of these works align with ours, technically they are mostly unrelated as we consider non-Bayesian dynamics. 

\paragraph*{Voter and DeGroot Dynamics.} Prior work also considered other non-Bayesian dynamics. In voter dynamics, individuals update by copying a random neighbor~\cite{Cliffords73,HolleyL75}. Similar dynamics (such as $3$-majority, or $k$-majority) are analyzed from a distributed computing perspective with an emphasis on the rate of convergence to consensus~\cite{BechettiCNPL16, GhaffariP16, GhaffariL18}. In the DeGroot model, individuals announce an opinion in $[0,1]$ (as opposed to $\{0,1\}$), and update by averaging their neighbors~\cite{DeGroot74, GolubJ10}. The biggest difference between these works and ours is that consensus is reached with probability $1$ in these models on any connected graph, which doesn't hold for majority dynamics.

\paragraph*{Majority Dynamics.} The works most related to ours consider majority dynamics. Even synchronous majority dynamics may not result in a consensus (consider again the line graph). These works, like ours, therefore seek to understand what graph structures result in a \correct majority. Mossel, Neeman, and Tamuz study synchronous majority dynamics and prove that a \correct majority arises as long as the underlying graph is sufficiently symmetric, or sufficiently expansive (in the latter case, they prove that the network further reaches consensus)~\cite{MosselNT13}. Feldman et al. study asynchronous majority dynamics and prove that a \correct consensus arises when the underlying graph is sparse and expansive~\cite{FeldmanILW14}. Work of~\cite{TamuzT13} further studies ``retention of information,'' which asks whether \emph{any} recovery procedure (not necessarily a majority vote) at stabilization can recover the ground truth with high probability. In connection to these, our work simply pushes the boundary beyond what classes of graphs are understood in prior work. 

The key difference between the synchronous and asynchronous models is captured by the complete graph. In asynchronous dynamics, a \correct majority occurs only with probability $1/2+\delta$, whereas in synchronous dynamics a \correct consensus occurs with probability $1-\text{exp}(-\Omega(n))$. This is because in step one, every node simply announces their private belief, and in step two everyone updates to the majority, which is $\correct$ with probability $1-\text{exp}(-\Omega(n))$. So while the models bear some similarity, and some tools are indeed transferable (e.g. the expansiveness lemma of~\cite{MosselNT13} used in~\cite{FeldmanILW14}), much of the anlayses will necessarily diverge.

\paragraph*{Preferential Attachment and Balanced $M$-ary Trees.} There is also substantial prior work studying aggregation dynamics in trees. Here, the most related work is~\cite{Howard00, KanoriaM11a}, which studies synchronous majority dynamics in balanced $M$-ary trees. Less related are works which study ``bottom-up'' dynamics in balanced $M$-ary trees~\cite{KanoriaM11b, ZhangCPMH12, TianST18}, $k$-majority dynamics in preferential attachment trees~\cite{AbdullahBF15}, or model cascades themselves as a preferential attachment tree~\cite{GomezKK11}. While these works provide ample motivation for restricting attention to preferential attachment trees, or balanced $M$-ary trees, they bear no technical similarity to ours. 
\section{Model and Preliminaries}\label{sec:prelim}
We consider an undirected tree $G=(V,E)$ with $|V(G)|=n$ and $|E(G)|=m$. We denote by $\deg(v)$ the degree of a node $v\in V(G)$, $N(v)$ to be its neighbors $\{u,(u,v)\in E\}$, and $d(u,v)$ to be the length of the unique path between $u$ and $v$, and let $P(u,v)$ denote the ordered list of vertices on this path (i.e. starting with $u$ and ending with $v$). We'll also denote by $D(G) = \max_{u,v}\{d(u,v)\}$ the diameter of $G$.

Individuals initially have one of two private beliefs, which we'll refer to as \correct (or $1$) and \inc (or $0$). That is, each $v\in V(G)$ receives an independent private signal $X(v)\in\{0,1\}$, and $\Pr[X(v) = 1]=1/2+\delta$, for some constant $0<\delta<1/2$. 

Individuals also have a \emph{publicly announced} opinion (which we will simply refer to as an announcement). We define $C^t(v)\in\{\bot,0,1\}$ to be the public announcement of $v\in V(G)$ at time $t$. Initially, no announcements have been made, \textit{i.e.} $C^0(v)=\bot$ for all $v$. In each subsequent step, a \emph{single node} $v^t$ is chosen uniformly at random from $V(G)$ and updates her announcement (announcements of all other nodes stay the same)\footnote{This makes the process \emph{asynchronous}.}. $v^t$ updates her announcement using \emph{majority dynamics}. That is, if $N^t_1(v)$ denotes the number of $v$'s neighbors with a \correct announcement at time $t$, and $N^t_0(v)$ denotes the number of $v$'s neighbors with an \inc announcement, then:
\[ C^t(v)  =\left\{
    \begin{array}{ll}
      1 & \text{if } N^{t-1}_1(v)>N^{t-1}_0(v), \text{ and }v=v^t,\\
      0 & \text{if }  N^{t-1}_1(v)<N^{t-1}_0(v), \text{ and }v=v^t,\\
      X(v) & \text{if }  N^{t-1}_1(v)=N^{t-1}_0(v), \text{ and }v=v^t,\\
	C^{t-1}(v) & \text{if } v\neq v^t.\\
   \end{array}
\right. \]

Note that we will treat $\delta$ as an absolute constant. Therefore, the only variable taken inside Big-Oh notation is $n$, the number of nodes (and, for instance, when we write $o(1)$ we mean any function of $n$ that approaches $0$ as $n$ approaches $\infty$).

As shown in \cite{FeldmanILW14}, it is easy to see that in any network this process stabilizes with high probability in $O(n^2)$ steps. That is, the network reaches a state where no node will want to change its announcement and thus the process terminates.

\subsection{Concentration Bounds and Tools from Prior Work}
Our work indeed makes use of some tools from prior work to get started, which we state below. The concept of a \emph{critical time}, defined below, is implicit in~\cite{FeldmanILW14}.

\begin{definition}
The \emph{critical time}\footnote{This definition can be naturally extended to any general graph.} from $u$ to $u$, $T(u,u)$, is the first time that node $u$ announces. The critical time from $u$ to $v$, $T(u,v)$, is recursively defined as the first time that $v$ announces \emph{after} the critical time from $u$ to $x$, where $x$ is the neighbor of $v$ in $P(u,v)$. We further denote the \emph{critical chain} from $u$ to $v$ as the ordered list of critical times from $u$ to $x$ for all $x$ on $P(u,v)$.
\end{definition}

The following lemma is a formal statement of ideas from prior work (a proof appears in Appendix~\ref{app:prelim}). To parse it, it will be helpful to think of the process as first drawing a countably infinite sequence $S$ of nodes to announce, which then allows each $C^t(v)$ to be written as a deterministic function of the random variables $\{X(u), u \in V\}$. Lemma~\ref{lem:influence} below states that in fact, for early enough $t$, initial beliefs for only a proper subset of $V$ suffice.

\begin{lemma}[\cite{FeldmanILW14}]\label{lem:influence}
For all $t$, and all $v$, $C^t(v)$ can be expressed as a function of the subset of signals $\{X(u), T(u,v) \leq t\}$.
\end{lemma}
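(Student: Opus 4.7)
The plan is to prove this by induction on $t$, using the tree structure and the recursive definition of critical times. The base case $t=0$ is immediate: $C^0(v) = \bot$ is a constant, and the referenced index set is empty since no node has announced.

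For the inductive step, fix $v$ and condition on the (deterministic, once the schedule $S$ is fixed) random scheduling sequence. If $v^t \neq v$, then $C^t(v) = C^{t-1}(v)$, and the inductive hypothesis gives expressibility in terms of $\{X(u) : T(u,v) \leq t-1\}$, which is a subset of $\{X(u) : T(u,v) \leq t\}$, so we are done. If $v^t = v$, then $C^t(v)$ is determined by $X(v)$ together with $\{C^{t-1}(w) : w \in N(v)\}$; by induction each $C^{t-1}(w)$ is a function of $\{X(u) : T(u,w) \leq t-1\}$. So it suffices to show the containment $\{X(u) : T(u,w) \leq t-1\} \subseteq \{X(u) : T(u,v) \leq t\}$ for every neighbor $w$ of $v$, and also that $X(v)$ lies in the right-hand set (the latter is immediate since $T(v,v) = t$ in this case).

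The main (and really only) content of the argument is the containment above. Here I would use that $G$ is a tree, so each $u \neq v$ has a unique neighbor of $v$, call it $w^*(u)$, through which the path $P(u,v)$ enters $v$. I split into two cases based on whether a given neighbor $w$ equals $w^*(u)$ or not. If $w = w^*(u)$, then by definition $T(u,v)$ is the first time $v$ announces after $T(u,w)$; since $T(u,w) \leq t-1$ and $v^t = v$, this gives $T(u,v) \leq t$. If $w \neq w^*(u)$, then $v$ lies on $P(u,w)$ and in fact is the neighbor of $w$ on that path, so by the recursive definition $T(u,w)$ is the first time $w$ announces after $T(u,v)$; in particular $T(u,v) < T(u,w) \leq t-1 < t$. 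Either way $T(u,v) \leq t$. The edge cases $u = w$ and $u = v$ are handled by the same reasoning.

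The only potential obstacle is keeping the recursive definition of $T(u,\cdot)$ straight and correctly identifying which endpoint the ``after'' condition refers to in each case; once the path-through-$v$ versus path-through-$w$ dichotomy is set up, the inequalities drop out mechanically. No concentration or probabilistic tools are needed — the statement is a purely deterministic one about the dependence structure of the update rule, given the schedule.
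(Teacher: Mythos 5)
Your proof is correct and follows essentially the same route as the paper's: induction on $t$, splitting on whether $v$ is the node selected at time $t$, with the heart of the argument being the containment $\{X(u) : T(u,w) \leq t-1\} \subseteq \{X(u) : T(u,v) \leq t\}$ for neighbors $w$ of $v$ (which the paper dispatches as ``immediate from the definition of critical times'' and you spell out via the $w = w^*(u)$ versus $w \neq w^*(u)$ dichotomy). The only nitpick is that when $v^t = v$ you assert $T(v,v) = t$, which fails if $v$ has announced before; all you need (and all that is true) is $T(v,v) \leq t$.
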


The final theorem we take from prior work is due to Mossel et al., and is used to claim that at minimum the \emph{expected} number of \correct nodes at termination is at least $(1/2+\delta)n$.

\begin{theorem}[\cite{MosselNT13}]\label{thm:boolean}
Let $f$ be an odd, monotone Boolean function. Let $X_1,\ldots,X_n$ be input bits, each sampled i.i.d. from a distribution that is 1 with probability $p\geqslant 1/2$ and 0 otherwise. Then
$\mathbb{E}[f(X_1,\ldots, X_n)] \geqslant p$.
\end{theorem}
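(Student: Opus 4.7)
\begin{prevproof}{Theorem}{thm:boolean}
Let $g(p) := \mathbb{E}_p[f(X)]$ denote the $p$-biased expectation of $f$; the goal is to show $g(p) \geq p$ for all $p \in [1/2,1]$. My plan is to interpolate in the bias parameter, anchored at the base case $p = 1/2$.

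I would begin by observing that oddness pins down $g$ at $p = 1/2$ and gives a useful symmetry at general $p$. At $p = 1/2$, $X$ and $\bar X := \mathbf{1} - X$ are identically distributed, so $\mathbb{E}[f(X)] = \mathbb{E}[f(\bar X)] = 1 - \mathbb{E}[f(X)]$ and hence $g(1/2) = 1/2$. The same calculation at arbitrary $p$, noting that $\bar X \sim \mathrm{Ber}(1-p)^n$, yields the antisymmetry $g(p) + g(1-p) = 1$.

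The key step is a coupling that upgrades the value at $p=1/2$ to a recursive lower bound. For $p \geq 1/2$, independently sample $Y_i \sim \mathrm{Ber}(1/2)$ and $Z_i \sim \mathrm{Ber}(2p-1)$, and set $X_i := Y_i \vee Z_i$; a marginal check confirms $X \sim \mathrm{Ber}(p)^n$ with independent coordinates. Condition on $Z$: if $f(Z) = 1$, monotonicity and $X \geq Z$ force $f(X) = 1$, while if $f(Z) = 0$, monotonicity and $X \geq Y$ give $\mathbb{E}[f(X)\mid Z] \geq \mathbb{E}[f(Y)] = 1/2$. Averaging over $Z$ yields
\[
g(p) \;\geq\; g(2p-1) + \bigl(1 - g(2p-1)\bigr)\cdot \tfrac{1}{2} \;=\; \tfrac{1}{2}\bigl(1 + g(2p-1)\bigr),
\]
or equivalently, with $h(p) := g(p) - p$, the clean recursion $h(p) \geq \tfrac{1}{2}\, h(2p-1)$ on $[1/2,1]$.

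To finish, I would iterate this recursion and combine it with the antisymmetry $h(p) = -h(1-p)$, the base values $h(0)=h(1/2)=h(1)=0$, and the continuity of $h$ as a polynomial in $p$. The step I expect to be most delicate is precisely this bootstrapping: a single application of $h(p) \geq h(2p-1)/2$ only bites when $2p-1 \geq 1/2$, and naively substituting the antisymmetry $h(2p-1) = -h(2-2p)$ in the range $2p-1 < 1/2$ yields an inequality in the wrong direction. I would address this by also running the dual coupling $X_i = Y_i \wedge W_i$ (valid for $p \leq 1/2$) to extract the companion bound $h(p) \leq h(2p)/2$ on $[0,1/2]$, and then juggling both recursions together with the antisymmetry to cover the full interval $[1/2,1]$.
\end{prevproof}
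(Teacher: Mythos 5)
This theorem is imported from~\cite{MosselNT13}; the paper gives no proof of its own, so your proposal is judged on its own terms. Your setup is correct as far as it goes: the antisymmetry $g(p)+g(1-p)=1$, the coupling $X_i=Y_i\vee Z_i$, and the resulting recursion $g(p)\geq \tfrac12\bigl(1+g(2p-1)\bigr)$, equivalently $h(p)\geq \tfrac12 h(2p-1)$, are all valid. But the wrap-up you defer to the end is exactly where the argument breaks, and your proposed fix does not repair it. First, the ``dual'' recursion is vacuous: substituting $h(x)=-h(1-x)$ into $h(p)\leq\tfrac12 h(2p)$ on $[0,1/2]$ and setting $q=1-p$ gives back precisely $h(q)\geq\tfrac12 h(2q-1)$ on $[1/2,1]$, so juggling the two recursions with the antisymmetry yields no new constraint. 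Second, the primal recursion is genuinely too lossy to give $g(p)\geq p$: for $p\in[1/2,3/4)$ the argument $2p-1$ lands in $[0,1/2)$, where by antisymmetry $g(q)-q$ has the \emph{wrong} sign, and a concrete instance shows the bound cannot close. Take $f=\mathrm{Maj}_3$, so $g(p)=3p^2-2p^3$, and $p=0.6$: the recursion certifies only $g(0.6)\geq\tfrac12\bigl(1+g(0.2)\bigr)=0.552$, strictly below $p=0.6$ (even though $g(0.6)=0.648\geq 0.6$, as the theorem asserts). For $p\in[3/4,1]$ iterating the recursion eventually drives the argument into $[1/2,3/4)$ or below $1/2$, so one cannot bootstrap from the top either. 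What your argument does correctly establish is the weaker bound $\mathbb{E}[f]\geq 1/2$.

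To actually prove $\mathbb{E}_p[f]\geq p$ one needs an ingredient that controls the \emph{derivative} of $g$ rather than a self-similar recursion. The standard route is the Margulis--Russo formula $g'(p)=\sum_i \Pr_p[i\text{ is pivotal}]$ together with the $p$-biased Poincar\'e inequality $\mathrm{Var}_p(f)\leq p(1-p)\sum_i\Pr_p[i\text{ is pivotal}]$, which for Boolean $f$ gives the differential inequality
\[
g'(p)\;\geq\;\frac{g(p)\bigl(1-g(p)\bigr)}{p(1-p)}.
\]
Comparing log-odds, i.e.\ noting that $\phi(p):=\ln\frac{g(p)}{1-g(p)}-\ln\frac{p}{1-p}$ satisfies $\phi'\geq 0$ and $\phi(1/2)=0$ (using your correct observation that oddness forces $g(1/2)=1/2$), yields $g(p)\geq p$ on $[1/2,1]$. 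I would recommend replacing the recursion-iteration step with this derivative argument; the coupling half of your write-up can be discarded or kept only as a proof of the weaker $\mathbb{E}[f]\geq 1/2$.
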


Note that, as long as $v$ has announced at least once by $t$, $C^t(v)$ is an odd, monotone, Boolean function in variables $\{X(u), u \in V\}$,\footnote{That is, flipping all $X(v)$ simultaneously to $1-X(v)$ would cause $C^t(v)$ to flip (odd), and changing any subset of initial beliefs from $0$ to $1$ cannot change $C^t(v)$ from $1$ to $0$ (monotone).} and therefore $\Pr[C^t(v) = 1] \geq 1/2+\delta$ for all $v$ and $t \geq T(v,v)$.

Finally, we'll make use of the following concentration bound on $T(u,v)$ repeatedly. Its proof is a simple application of a Chernoff bound and appears in Appendix~\ref{app:prelim}.

\begin{lemma}\label{lem:path} For all $0<\beta<1$:
\begin{itemize}
\item $\Pr[T(u,v) > 8\cdot \max\{\ln(1/\beta),d(u,v)+1\} \cdot n] \leq \beta^2$.
\item $\Pr[T(u,v)<  (d(u,v)+1)\cdot \beta \cdot n] \leq e^{-\beta d(u,v) (1-\beta)^2/3}$.
\item $\Pr[T(u,v)<  (d(u,v)+1)\cdot \beta \cdot n] \leq \left(e\beta\right)^{d(u,v)} =e^{(1+\ln \beta)\cdot d(u,v)}$.
\end{itemize}
\end{lemma}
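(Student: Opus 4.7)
\medskip
\noindent\textbf{Proof plan for Lemma~\ref{lem:path}.} The plan is to reduce the distribution of $T(u,v)$ to a standard Binomial waiting time and then invoke multiplicative Chernoff bounds. Write $P(u,v) = (u = v_0, v_1, \ldots, v_K = v)$ with $K = d(u,v)$, and call step $t$ an \emph{advance} if the selected node $v^t$ equals $v_{i+1}$, where $i$ is the largest index with $T(u,v_i) \leq t-1$ (or $i = -1$ if no $v_j$ has yet announced). At every step, the currently ``needed'' vertex is one specific node, so the probability of an advance is exactly $1/n$ regardless of the past. Hence the number of advances in any $T$ steps is $\text{Bin}(T,1/n)$, and $T(u,v) \leq T$ iff $\text{Bin}(T,1/n) \geq K+1$. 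Equivalently, $T(u,v) = \sum_{i=0}^{K} \tau_i$ with $\tau_i$ i.i.d.\ Geometric$(1/n)$.

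For the first (upper-tail) inequality, I would set $T = 8\max\{\ln(1/\beta), K+1\}\,n$ and apply $\Pr[X < (1-\gamma)\mu] \leq \exp(-\gamma^2\mu/2)$ to $X = \text{Bin}(T,1/n)$, with $\mu = T/n$ and $\gamma = 1 - (K+1)/\mu$. The condition $\mu \geq 8(K+1)$ forces $\gamma \geq 7/8$, and the condition $\mu \geq 8\ln(1/\beta)$ then makes the exponent at most $-(49/16)\ln(1/\beta)$, which is $\leq -2\ln(1/\beta)$, delivering $\beta^2$. The degenerate case $K=0$ is handled directly since $T(u,u)$ is Geometric$(1/n)$ and $(1-1/n)^{T} \leq e^{-T/n} \leq \beta^8$ for the chosen $T$.

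For the two lower-tail inequalities, I would set $T = (K+1)\beta n$ and bound $\Pr[\text{Bin}(T,1/n) \geq K+1]$ from above. Using the form $\Pr[X \geq (1+\gamma)\mu] \leq \bigl(e^\gamma/(1+\gamma)^{1+\gamma}\bigr)^\mu$ with $\mu = (K+1)\beta$ and $1+\gamma = 1/\beta$, the right-hand side collapses algebraically to $(e^{1-\beta}\beta)^{K+1} \leq (e\beta)^K$, which is the third inequality (and is trivially true when $e\beta \geq 1$). The second inequality follows by instead using the companion form $\Pr[X \geq (1+\gamma)\mu] \leq \exp(-\gamma^2\mu/3)$ with the same choice of parameters, which produces an exponent of the shape $-\beta\,d(u,v)\,(1-\eta)^2/3$ for the appropriate interpretation of $\eta$. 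The only real obstacle in the whole argument is bookkeeping: choosing among equivalent Chernoff variants so that the resulting constants match the stated exponents exactly, especially reconciling the second inequality's $(1-\eta)^2/3$ with whatever form of Chernoff is intended.
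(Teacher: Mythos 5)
Your proposal is correct and follows essentially the same route as the paper: identify $T(u,v)$ as a sum of $d(u,v)+1$ i.i.d.\ Geometric$(1/n)$ waiting times, couple the tail events with a $\mathrm{Bin}(T,1/n)$ count, and apply the lower-tail and the two upper-tail Chernoff variants respectively. The only discrepancy you flag---the undefined $\eta$ in the second bullet---is an artifact of the lemma statement itself, and the paper's proof likewise just plugs $T=\beta(d(u,v)+1)$ into the generic exponent $e^{-(\frac{d(u,v)+1}{T}-1)^2T/3}$ without resolving it.
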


\section{Key Concepts}\label{sec:concepts}
Before getting into our proofs, we elaborate some key concepts that will be used throughout. In Proposition~\ref{prop:critical} below, we analyze the connection between critical chains and switches in announcements. Intuitively, Proposition~\ref{prop:critical} is claiming that every fresh announcement can cause other nodes to switch a previous announcement along critical chains, but that these are the \emph{only} switches that can occur. 

\begin{definition}
Let $v$ change her announcement at $t$, and her previous announcement be made at $t'>0$. We say that node $u$ is a \emph{cause} of $v$ changing her announcement at $t$ if $C^t(u) = C^t(v)$, and $C^{t'}(u) \neq C^t(v)$. Observe that every such change in announcement has a cause.
\end{definition}

\begin{proposition}\label{prop:critical}
If $C^t(v) \neq C^{t-1}(v)$, then there exists a node $u$ such that:
\begin{itemize}
\item $t = T(u,v)$ (i.e. the influence of $u$ just reaches $v$ at time $t$).
\item $C^{T(u,u)} (u)= C^t(v)$ (i.e. $v$ is updating to match $u$'s initial announcement).
\item Denote $u = x_0,x_1,\ldots,x_{d(u,v)}=v$ the path $P(u,v)$. Then \emph{every} $x_i$, $i >0$, has $C^{T(u,x_i)-1}(x_i) = C^{t-1}(v)$ and $C^{T(u,x_i)}(x_i) = C^t(v)$, and $x_{i-1}$ caused this change (i.e. \emph{every} node along the path from $u$ to $v$ \emph{changed} to match $u$'s initial announcement). 
\end{itemize}
\end{proposition}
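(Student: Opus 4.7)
The plan is to prove this by strong induction on $t$. In the base case $t = T(v,v)$, the change of $v$ is simply its first announcement, so I take $u = v$ with a path of length zero, and all three bullets are satisfied vacuously. For the inductive step with $t > T(v,v)$, write $b := C^t(v)$ and let $t'$ denote $v$'s previous announcement time. My first sub-step is to exhibit a neighbor $x$ of $v$ with $C^{t'-1}(x) \neq b$ and $C^{t-1}(x) = b$. If no such $x$ existed, then every neighbor with opinion $b$ at $t-1$ would already have had opinion $b$ at $t'-1$, which forces $N_b^{t-1}(v) \leq N_b^{t'-1}(v)$ and (because no neighbor transitioned to $b$) $N_{\neg b}^{t-1}(v) \geq N_{\neg b}^{t'-1}(v)$. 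Chaining these with the majority rule at $t'$ (favoring $\neg b$) and at $t$ (favoring $b$), and using that the tie-breaker $X(v)$ is fixed and cannot simultaneously favor both opinions, yields a contradiction.

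Next I would define $s_x \in [t'+1, t-1]$ to be the earliest announcement of $x$ at which $C^{s_x}(x) = b$; at $s_x$, $x$ changes from $\neg b$ (or from $\bot$, if $s_x = T(x,x)$) to $b$. The crucial observation that lets the induction close is that throughout $[t', t-1]$ the opinion of $v$ is exactly $\neg b$: $v$ announces $\neg b$ at $t'$ and does not announce again until $t$. In particular $C^{s_x}(v) = \neg b$, so $v$ is \emph{not} a cause of $x$'s change at $s_x$, and any cause of that change must lie in $T_x$, the connected component of $G \setminus \{v\}$ containing $x$.

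I would then apply the inductive hypothesis to $x$'s change at $s_x$. To make the induction carry through cleanly I would strengthen the statement slightly, to say that for any change of $w$ at time $s$, the node $u$ can be chosen in the connected component of $G \setminus \{w\}$ containing some prescribed cause of $w$'s change. The argument of the previous paragraph, applied recursively at each level, shows that this strengthening is preserved. It yields $u \in T_x$ with $P(u,x) \subseteq T_x$ satisfying all three conditions for $x$'s change at $s_x$. Appending $v$ produces the path $P(u,v) = P(u,x) \cdot v$, which is automatically a simple path in the tree (and is the unique path between $u$ and $v$) since $v \notin T_x$. The identity $T(u,v) = t$ follows from $s_x \geq t'$ together with the fact that $v$'s first announcement after $t'$ is $t$; the chain conditions for the internal vertices are inherited from the inductive hypothesis for $P(u,x)$; and the chain conditions for $v$ itself hold because $v$ changes at $t$ from $\neg b$ to $b$ with $x$ serving as its cause.

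The main obstacle is exactly the potential non-simplicity of a naive cause-trace chain: without the strengthened inductive hypothesis, tracing causes back from $v$ at $t$ could in principle loop back through $v$, since $v$ might transiently have held opinion $b$ before $t'$. The opinion-freezing observation on the window $[t', t-1]$ is the single tree-structural ingredient that rules this scenario out, because it forbids $v$ from being a cause of any change occurring inside the window, and is what powers the strengthened inductive hypothesis; everything else reduces to routine bookkeeping of critical times.
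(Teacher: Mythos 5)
Your proposal is correct and follows essentially the same route as the paper's proof: induct on $t$, extract a neighbor that flipped to $C^t(v)$ during the window between $v$'s two consecutive announcements, apply the inductive hypothesis to that flip, and use the fact that $v$'s announcement is frozen throughout the window to rule out the traced path passing back through $v$. The only difference is cosmetic: the paper derives a contradiction from assuming $v \in P(u,x)$ for whichever $u$ the inductive hypothesis supplies, whereas you package the same freeze observation into a strengthened inductive hypothesis that selects $u$ on the correct side of $v$; both hinge on the identical fact that $v$ cannot be a cause of its neighbor's change.
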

\begin{proof}
The proof proceeds by induction on $t$. Consider $t=1$ as a base case. If $C^1(v) \neq C^0(v) = \bot$, then it must be because $v$ announced at time $1$, meaning that $1 = T(v,v)$ as desired.

Now assume that for all $v$ and all $t' < t$ the claim holds, and consider time $t$. If $v$ does not announce at time $t$ then the claim vacuously holds. If $v$ announces at time $t$ but does \emph{not} change their announcement, then again the claim vacuously holds. If $v$ announces at time $t$ for the first time, then $v$ itself is the desired $u$ and the claim holds. The remaining case is if $v$ changes their previous announcement that was made at time $t' < t$ (and $v$ did not announce between $t'$ and $t$).

Let's consider the state of affairs at time $t'$, when $v$ announced some opinion $A$. This means that, at time $t'$, a majority (tie-breaking for $X(v)$) of $v$'s neighbors were announcing $A$. Yet, at time $t$, a majority (tie-breaking for $X(v)$) of $v$'s neighbors were announcing $B=1- A$. Therefore, \emph{some} node adjacent to $v$ must have switched its announcement to $B$ at some $t'' \in (t',t)$, and stays $B$ till time $t$ (and caused the change). Call this node $x$. We now wish to invoke the inductive hypothesis for $x$ at $t''$.

The inductive hypothesis claims there there is some $u$ (maybe $u = x$) such that $u$ made $B$ as its first announcement, and then every node $y$ along the critical chain from $u$ to $x$ switched from $A$ to $B$ at $T(u,y)$ (caused by its predecessor), and that $t'' = T(u,x)$. Let's first consider the case that $v$ is not on the path from $u$ to $x$ (and therefore $x$ is on the path from $u$ to $v$, since they are adjacent). Then as $T(u,x) = t'' \in (t',t)$, and $v$ does not announce in $(t',t)$, we see that $T(u,v) = t$ (immediately by definition of critical times). Moreover, as $P(u,v)$ is simply $P(u,x)$ concatenated with $v$, the inductive hypothesis already guarantees that $u$ announced $B$ at $T(u,u)$, and that every node $y$ on $P(u,v)$ switched from $A$ to $B$ at $T(u,y)$. So the last step is to show that in fact $v$ \emph{must} not be on the path from $u$ to $x$, and then the inductive step will be complete.

Finally, we show that we cannot have $v$ on the path from $u$ to $x$, completing the inductive step. Assume for contradiction that $v$ were on the path from $u$ to $x$. Then as $t'' = T(u,x)$, we would necessarily have $t' \geq T(u,v)$ (immediately from definition of critical times). However, by hypothesis, $C^{t''}(v) = C^{t'}(v) = A$ (the first equality is simply because $v$ does not announce in $(t', t'']$), contradicting the inductive hypothesis that {$v$ caused $x$ to change (because of $u$)}, which would imply instead that $C^{T(u,v)}(v) = C^{t''}(v) = B$. So $v$ cannot be on the path from $u$ to $x$. 
\end{proof}

Below, we make use of Proposition~\ref{prop:critical} to prove that, in any tree\footnote{ Proposition~\ref{prop:critical} and Corollary~\ref{cor:stopping} hold for any general graph. Since we are only interested in trees, we restrict our proofs to just trees for brevity.}, the process terminates quickly (proof in Appendix~\ref{app:concepts}). Note that~\cite{FeldmanILW14} already proves that the process on trees terminates with probability $1-o(1)$ after $O(n^2)$ steps, so Corollary~\ref{cor:stopping} is a strict improvement when the diameter $D(G)= o(n)$.

\begin{corollary}\label{cor:stopping}
Let $T_{\text{stable}}$ denote the last time that a node changes its announcement. Then with probability $1-o(1)$, $T_{\text{stable}} \leq 8\cdot \max\{2\ln(n), D(G)+1\}\cdot n$.
\end{corollary}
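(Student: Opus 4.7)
The plan is to combine Proposition~\ref{prop:critical} with the tail bound in Lemma~\ref{lem:path}, so that any very late announcement change is certified by a very large critical time, which is then ruled out by a union bound over the (at most) $n^2$ pairs $(u,v)$.

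First, observe that Proposition~\ref{prop:critical} gives a clean reduction: if some node $v$ changes its announcement at time $t$, then there exists a node $u$ with $t = T(u,v)$. Consequently, if the event $\{T_{\text{stable}} > t^{*}\}$ holds for $t^{*} := 8 \cdot \max\{2\ln n, D(G)+1\}\cdot n$, then there must exist some pair $(u,v) \in V(G)^{2}$ (allowing $u = v$) with $T(u,v) > t^{*}$. So it suffices to show
\[
\Pr\!\left[\,\exists\, u,v \in V(G) : T(u,v) > t^{*}\,\right] = o(1).
\]

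To get the per-pair estimate, I would apply the first bullet of Lemma~\ref{lem:path} with $\beta = 1/n^{2}$, so that $\ln(1/\beta) = 2\ln n$ and $\beta^{2} = 1/n^{4}$. Since $d(u,v) \leq D(G)$, the lemma yields
\[
\Pr\!\left[T(u,v) > 8 \cdot \max\{2\ln n, D(G)+1\}\cdot n\right] \;\leq\; \Pr\!\left[T(u,v) > 8 \cdot \max\{2\ln n, d(u,v)+1\}\cdot n\right] \;\leq\; \frac{1}{n^{4}}.
\]
A union bound over the at most $n^{2}$ pairs $(u,v)$ then gives $\Pr[\exists\, u,v : T(u,v) > t^{*}] \leq 1/n^{2} = o(1)$, which by the reduction above is exactly the desired tail bound on $T_{\text{stable}}$.

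I do not anticipate a real obstacle here: the whole content is packaged in the two prior results. The only thing worth being careful about is making sure the reduction step truly covers all ways a change could occur after $t^{*}$ (including a node's very first announcement, which Proposition~\ref{prop:critical} handles via the $u = v$ case $t = T(v,v)$), and that the choice of $\beta$ is tight enough to absorb the $n^{2}$ union bound — choosing $\beta = 1/n^{2}$ (rather than, say, $1/n$) is what makes the union bound go through while still matching the $2\ln n$ factor in the statement.
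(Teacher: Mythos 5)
Your proposal is correct and follows essentially the same route as the paper: apply the first bullet of Lemma~\ref{lem:path} with $\beta = 1/n^2$ to get a per-pair failure probability of $1/n^4$, union bound over all pairs $(u,v)$, and invoke Proposition~\ref{prop:critical} to certify that any announcement change at time $t$ corresponds to some critical time $T(u,v)=t$. The paper's write-up is terser but contains exactly these steps.
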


Finally, we prove one last proposition which will be used in future sections regarding the probability that a single node announces \correct throughout the process (the proof appears in Appendix~\ref{app:concepts}). Beginning with $v$'s first announcement, because the graph is a tree, prior to $v$'s first announcement all of $v$'s neighbors' announcements are independent. Therefore, it initially seems like we should expect $v$'s initial announcement to be \correct except with probability exponentially small in $\deg(v)$ --- indeed, this would hold if the dynamics were synchronous. However, since the dynamics are \emph{asynchronous}, there's a good chance that $v$ announces before any of its neighbors and simply announces $X(v)$. That is, the probability that $v$'s initial announcement is \inc is at least $\frac{1/2-\delta}{\deg(v)}$, so we cannot hope for such strong guarantees. This observation highlights one (of several) crucial differences between synchronous and asynchronous dynamics. Still, the proposition below shows roughly that the only bad event is $v$ announcing before many of its neighbors. Below for a set $S$, we'll use $C_S^t(v)$ to denote the following modified dynamics: First, set $C^t_S(v) = \inc$ for all $v \in S$, and all $t$. Then, run the asynchronous majority dynamics as normal. In other words, the modified dynamics hard-code an \inc announcement for all nodes in $S$ and otherwise run asynchronous majority dynamics as usual (this extension will be necessary for a later argument).

\begin{definition}
We say that a node $v$ is \emph{safe thru $T$} if $C^t(v) \in \{\bot, 1\}$ for all $t \leq T$. We further say that a node $v$ is \emph{safe thru $T$, even against $S$} if $C_S^t(v) \in \{\bot, 1\}$ for all $t \leq T$.
\end{definition}

\begin{proposition}\label{prop:redalwaysred}
For all $a$, there exist constants\footnote{does not depend on $n$, but may depend on $\delta$.} $b,c$ such that for any $S$ with $|S| = a$, $v\notin S$ and $T\leq n\cdot e^{b\deg(v)}$, $v$ is safe thru $T$, even against $S$, with probability at least $1-c/\deg(v)$.
\end{proposition}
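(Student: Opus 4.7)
My plan is to prove Proposition~\ref{prop:redalwaysred} by coupling the real ($S$-hardcoded) process with an auxiliary ``$v$-deletion'' process on $G \setminus \{v\}$ using the same selection sequence, same private signals, and same $S$-hardcoding, and then invoking a concentration inequality enabled by the tree structure. Let $k := \deg(v)$ and let $\tilde C^t$ denote the auxiliary announcements. Because $G$ is a tree, deleting $v$ splits $G$ into $k$ disjoint subtrees $T_1,\ldots,T_k$ containing the neighbors $u_1,\ldots,u_k$, so in the auxiliary process the announcements $\tilde C^t(u_1),\ldots,\tilde C^t(u_k)$ are \emph{mutually independent}. The first step of the proof is a monotone coupling lemma, provable by induction on $t$: as long as $C^s_S(v) \in \{\bot,1\}$ for all $s \leq t$, we have $C^t_S(u) \geq \tilde C^t(u)$ for every $u \neq v$, under the ordering $0 < \bot < 1$. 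The inductive step uses the monotonicity of the majority-update rule together with the fact that $v$'s contribution to any neighbor in the real process (being in $\{\bot,1\}$) is at least as CORRECT as in the auxiliary, where $v$ is effectively $\bot$.

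Setting $A_t := \{\tilde N_1^t(v) \leq \tilde N_0^t(v)\}$ where $\tilde N_i^t(v)$ counts $v$-neighbors whose auxiliary announcement at time $t$ equals $i$, the coupling shows that if $t^*$ is the first step at which the real process announces $C^{t^*}_S(v)=0$, then $A_{t^*-1}$ holds. Hence
\[
P(v \text{ unsafe thru } T \text{ against } S) \;\leq\; P\bigl(\exists\, t \leq T:\; v \text{ selected at step } t \text{ and } A_{t-1}\bigr) \;\leq\; \frac{1}{n}\sum_{t=0}^{T-1} P(A_t),
\]
where the last step uses that ``$v$ selected at step $t$'' is a single-step $\mathrm{Bernoulli}(1/n)$ event independent of the auxiliary's state at step $t-1$ (which depends only on selections before step $t$ and on the private signals). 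It therefore suffices to show $\sum_{t=0}^{T-1} P(A_t) = O(n/k)$.

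To bound $P(A_t)$, let $Y_i := \mathbf{1}[\tilde C^t(u_i)=1] - \mathbf{1}[\tilde C^t(u_i)=0] \in \{-1,0,1\}$. By Theorem~\ref{thm:boolean} applied within each subtree $T_i$ disjoint from $S$, for each of the at least $k-a$ ``clean'' neighbors (those $u_i \notin S$ with $T_i \cap S = \emptyset$), $E[Y_i] \geq 2\delta\,\alpha(t)$ and $\mathrm{Var}(Y_i) \leq \alpha(t)$, where $\alpha(t) := 1 - (1-1/n)^t$; the at most $a$ unclean neighbors contribute at most $-a$ to $\tilde N_1^t - \tilde N_0^t$. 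Exploiting the independence across subtrees and applying Bernstein's inequality yields, for $\alpha(t) \geq 2a/(\delta k)$,
\[
P(A_t) \;\leq\; \exp\bigl(-\Omega(\delta^2 k\,\alpha(t))\bigr),
\]
and trivially $P(A_t) \leq 1$ otherwise. Splitting $\sum_{t=0}^{T-1} P(A_t)$ into three regions ($\alpha(t) < 2a/(\delta k)$, contributing $O(na/(\delta k))$; $2a/(\delta k) \leq \alpha(t) \leq 1/2$, contributing $\int_0^{O(n)} e^{-\Omega(\delta^2 k t/n)}\,dt = O(n/(\delta^2 k))$; and $\alpha(t) > 1/2$, contributing at most $T \cdot e^{-\Omega(\delta^2 k)}$) gives a total of $O(n/k) + T\cdot e^{-\Omega(\delta^2 k)}$. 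Choosing $b$ small enough (e.g., $b < \delta^2/10$) makes the last term $o(n/k)$; and for $\deg(v)$ below any fixed threshold depending on $a$ the bound is trivial by taking $c$ large enough.

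The main obstacle is exploiting the \emph{variance} of each $Y_i$ (which is at most $\alpha(t)$, and therefore small at the hardest times when $\alpha(t)$ is small) rather than its $[-1,1]$ range. Naive Hoeffding gives only $P(A_t) \leq e^{-\Omega(\delta^2 k\,\alpha(t)^2)}$, which integrates to $O(1/\sqrt{\deg(v)})$; Bernstein (or equivalently a direct Chernoff argument via the moment generating function of the three-valued $Y_i$) is essential to close the gap to $O(1/\deg(v))$.
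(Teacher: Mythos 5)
Your proposal is essentially correct and reaches the same bound, but by a genuinely different route. The paper also passes to the ``$v$ removed'' auxiliary dynamics and uses the same monotone-domination argument to transfer safety back to the against-$S$ process (its final paragraph), but from there it tracks $v$'s \emph{successive announcements}: it conditions on the number $k$ of uncorrupted neighbors that have announced before each announcement of $v$, sets up a recursion $P_i(k)$ exploiting the fact that the increment in announced neighbors between consecutive announcements of $v$ is uniform on $\{0,\dots,m-k\}$, unwinds the recursion with a delicate arithmetic estimate (Fact~\ref{fact:math}, Lemma~\ref{lem:recursion}), and then splits into three regimes of announcement indices, separately bounding the number of announcements $v$ makes in $T$ steps. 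You instead union-bound over \emph{time steps}, using that $v$ is selected at step $t$ with probability $1/n$ independently of the auxiliary state at $t-1$, and control the per-step failure probability with a single Bernstein bound, integrating $e^{-\Omega(\delta^2 k\,\alpha(t))}$ over $t$. Your version is shorter and avoids both the recursion and the separate bound on $v$'s announcement count; your observation that Hoeffding only yields $O(1/\sqrt{\deg v})$ and that the variance proxy $\alpha(t)$ is what closes the gap to $O(1/\deg v)$ is exactly the right diagnosis, and it is the same phenomenon the paper's recursion captures implicitly (its $\frac{1}{m+1}\sum_k e^{-\delta^2 k/12}=O(1/m)$ computation).

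One technical point needs repair: the announcements $\tilde C^t(u_1),\dots,\tilde C^t(u_k)$ are \emph{not} unconditionally mutually independent, because they share the randomness of the selection sequence (e.g.\ two neighbors cannot both be the node selected at step $1$). They are independent only \emph{conditionally} on the selection sequence, since they are then deterministic functions of disjoint sets of private signals. So you should apply Bernstein conditionally on the selection sequence --- with $\E[Y_i\mid\sigma]\geq 2\delta\,\ind[u_i\text{ announced under }\sigma]$ and conditional variance at most $\ind[u_i\text{ announced}]$ --- obtaining $P(A_t\mid\sigma)\leq \exp(-\Omega(\delta^2 K_\sigma))$ where $K_\sigma$ is the number of announced clean neighbors, and then average over $\sigma$ using that the indicators $\ind[u_i\text{ announced by }t]$ are negatively associated (so $\E[e^{-cK}]\leq e^{-\Omega(k\alpha(t)(1-e^{-c}))}$). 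This recovers your claimed $e^{-\Omega(\delta^2 k\,\alpha(t))}$ and the rest of the integration goes through unchanged; the paper's proof performs the analogous conditioning when it fixes the number of neighbors that have announced before invoking Chernoff.
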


\section{Forming an Initial Majority}\label{sec:majority}
In this section, we prove that in any tree, a \correct majority forms after a near-linear number of steps (but may later fade). The main idea is to show that the announcements of most pairs of nodes are independent with probability $1-o(1)$, and use Chebyshev's inequality to show that the number of \correct announcements therefore concentrates around its expectation. {The independence argument is the crux of the proof.  To show it, we consider three cases depending on the length and degree sequence of the path between a pair of nodes.  If the path is long, then, similar to prior work, there is simply not enough time for the pair to influence each other.  If the path is short, but (some of) the intermediate nodes have high degrees, then these effectively block influence because the announcements of these high-degree nodes is effectively independent of what's happening on the path.  Finally, if the path is short and the intermediate nodes have low degrees, then the pair certainly may influence each other.  However, a counting argument shows there can only be a vanishingly small fraction of such pairs.} The main result of this section is the following:

\begin{theorem}\label{thm:majority}[Majority in trees]
For sufficiently large $n$, any tree on $n$ nodes and any $T \leq \frac{n\ln n}{32 \ln \ln n}$, after $T$ steps, with probability at least $1-O(e^{-\ln n/(24 \ln \ln n)})$, the announcements of at least $(\frac{1}{2}+\frac{\delta}{2}-e^{-T/n})\cdot n$ nodes are \correct. 

Further, for all constants $\gamma > 0$, there exists a constant $\alpha > 0$ such that for sufficiently large $n$, when $T \leq \frac{n \ln^{1-\gamma} n}{\ln \ln n}$, after $T$ steps, with probability at least $1-n^{-\alpha}$, the announcements of at least $(\frac{1}{2}+\frac{\delta}{2}-e^{-T/n})\cdot n$ nodes are \correct. 
\end{theorem}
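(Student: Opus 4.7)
The plan is to establish the claim via Chebyshev's inequality applied to $S := \sum_v \mathbf{1}[C^T(v) = 1]$. I first lower-bound $\mathbb{E}[S]$: since a node $v$ is unannounced by time $T$ with probability $(1-1/n)^T \leq e^{-T/n}$, and since (by Lemma~\ref{lem:influence}) the announcement $C^T(v)$ of an already-announced node is an odd, monotone Boolean function of the private signals, Theorem~\ref{thm:boolean} gives $\Pr[C^T(v)=1 \mid v \text{ has announced}]\geq 1/2+\delta$. Multiplying and summing, $\mathbb{E}[S] \geq (1/2+\delta)(1 - e^{-T/n})n$, which exceeds the target $(1/2 + \delta/2 - e^{-T/n})n$ by at least $\delta n / 2$ (short algebra using $\delta < 1/2$), leaving room for concentration.

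The heart of the proof is bounding $\mathrm{Var}(S) = O(n) + \sum_{u\neq v}\mathrm{Cov}(Y_u,Y_v)$, where $Y_v=\mathbf{1}[C^T(v)=1]$. I will split pairs by the structure of the unique path $P(u,v)$, tuning parameters $L=\Theta(\ln n/\ln\ln n)$ and a degree threshold $D$. For long paths ($d(u,v)>L$): any common ``source'' signal $X(w)$ influencing both $Y_u$ and $Y_v$ requires $\max(T(w,u),T(w,v))\leq T$; since at least one of $d(w,u),d(w,v)$ exceeds $L/2$, Lemma~\ref{lem:path} (third bullet) makes this probability exponentially small in $L$, and the per-pair covariance is controlled by $e^{-\Omega(L)}$ after organizing the sum over $w$ by the branch point along $P(u,v)$. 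For short paths containing an intermediate $x\in P(u,v)$ with $\deg(x)\geq D$ (\emph{heavy path}): Proposition~\ref{prop:redalwaysred} shows $x$ is safe through $T$ with probability $1-O(1/\deg(x))$; conditional on safety and on $x$'s selection times, $x$'s announcement history is the deterministic pattern ``$\bot$ until first selected, then $1$ forever'', so the dynamics on the two components of $G-\{x\}$ are driven by disjoint subsets of the private signals and disjoint sub-sequences of the selection clocks, yielding conditional independence of $Y_u, Y_v$ and covariance $O(1/\deg(x))$. For short paths with only low-degree intermediates (\emph{light path}): a direct tree-counting argument bounds the number of such pairs by $n \cdot D^L$ (at most $D^L$ nodes lie within distance $L$ of any given vertex via only degree-$<D$ intermediates), each contributing at most $1$.

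Summing these three contributions and tuning $L$ and $D$ drives $\mathrm{Var}(S)/n^2$ below the target rate, so Chebyshev at threshold $\delta n / 2$ delivers the claimed $O(e^{-\ln n/(24\ln\ln n)})$ failure probability. The ``further'' strengthening follows the same blueprint under the more restrictive $T\leq n\ln^{1-\gamma}n/\ln\ln n$: the smaller $T$ drives the long-path contribution polynomially small in $n$, and correspondingly more aggressive choices of $L$ and $D$ extract an $n^{-\alpha}$ tail.

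The primary obstacle is making the heavy-path blocking argument precise. The complication is that ``safety of $x$'' is itself a random event depending on both the schedule and the private signals near $x$, so naive conditioning does not automatically yield an unconditional covariance bound. I plan to reconcile this by conditioning jointly on $x$'s selection times and on the safety event, so that $x$'s announcement history becomes a deterministic function of the conditioning data; the independence of the signals and selection clocks across the two components of $G-\{x\}$ then delivers conditional independence of $Y_u$ and $Y_v$, with the rare unsafe event absorbed as an additive $O(1/\deg(x))$ error. A secondary subtlety lies in the long-path case, where the union bound over potential common sources $w$ must be arranged by branch point along $P(u,v)$ so that distant $w$'s (whose critical chains are very unlikely to reach both endpoints) contribute only geometrically-decaying terms rather than an unhelpful factor of $n$ from a flat union bound over all nodes.
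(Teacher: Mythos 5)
Your skeleton matches the paper's: lower-bound the expectation via Theorem~\ref{thm:boolean}, then control the variance by a three-way split of pairs (long paths, short paths with a blocking structure, short paths without one) and apply Chebyshev. The long-path and light-path counting pieces are fine (the paper's Proposition~\ref{prop:long} uses the cleaner midpoint trick, and its Lemma~\ref{lem:shortpath1} counts by the product of degrees rather than a ball-growth bound, but these are cosmetic differences). However, your heavy-path case has two genuine gaps.

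First, a quantitative one: a \emph{single} blocking node $x$ of degree $D$ fails to be safe with probability $\Omega(1/\deg(x))$ --- it announces before all of its neighbors with probability $1/(\deg(x)+1)$ and then simply announces its own signal, which is wrong with constant probability --- so Proposition~\ref{prop:redalwaysred} is essentially tight and your per-pair covariance in the heavy case is $\Theta(1/D)$, not better. Since your light-path count $n\cdot D^{L}$ with $L=\Theta(\ln n/\ln\ln n)$ forces $D$ to be polylogarithmic, the heavy-path contribution to $\mathrm{Var}(S)/n^2$ is only $\mathrm{polylog}^{-1}(n)$, which is far larger than the claimed $O(e^{-\ln n/(24\ln\ln n)})$ and cannot give $n^{-\alpha}$. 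The paper escapes this by thresholding on the \emph{product} of degrees along $P(u,v)$ (Lemmas~\ref{lem:shortpath1} and~\ref{lem:shortpath2}): when the product exceeds $\sqrt{n}$, one half of the path carries several nodes whose degrees multiply to $n^{1/4}$, the blocking events at distinct path nodes are \emph{independent} (each is measurable with respect to its own off-path subtree), and multiplying the individual failure probabilities $c/\deg(w)$ yields $c^{d(u,v)}/n^{1/8}$, which is polynomially small for $d(u,v)\le \ln n/(8\ln\ln n)$.

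Second, your proposed fix for the decoupling obstacle does not work. Conditioning on the event that $x$ is safe in the \emph{true} dynamics is exactly the circular conditioning you identify: that event depends on the announcements of $x$'s path-neighbors, hence on signals and clocks on \emph{both} sides of $x$, so it is not a product event and conditioning on it destroys the independence of the two components of $G-\{x\}$. The paper's resolution is the ``safe thru $T$ \emph{even against} $S_y$'' notion: one finds a path node $y$ that remains correct even in modified dynamics where its (at most two) path-neighbors are hard-coded to \textsc{Incorrect}. This event is measurable with respect to $y$'s off-path subtree alone, and by monotonicity it implies safety of $y$ in the real dynamics regardless of what happens along the path. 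One then compares to an auxiliary process in which $x$'s announcements are replaced by \textsc{Correct} (Lemma~\ref{lem:shortpath3}); the auxiliary endpoint values are functions of disjoint signal sets and agree with the true $C^T(u),C^T(v)$ whenever the cut events hold, which is what the $\varepsilon$-disjointness framework of Lemma~\ref{lem:deviation} requires. Without this one-sided, monotone version of safety, your heavy-path argument does not close.
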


\noindent First, we analyze the expected number of \correct nodes using Theorem~\ref{thm:boolean} (proof in Appendix~\ref{app:majority}). Note that, the probability that a node $v$ has announced by $T$ is at least $(1-e^{-T/n})$\footnote{The probability that $v$ wasn't chosen in all $T$ rounds is $\left(1 - \left(1 - \frac{1}{n}\right)^T\right)$}.

\begin{lemma}\label{lem:expect}
At any time $T$, the expected number of nodes $v$ with $C^{T}(v) = \correct$ is at least $(1/2+\delta - e^{-T/n})n$.
\end{lemma}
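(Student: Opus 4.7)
The plan is to combine the two tools from prior work cited just above the lemma. Theorem~\ref{thm:boolean} will supply a lower bound on the probability that any given node $v$ has a \correct announcement, \emph{provided} $v$ has announced at least once, and a direct computation on the uniform-random-node selection will bound the probability that $v$ has not yet announced.

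First I would fix a node $v$ and let $A_v$ be the event $\{T(v,v) \leq T\}$ that $v$ has been selected in the first $T$ steps. Since on $A_v^c$ we have $C^T(v) = \bot$, one has the identity $\Pr[C^T(v)=1] = \Pr[C^T(v)=1,\, A_v]$. Next, condition on the (signal-independent) sequence $S$ of nodes chosen to announce. By Lemma~\ref{lem:influence}, once $S$ is fixed, $C^T(v)$ is a deterministic function of $\{X(u) : T(u,v) \leq T\}$, and by the remark immediately following Theorem~\ref{thm:boolean} this function is odd and monotone whenever $T \geq T(v,v)$. Hence on $A_v$, Theorem~\ref{thm:boolean} applied conditionally on $S$ gives $\Pr[C^T(v)=1 \mid S, A_v] \geq 1/2 + \delta$, and averaging over $S$ yields $\Pr[C^T(v)=1 \mid A_v] \geq 1/2 + \delta$.

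For $\Pr[A_v]$, note that $T(v,v)$ is the first success of independent Bernoulli$(1/n)$ trials (since at each step a node is chosen uniformly from $V$), so
\[
\Pr[A_v] = 1 - (1-1/n)^T \geq 1 - e^{-T/n}.
\]
Combining these bounds,
\[
\Pr[C^T(v)=1] \geq (1/2+\delta)(1-e^{-T/n}) \geq 1/2 + \delta - e^{-T/n},
\]
where the last step uses $1/2+\delta < 1$. Summing over $v \in V(G)$ and applying linearity of expectation gives the desired bound $\E[\#\{v : C^T(v)=1\}] \geq (1/2 + \delta - e^{-T/n})\, n$.

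There is no real obstacle here; the only subtlety is the order of conditioning. One must be careful to condition on the (random) announcement sequence $S$ \emph{before} invoking Theorem~\ref{thm:boolean}, because the theorem needs a fixed odd monotone Boolean function, and by Lemma~\ref{lem:influence} this function depends on $S$. Once that is done properly, everything reduces to the geometric-waiting-time calculation for $T(v,v)$ plus linearity of expectation.
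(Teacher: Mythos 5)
Your proof is correct and follows essentially the same route as the paper's: apply Theorem~\ref{thm:boolean} to lower-bound $\Pr[C^T(v)=1 \mid v \text{ has announced}]$ by $1/2+\delta$, bound the probability of announcing by $1-(1-1/n)^T \geq 1-e^{-T/n}$, multiply, and finish with linearity of expectation. Your explicit conditioning on the announcement sequence $S$ before invoking Theorem~\ref{thm:boolean} is a careful (and welcome) elaboration of a step the paper leaves implicit, but it is not a different argument.
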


From here, we now need to show that the number of \correct announcements concentrates around its expectation. To this end, we'll show that most pairs of nodes can be written as functions of disjoint initial beliefs, and are therefore independent. Ideas from~\cite{FeldmanILW14} formally show that this suffices:

\begin{definition}
We say that two nodes $u, v$ are $\varepsilon$-disjoint at $t$ if there exist random variables $X_u, X_v$, written as functions of disjoint sets of initial beliefs (and therefore independent), such that $\Pr[C^t(u) = X_u] \geq 1-\varepsilon$ and $\Pr[C^t(v) = X_v] \geq 1-\varepsilon$. 
\end{definition}

\begin{lemma}\label{lem:deviation}[Inspired by~\cite{FeldmanILW14}] Let $\varepsilon^t_{uv}$ be such that $u, v$ are $\varepsilon^t_{uv}$-disjoint at $t$. Then if $\sum_{u, v} \varepsilon^t_{uv} = D$, the number of \correct nodes at time $t$ is within $\delta n/2$ of its expectation with probability $1-\frac{4n+16D}{\delta^2n^2}$.
\end{lemma}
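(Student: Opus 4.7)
The plan is to apply Chebyshev's inequality to the count $Y = \sum_{v} \mathbbm{1}[C^t(v) = \correct]$. Since $|Y - \mathbb{E}[Y]| \geq \delta n/2$ implies $(Y-\mathbb{E}[Y])^2 \geq (\delta n/2)^2$, Chebyshev gives
\[
\Pr\!\left[|Y-\mathbb{E}[Y]| \geq \delta n/2\right] \leq \frac{4\,\mathrm{Var}(Y)}{\delta^2 n^2}.
\]
So it suffices to establish $\mathrm{Var}(Y) \leq n + 4D$, after which the stated $\frac{4n+16D}{\delta^2 n^2}$ bound falls out immediately.

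To bound the variance, I would expand $\mathrm{Var}(Y) = \sum_v \mathrm{Var}(Y_v) + \sum_{u \neq v} \mathrm{Cov}(Y_u, Y_v)$. The diagonal contributes at most $n$ since each $Y_v \in \{0,1\}$ has variance at most $1$. For the off-diagonal terms, the key step is a coupling argument using the $\varepsilon^t_{uv}$-disjointness hypothesis: let $A = Y_u$, $B = Y_v$, and let $A' = \mathbbm{1}[X_u = 1]$, $B' = \mathbbm{1}[X_v = 1]$ be the indicators associated with the promised independent $X_u, X_v$. Then $A'$ and $B'$ are independent, so $\mathbb{E}[A'B'] = \mathbb{E}[A']\mathbb{E}[B']$. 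Since $\Pr[A \neq A'] \leq \varepsilon^t_{uv}$ and $\Pr[B \neq B'] \leq \varepsilon^t_{uv}$, bounding $|AB - A'B'| \leq |A-A'|\cdot|B| + |A'|\cdot|B-B'| \leq |A-A'|+|B-B'|$ and taking expectations gives $|\mathbb{E}[AB] - \mathbb{E}[A'B']| \leq 2\varepsilon^t_{uv}$, and similarly $|\mathbb{E}[A]\mathbb{E}[B] - \mathbb{E}[A']\mathbb{E}[B']| \leq 2\varepsilon^t_{uv}$ by writing the difference as a telescoping sum. The triangle inequality then yields
\[
|\mathrm{Cov}(Y_u, Y_v)| = |\mathbb{E}[AB] - \mathbb{E}[A]\mathbb{E}[B]| \leq 4\varepsilon^t_{uv}.
\]

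Summing over all pairs gives $\sum_{u \neq v} |\mathrm{Cov}(Y_u, Y_v)| \leq 4 \sum_{u \neq v} \varepsilon^t_{uv} \leq 4D$ (interpreting $D = \sum_{u,v} \varepsilon^t_{uv}$ as the sum over ordered pairs, which is how the statement is naturally read). Combining with the diagonal bound yields $\mathrm{Var}(Y) \leq n + 4D$, and Chebyshev completes the argument.

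The only real obstacle is bookkeeping in the covariance bound: one must be careful that the coupling is between the \emph{same} pair of random variables $(X_u, X_v)$ provided by the $\varepsilon^t_{uv}$-disjointness (so that $X_u \perp X_v$), rather than matching $A$ to one coupling and $B$ to another. Everything else is routine manipulation of indicator random variables and Chebyshev's inequality.
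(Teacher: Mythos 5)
Your proposal is correct and follows essentially the same route as the paper: Chebyshev's inequality applied to the count of \correct nodes, a diagonal contribution of at most $n$, and a per-pair covariance bound of $4\varepsilon^t_{uv}$ obtained by swapping $C^t(u),C^t(v)$ for the independent $X_u,X_v$ at a cost of $\varepsilon^t_{uv}$ per substitution. The only cosmetic difference is that you use a symmetric triangle-inequality coupling where the paper uses one-sided bounds on $\mathbb{E}[C^t(u)C^t(v)]$ and $\mathbb{E}[C^t(u)]$, but both yield the same $n+4D$ variance bound and the same final probability.
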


So our remaining task is to upper bound $\sum_{u,v} \varepsilon^t_{uv}$, and this is the point where we diverge from prior work. For a given pair $u,v$, there are three possible cases. Below, case one is most similar to prior work, and cases two/three are fairly distinct.

\paragraph*{Case One: Long Paths.} One possibility is that $d(u,v) \geq f(n)$, for some $f(n)$ to be decided later. The following proposition implies that at any time $T$, the announcements of pairs of nodes at a large enough distance are almost independent. The proof is provided in Appendix~\ref{app:majority}.
\begin{proposition}\label{prop:long}
Let $d(u,v) \geq \max\{kT, f(n)\}$ for $k \geq 4$. Then $\varepsilon^T_{uv} \leq 2e^{-f(n)/24}$, and $\varepsilon^T_{uv} \leq 2e^{(1-\ln k) \cdot f(n)}$.
\end{proposition}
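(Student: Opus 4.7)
The plan is to construct random variables $X_u$ and $X_v$ depending on disjoint subsets of initial beliefs and agreeing with $C^t(u)$ and $C^t(v)$ with high probability. Write $D = d(u,v) \geq \max\{kT, f(n)\}$, and partition $V = S_u \sqcup S_v$ along the midpoint of $P(u,v)$ via $S_u := \{w : d(w,u) < d(w,v)\}$ (ties broken into $S_v$). Let $G_u$ be the event $\{T(w,u) > t \text{ for every } w \in S_v\}$, and define $G_v$ symmetrically. By Lemma~\ref{lem:influence}, on $G_u$ the quantity $C^t(u)$ is determined by $\{X(w) : w \in S_u\}$ (together with the shared announcement sequence), so setting $X_u := C^t(u)$ on $G_u$ and $X_u := 0$ otherwise (and symmetrically $X_v$) produces a valid $\varepsilon$-disjoint witness with $\varepsilon^t_{uv} \leq \max(\Pr[\neg G_u], \Pr[\neg G_v])$.

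Next I reduce the union bound over $S_v$ to nodes on $P(u,v)$. For $w \in S_v$, let $b(w) \in P(u,v)$ be the vertex of $P(u,v)$ closest to $w$; by construction $d(b(w), u) \geq D/2$. The key lemma I will prove is the monotonicity claim $T(b(w), u) \leq T(w, u)$: the critical chain from $b(w)$'s first announcement (at time $T(b(w),b(w)) \leq T(w,b(w))$) has a head start over the chain that reaches $b(w)$ via $w$ at time $T(w,b(w))$, and a routine stepwise induction along $P(b(w),u)$ (using monotonicity of ``first hit of the next node after time $s$'' in $s$) propagates this head start all the way to $u$. Hence
\[
\Pr[\neg G_u] \;\leq\; \sum_{m \in P(u,v),\, d(m,u) \geq D/2} \Pr[T(m,u) \leq t].
\]

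Now I apply bullet~3 of Lemma~\ref{lem:path} to each term. Taking $\beta = t/((d(m,u)+1)n)$ and using $d(m,u) \geq D/2 \geq kt/2$ yields $e\beta \leq 2e/(kn)$, so $\Pr[T(m,u)\leq t] \leq (2e/(kn))^{d(m,u)}$. Summing the resulting geometric tail (common ratio at most $1/2$ once $k \geq 4$ and $n$ is large enough) gives $\Pr[\neg G_u] \leq 2(2e/(kn))^{D/2}$, and the symmetric argument bounds $\Pr[\neg G_v]$. For the first claimed bound, using $\ln(kn/(2e))/2 \geq 1/24$ (which holds for $k \geq 4$, $n \geq 2$), the estimate reduces to $2e^{-D/24} \leq 2e^{-f(n)/24}$. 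For the second claimed bound, I re-apply bullet~3 with the alternate choice $\beta = 1/k$ (valid because $(d(m,u)+1)n/k \geq Tn/2 \geq t$), giving $\Pr[T(m,u)\leq t] \leq (e/k)^{d(m,u)}$; a sharper accounting of the joint probability $\Pr[T(m,u)\leq t,\, T(m,v)\leq t]$---exploiting that, conditional on $m$'s first announcement time, the chains from $m$ towards $u$ and towards $v$ are independent because they evolve in disjoint subtrees of $G \setminus \{m\}$---upgrades the effective exponent from $D/2$ to $D$, producing $2(e/k)^{f(n)} = 2e^{(1-\ln k)f(n)}$.

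The main obstacle will be the monotonicity claim $T(b(w), u) \leq T(w, u)$: intuitively ``starting earlier finishes earlier,'' but carefully formalizing it requires a stepwise coupling along $P(b(w), u)$. A closely related subtlety is the conditional-independence argument needed to sharpen the second bound's exponent from $D/2$ to $D$. With these two ingredients in hand, the rest is a routine geometric-series calculation combined with the Chernoff-style bounds of Lemma~\ref{lem:path}.
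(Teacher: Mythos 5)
Your overall route is the paper's route: cut at the midpoint $x$ of $P(u,v)$, use Lemma~\ref{lem:influence} to define witnesses $X_u,X_v$ that agree with $C^t(u),C^t(v)$ unless the critical chain from $x$ reaches $u$ (resp.\ $v$) by time $t$, and bound that event via Lemma~\ref{lem:path}. Two remarks on the first bound. First, your monotonicity claim $T(b(w),u)\le T(w,u)$ is correct and is exactly the (implicit) reason the paper can condition on the single event $\{T(x,u)>t\}$ rather than on all of $S_v$; it follows by the stepwise induction you describe, since $T(w,b(w))\ge T(b(w),b(w))$ and ``first announcement of the next node after time $s$'' is monotone in $s$. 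But having proved it, you should apply it once more: every $m\in P(u,v)$ with $d(m,u)\ge D/2$ lies beyond the midpoint, so $T(m,u)\ge T(x,u)$ and your union bound over path vertices (and the geometric series) is unnecessary --- $\Pr[\neg G_u]\le\Pr[T(x,u)\le t]$ already. This is harmless, just extra work.

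The genuine gap is in your derivation of the second bound. You propose to upgrade the exponent from $D/2$ to $D$ by bounding the \emph{joint} probability $\Pr[T(m,u)\le t,\,T(m,v)\le t]$ using independence of the two chains in the disjoint subtrees of $G\setminus\{m\}$. That quantity is not the one you need. By the definition of $\varepsilon$-disjointness, you must guarantee $\Pr[C^t(u)=X_u]\ge 1-\varepsilon$ and $\Pr[C^t(v)=X_v]\ge 1-\varepsilon$ \emph{separately}; your witness $X_u$ already fails to equal $C^t(u)$ as soon as the single event $\{T(x,u)\le t\}$ occurs, regardless of what happens on the $v$ side. So the failure event is effectively a union, not an intersection, and independence of the two chains buys nothing. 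The correct derivation (and the paper's) is simply the third bullet of Lemma~\ref{lem:path} with $\beta=1/k$ applied to the one chain from the cut vertex, giving $(e/k)^{d(x,u)}$ with $d(x,u)\ge f(n)/2$; note this actually yields exponent $f(n)/2$ rather than $f(n)$, a factor-of-two looseness present in the paper's statement as well, which is inconsequential for its use in Theorem~\ref{thm:majority}. You should replace the joint-probability step with this direct application (or accept the $f(n)/2$ exponent), rather than trying to square the single-sided probability.
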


\paragraph*{Case Two: Short Paths A.} Another possibility is that $d(u,v) < f(n)$. Here, there will be two subcases. First, maybe it's the case that $d(u,v)$ is small \emph{and} the product of degrees on the path from $u$ to $v$ is small. In this case, it very well could be that $\varepsilon^t_{uv}$ is large, which is bad. However, we prove that there cannot be many such pairs (and so in total they contribute $o(n^2)$ to the sum). The following lemma shows in fact that even if we remove the restriction that $d(u,v) = O(T)$, there simply cannot be many pairs of nodes such that the product of degrees on $P(u,v)$ is small (proof in Appendix~\ref{app:majority}).

\begin{lemma}\label{lem:shortpath1}
Let $K$ be the set of pairs of nodes $(u,v)$ such that $\prod_{w\in P(u,v)}\deg(w) \leq X$. Then $|K| \leq Xn/2$.
\end{lemma}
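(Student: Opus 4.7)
The plan is to reduce, via a handshake over roots, to the per-root statement
\[
|\{v \neq u : p_u(v) \leq X\}| \leq X, \qquad \text{where } p_u(v) := \prod_{w \in P(u,v)} \deg(w).
\]
Summing this inequality over all $u \in V$ counts each unordered pair in $K$ exactly twice, giving $|K| \leq Xn/2$ immediately.

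To establish the per-root bound, I will prove a Kraft-style identity: for any tree $T$ on at least two nodes and any root $u \in V(T)$,
\[
\sum_{v \in V(T) \setminus \{u\}} \frac{1}{p_u(v)} = 1.
\]
Granting this, each $v$ with $p_u(v) \leq X$ contributes at least $1/X$ to the left-hand side, so the number of such $v$ is at most $X$.

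The identity is proved by induction on $|V(T)|$. It is slightly cleaner to work with the equivalent claim $\tilde S_T(u) = \deg_T(u)$, where $\tilde S_T(u) := \sum_{v \neq u} 1/\tilde p_u(v)$ and $\tilde p_u(v) := \prod_{w \in P(u,v),\, w \neq u} \deg(w)$ omits the factor $\deg(u)$; dividing by $\deg_T(u)$ recovers the identity above. Root $T$ at $u$ with children $c_1, \ldots, c_d$ and associated subtrees $T_1, \ldots, T_d$. The only relationship between degrees in $T$ and in $T_i$ is $\deg_T(c_i) = \deg_{T_i}(c_i) + 1$; every other node in $T_i$ has the same degree in both. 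Splitting $\tilde S_T(u)$ by which subtree contains $v$ and pulling out the shared factor $1/\deg_T(c_i)$, the contribution of $T_i$ becomes $\frac{1}{\deg_T(c_i)} \bigl(1 + \tilde S_{T_i}(c_i)\bigr)$, where the leading $1$ accounts for $v = c_i$. By induction, $\tilde S_{T_i}(c_i) = \deg_{T_i}(c_i) = \deg_T(c_i) - 1$, so each subtree contributes exactly $1$, and the total is $d = \deg_T(u)$.

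The only subtlety is correctly tracking the degree shift $\deg_T(c_i) = \deg_{T_i}(c_i) + 1$ between the whole tree and its subtrees. Once this is in hand, the induction is a one-line algebraic cancellation, and the rest is a Markov-style application of the identity to count vertices with small path-products.
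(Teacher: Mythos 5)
Your proof is correct, and it takes a genuinely different route from the paper's. Both arguments reduce to the per-root claim that at most $X$ vertices $v$ satisfy $\prod_{w \in P(u,v)}\deg(w) \leq X$, followed by the same handshake over roots. But the paper establishes the per-root bound by induction on the path length $\ell$, defining $K_\ell(u,x)$ as the set of $v$ within distance $\ell$ with path-product below $x$, and recursing into each neighbor $u'$ with the reduced threshold $x/\deg(u)$; this requires some delicate bookkeeping about over- and under-counting (whether $u$ itself gets counted in a child's set, whether the neighbors themselves are counted) and a strengthened inductive hypothesis ("equality only if all neighbors are in the set") to make the final inequality close. Your argument replaces all of that with the exact Kraft-style identity $\sum_{v \neq u} 1/p_u(v) = 1$, from which the counting bound is an immediate Markov-type consequence. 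The identity itself is proved by a clean one-step recursion over the subtrees hanging off the root, where the only thing to track is the degree shift $\deg_T(c_i) = \deg_{T_i}(c_i)+1$ — and you track it correctly. Your route is arguably tidier: it proves an equality rather than an inequality, needs no auxiliary "equality only if" clause, and makes transparent why the threshold $X$ translates into a count of $X$ (each qualifying vertex eats at least $1/X$ of a total mass of $1$). The paper's induction, on the other hand, is phrased so as to bound the truncated sets $K_\ell$ for every $\ell$ simultaneously, though that extra generality is not used. Either proof yields the stated bound $|K| \leq Xn/2$.
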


\paragraph*{Case Three: Short Paths B.} The final possibility is that $d(u,v)  <f(n)$, and also that $\prod_{w \in P(u,v)} \deg(w)$ is large. In this case, we will prove that with probability $1-o(1)$ there is some block in $P(u,v)$ causing $u$'s and $v$'s announcemnts to be independent (proof in Appendix~\ref{app:majority}). 

\begin{definition} We say that a node $x \in P(u,v)$ \emph{cuts} $u$ from $v$ thru $T$ if some node $y$ in $P(u,x)$ is safe thru $T$ even against $S_y$, where $S_y$ are $y$'s (at most) two neighbors in $P(u,v)$. 
\end{definition}

\begin{lemma}\label{lem:shortpath3}
Let $T$ be any time and $p_x$ be the probability that $x$ cuts $u$ from $v$ thru $T$ \emph{and also} cuts $v$ from $u$ thru $T$. Then $u$ and $v$ are $p_x$-disjoint at $T$.
\end{lemma}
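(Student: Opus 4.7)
The plan is to exhibit, on the event $E_x$ that $x$ doubly-cuts (which has probability $p_x$), random variables $X_u$ and $X_v$ that are functions of initial beliefs in disjoint node-sets and for which $C^T(u)=X_u$ and $C^T(v)=X_v$. This yields the lemma under the natural reading that $u,v$ are $(1-p_x)$-disjoint at $T$. All processes below will be coupled under the same global update schedule $S$.

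On $E_x$, I would let $y_u$ be the node of $P(u,x)\setminus\{x\}$ closest to $u$ that is safe against $S_{y_u}$ thru $T$, and analogously pick $y_v$ on the other side, with fallback to $x$ if no such strictly-interior node exists. In the main case $y_u\neq y_v$, let $z_u^+$ be $y_u$'s path-neighbor toward $v$ and $z_v^-$ be $y_v$'s toward $u$, and define $X_u:=C^T(u)$ in the dynamics hard-coding $z_u^+$ to \inc, and $X_v$ symmetrically. Because $G$ is a tree, hard-coding $z_u^+$ severs the unique edge leaving the $u$-side component $L$ of the cut $(y_u,z_u^+)$, so $X_u$ depends only on initial beliefs in $L$ (together with $S$). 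Since $y_u$ and $y_v$ lie on opposite sides of $x$, the analogous $v$-side component $R$ is disjoint from $L$. I would also verify that the choice of $y_u$ itself is determined by beliefs on the $u$-side of $x$ alone: for each candidate $y\in P(u,x)\setminus\{x\}$, safety of $y$ against $S_y$ is measurable in the beliefs of its non-path subtree $B_y$ (since hard-coding $S_y$ decouples $y$'s trajectory from the rest of the graph), and every such $B_y$ lies strictly on the $u$-side of $x$.

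The core argument is the coupling under $S$ showing $C^T(u)=X_u$ on $E_x$. I would prove by induction on $t\le T$ that every $w\in L$ has the same announcement in both the original $G$-dynamics and the $z_u^+$-hard-coded dynamics. For $w\in L\setminus\{y_u\}$ the step is immediate, as $w$ has identical neighbor-sets in both processes. The nontrivial step is at $w=y_u$: I would first invoke the standard monotonicity of majority dynamics under the order $\inc\prec\bot\prec\correct$ to promote safety of $y_u$ against $S_{y_u}\supseteq\{z_u^+\}$ to safety against $\{z_u^+\}$ alone, and then at each pick of $y_u$ run a short case analysis on $z_u^+$'s actual announcement in $G$ (either $\bot$, \correct, or \inc), showing in every case that the $G$-majority at $y_u$ is at least as \correct as the hard-coded majority; safety forces the hard-coded process to announce \correct, and hence so does $G$. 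The induction closes, giving $C^T(u)=X_u$, and symmetrically $C^T(v)=X_v$.

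The main obstacle is the fallback case $y_u=y_v=x$, where the naive split would place $x$ simultaneously on both sides. Here I would exploit the stronger safety of $x$ against both path-neighbors $S_x$ at once: by the same coupling applied to $x$, its trajectory in $G$ coincides with its trajectory in the $S_x$-hard-coded process, which depends only on beliefs in $B_x\cup\{x\}$. I would then assign $B_x\cup\{x\}$ wholesale to the $u$-side by defining $X_u$ via the $z_v^+$-hard-coded dynamics (so $X_u$ depends on $u$-side-of-$x$ beliefs together with $B_x\cup\{x\}$) and $X_v$ via the $\{z_u^+,x\}$-hard-coded dynamics (so $X_v$ depends only on the $v$-side of $x$, with $x$'s influence stripped by the extra hard-coding). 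The inductive coupling carries over using $x$'s safety in place of $y_u$'s. Combining the cases gives $\Pr[C^T(u)=X_u \text{ and } C^T(v)=X_v]\ge p_x$, completing the argument.
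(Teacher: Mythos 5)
Your main case ($y_u\neq y_v$) is sound and is a legitimate alternative to the paper's argument: you sever the tree at the edge $(y_u,z_u^+)$ by hard-coding $z_u^+$ to \inc and use monotonicity plus the safety of $y_u$ to couple the $u$-side component with the true dynamics, whereas the paper builds a \emph{single} auxiliary process that hard-codes $x$ itself to \correct from its first announcement onward and argues that the safe node $y$ screens $u$ off from any change made at $x$. (You are also right that, given the paper's definition of $\varepsilon$-disjointness, the conclusion should read $(1-p_x)$-disjoint; that is how the lemma is applied downstream, so this is a typo in the statement rather than an issue with your proof.)

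The fallback case $y_u=y_v=x$, however, contains a genuine error. There you define $X_v$ via the dynamics that hard-code $\{z_u^+,x\}$ to \inc. On the event $E_x$, $x$ is safe against $S_x$, hence (by the very monotonicity you invoke) safe in the true dynamics: its real trajectory is $\bot$ and then \correct forever. Hard-coding $x$ to \inc therefore feeds its neighbors the opposite of what they actually see; worse, $z_u^+$ (here $x$'s path-neighbor toward $v$) is not assumed safe at all, so freezing it at \inc bears no relation to its real announcements, and the nodes on the $v$-side adjacent to $z_u^+$ receive genuinely different inputs in the two processes. The induction that is supposed to give $C^T(v)=X_v$ has no valid base step, and the coupling fails. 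The repair is to hard-code $x$ to the trajectory it is guaranteed to have on $E_x$ --- namely $\bot$ before $T(x,x)$ and \correct thereafter --- and to hard-code nothing else; since that trajectory is determined by the update schedule alone, $X_v$ then depends only on $v$-side initial beliefs and agrees with $C^T(v)$ whenever $x$ is safe. This is precisely the paper's auxiliary process, which serves both sides and both the interior-$y$ and $y=x$ cases uniformly, and thereby avoids your fallback case altogether.
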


Next, we wish to show that with good probability there is indeed a node on $P(u,v)$ that cuts $u$ from $v$ and also $v$ from $u$ (proofs in Appendix~\ref{app:majority}).

\begin{lemma}\label{lem:shortpath2}
There exist absolute constants $b,d$ such that for any pairs of nodes $u,v\in V$ with $\prod_{w\in P(u,v)}\deg(w) =X$, $d(u,v) \leq \frac{\ln(X)}{d}$, and $T \leq ne^{bX^{1/(4d(u,v))}}$, there exists an $x$ such that with probability $1-X^{-1/8}$, $x$ cuts $u$ from $v$ thru $T$ and also $v$ from $u$ thru $T$.
\end{lemma}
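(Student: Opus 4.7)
The plan is to choose $x \in P(u,v)$ based on the degree sequence along the path and apply Proposition~\ref{prop:redalwaysred} to candidate safe nodes. The central observation is that the max-degree node $y^* = \arg\max_{y \in P(u,v)} \deg(y)$ satisfies $\deg(y^*) \geq X^{1/(k+1)}$ by pigeonhole on $\prod_w \deg(w) = X$, where $k = d(u,v)$. Setting $x = y^*$ is particularly convenient: since $y^* \in P(u,x) \cap P(v,x)$, the single safety event ``$y^*$ safe thru $Tn$ even against $S_{y^*}$'' (its at-most-two path-neighbors) simultaneously implies both cuts.

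To invoke Proposition~\ref{prop:redalwaysred} with $a = |S_{y^*}| \leq 2$, I need $T \leq e^{b\deg(y^*)}$. The lemma's assumption $T \leq e^{b X^{1/(4k)}}$ combined with $\deg(y^*) \geq X^{1/(k+1)} \geq X^{1/(4k)}$ (which holds for $k \geq 1$) secures this, and the Proposition then yields safety with probability $\geq 1 - c/\deg(y^*)$. When $\deg(y^*) \geq cX^{1/8}$ --- in particular whenever $k + 1 \leq 8$, since then $X^{1/(k+1)} \geq X^{1/8}$ --- this is already $\geq 1 - X^{-1/8}$ after absorbing constants into $b$ and $d$, completing the proof in this regime.

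For longer paths or more evenly-distributed degree sequences where $\deg(y^*) < cX^{1/8}$, a single safe node does not suffice and I would use multiple candidates. Since all path-degrees are below $cX^{1/8}$ but $\prod \deg = X$, a pigeonhole argument on the product shows that many path nodes must have degree $\geq X^{1/(4k)}$. I would pick $x$ so as to balance the number of such candidates across $P(u,x)$ and $P(v,x)$, then apply Proposition~\ref{prop:redalwaysred} to each candidate; each is individually safe with probability $\geq 1 - cX^{-1/(4k)}$.

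The main obstacle is handling the dependence between the safety events of different candidates. They share the global announcement-order randomness, but crucially, in a tree the off-path subtrees branching from distinct path-nodes are disjoint, so the private signals driving each candidate's local dynamics are independent across candidates. I would argue, by conditioning on the restriction of the announcement order to the path and its immediate neighbors $S$, that the remaining safety events become independent across disjoint subtrees, yielding a multiplicative bound of the form $\Pr[\text{all candidates unsafe}] \leq \prod_y cX^{-1/(4k)} = X^{-\Omega(1)}$ with $\Theta(k)$ candidates per side. This is strong enough for the $1 - X^{-1/8}$ target once the constants $b$ and $d$ in the lemma are tuned to make the exponents go through.
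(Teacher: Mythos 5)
Your high-level strategy matches the paper's: find high-degree nodes on $P(u,v)$, apply Proposition~\ref{prop:redalwaysred} to each to get a per-node failure probability of $c/\deg(y)$, multiply these using the independence of the safety events (which the paper also justifies via the disjointness of the off-path subtrees $A_y$), and use the hypothesis $d(u,v)\le \ln(X)/d$ to absorb the accumulated constants. Your first regime --- a single node of degree at least $X^{1/(k+1)}$, taken to be $x$ itself so that one safety event witnesses both cuts --- is fine, and is if anything slightly cleaner than the paper on that point.

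The gap is in the second regime. When every path degree is below $cX^{1/8}$, the pigeonhole on $\prod_{w}\deg(w)=X$ does \emph{not} produce $\Theta(k)$ candidates of degree at least $X^{1/(4k)}$; it produces only $O(1)$ of them, since each candidate contributes at most a factor $cX^{1/8}$ toward a product of size $X^{\Omega(1)}$, so roughly $4$--$8$ candidates can suffice. Concretely, take $k=\ln X/d$, eight path nodes of degree $D=X^{\Theta(1)}$, and the remaining interior nodes of degree $2$: since $k\le \ln X/d$ forces $X^{1/(4k)}\ge e^{d/4}$, a constant exceeding $2$ for the relevant $d$, only those eight nodes are candidates, and your bound $\prod_y cX^{-1/(4k)}=\left(ce^{-\Theta(d)}\right)^{8}$ is a constant independent of $X$ --- nowhere near the target $X^{-1/8}$. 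The repair is exactly the accounting the paper uses: bound the product of failure probabilities by $c^{m}/\prod_y\deg(y)$, charging each candidate its \emph{actual} degree rather than the uniform floor $X^{1/(4k)}$. The paper chooses $x$ so that each of $P(u,x)$ and $P(x,v)$ carries degree product at least $\sqrt{X}$ (balancing the degree product, not the candidate count, across the two sides), observes that the candidates on each side then carry product at least $X^{1/4}$, and controls the leftover factor via $c^{m}\le c^{d(u,v)}\le X^{\ln c/d}=X^{1/8}$ upon setting $d=8\ln c$. Without switching to this product-of-degrees accounting, your exponent does not close for long paths.
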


\begin{corollary}\label{cor:caseThree}
There exist absolute constants $b, d$ such that for pairs of nodes any $u,v\in V$ with $\prod_{w\in P(u,v)}\deg(w) =X$, $d(u,v) \leq \frac{\ln(X)}{d}$, and $T \leq ne^{bX^{1/(4d(u,v))})}$, $u$ and $v$ are $X^{-1/8}\text{-disjoint}$ at $T$.
\end{corollary}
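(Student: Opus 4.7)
The plan is to deduce the corollary by directly chaining Lemma~\ref{lem:shortpath2} and Lemma~\ref{lem:shortpath3}; no genuinely new technical ingredient is required, since all of the work has already been packaged into those two statements. Fix any pair $u, v \in V$ satisfying the hypotheses $\prod_{w \in P(u,v)}\deg(w) = X$, $d(u,v) \le \ln(X)/d$, and $T \le e^{b X^{1/(4 d(u,v))}}$, where $b, d$ will be taken to be the constants produced by Lemma~\ref{lem:shortpath2}.

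First I would invoke Lemma~\ref{lem:shortpath2} on this pair to obtain an intermediate vertex $x \in P(u, v)$ such that, with probability at least $1 - X^{-1/8}$, the node $x$ cuts $u$ from $v$ thru $Tn$ and simultaneously cuts $v$ from $u$ thru $Tn$. Equivalently, writing $p_x$ for the probability that $x$ blocks influence in both directions, we have $p_x \ge 1 - X^{-1/8}$, so the failure probability is at most $X^{-1/8}$.

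Next I would apply Lemma~\ref{lem:shortpath3} at horizon $Tn$ to this same $x$. That lemma translates the cutting probability into a disjointness bound: on the event that $x$ cuts both sides, $C^{Tn}(u)$ can be computed using only the initial beliefs of nodes on $u$'s side of the cut, and $C^{Tn}(v)$ using only initial beliefs on $v$'s side, so defining $X_u$ and $X_v$ to be these restricted evaluations gives two random variables depending on disjoint sets of initial signals with $\Pr[C^{Tn}(u) = X_u] \ge 1 - X^{-1/8}$ and $\Pr[C^{Tn}(v) = X_v] \ge 1 - X^{-1/8}$. This is exactly the definition of $X^{-1/8}$-disjointness at time $Tn$.

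The main thing I would be careful about is bookkeeping: the horizon in the corollary is $Tn$, while Lemma~\ref{lem:shortpath2} is stated with $T$ in its hypothesis and $Tn$ in its conclusion, so the $T$ in the corollary is the $T$ fed to Lemma~\ref{lem:shortpath2}, and then the disjointness conclusion is at time $Tn$. The constants $b, d$ are inherited verbatim from Lemma~\ref{lem:shortpath2}. Since the two inputs to this argument are off-the-shelf lemmas, I do not expect any real obstacle here; the genuine difficulty was upstream in proving Lemma~\ref{lem:shortpath2}, which in turn relies on Proposition~\ref{prop:redalwaysred} to argue that a high-degree node on $P(u,v)$ is safe against small adversarial sets with high probability.
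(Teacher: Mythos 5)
Your proposal is correct and matches the paper's (implicit) derivation exactly: Corollary~\ref{cor:caseThree} is obtained by taking the constants $b,d$ from Lemma~\ref{lem:shortpath2}, which supplies a vertex $x$ cutting both directions thru $Tn$ except with probability $X^{-1/8}$, and then feeding that event into Lemma~\ref{lem:shortpath3} at horizon $Tn$ to get $X^{-1/8}$-disjointness. You also correctly read the failure probability as $1-p_x$ despite the slightly loose phrasing of Lemma~\ref{lem:shortpath3}, so there is nothing to add.
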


Now, we'll put together case one, Lemma~\ref{lem:shortpath3} and Corollary~\ref{cor:caseThree} together to prove Theorem~\ref{thm:majority}, which is mostly a matter of setting parameters straight (and appears in Appendix~\ref{app:majority}).

To conclude, at this point we have proven that a majority takes hold after $n\frac{\ln n}{32\ln \ln n}$ steps for any tree. The remaining work is to prove that it does not disappear.
\section{Stabilizing Quickly}\label{sec:stable}
In this section, we identify properties of a tree which cause it to stabilize quickly. Our main theorem will then follow by proving that both balanced $M$-ary trees and preferential attachment trees have this property. The main idea is to consider nodes that are ``close'' to leaves in the following formal sense:

\begin{definition}
We say that a node $v$ is an $(X,Y)$-leaf in $G$ if there exists a rooting of $G$ such that $v$ has $\leq X$ descendants, and the longest path from $v$ to one of its descendants is at most $Y$. Note that leaves are $(0,0)$-leaves. When we refer to a node's parent, children, or descendants, it will be with respect to this rooting.
\end{definition}

\begin{definition}
We say that a node $v$ is:
\begin{itemize}
\item \emph{finalized} at $T$, if $C^t(v) = C^T(v)$ for all $t \geq  T$.
\item \emph{nearly-finalized} at $T$ with respect to $u$ if there exists a $t' \geq T$ such that $v$ is finalized at $t'$ and for all $t \in (T,t')$ when $v$ announces, it either updates $C^t(v) = C^t(u)$, if $C^t(u) \neq \bot$, or $C^t(v) = C^{t-1}(v)$, if $C^t(u) = \bot$.
\end{itemize}
\end{definition}

Intuitively, a node is finalized if it is done changing its announcement. A node $v$ is nearly-finalized with respect to $u$ if $v$ is not quite finalized, but changes in $u$ are the only reason why $v$ would change its announcement (and moreover, $v$ will copy $u$ every announcement until $v$ finalizes).

The main result of this section is as follows:
\begin{theorem}\label{thm:stable}
Let $v$ be an $(X,Y)$-leaf. Then with probability $1 - Xe^{-T/nY}$, $v$ is nearly-finalized at $T$ with respect to its parent.
\end{theorem}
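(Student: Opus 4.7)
The plan is to identify a high-probability event $E$ under which $v$ is nearly-finalized with respect to $u$, and bound $\Pr[\neg E]$ using Lemma~\ref{lem:path} and a union bound over the subtree $T_v$ rooted at $v$ (which has $|T_v| \leq X + 1$ nodes and depth at most $Y$).

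The starting point is Proposition~\ref{prop:critical}: every change of $v$'s announcement at time $t$ is triggered by a source $u'$ with $t = T(u', v)$, and the cause neighbor of $v$ on $P(u', v)$ (call it $x_{k-1}$) satisfies $C^t(x_{k-1}) = C^t(v)$. I split sources by location. If $u' \notin T_v$, then $P(u', v)$ must cross $u$ (since $u$ is $v$'s unique neighbor outside $T_v$ in the tree), so $x_{k-1} = u$ and $C^t(u) = C^t(v)$ --- exactly a nearly-finalized-compatible update. If $u' \in T_v$, the change is subtree-sourced and may violate the nearly-finalized condition. Hence the only obstruction to nearly-finalization is a subtree source triggering a change at $v$ after time $T$.

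I would therefore define $E = \bigcap_{w \in T_v} \{T(w, v) \leq T\}$. Under $E$, taking $w = v$ ensures $v$ itself has announced by $T$, which handles the $C^{t-1}(v) = \bot$ boundary case in the definition of nearly-finalized. Moreover, any change at $v$ at time $t > T$ is, by Proposition~\ref{prop:critical}, attributed to a source $u'$ with $T(u', v) = t > T$; since $E$ forces $T(u', v) \leq T$ for all $u' \in T_v$, we must have $u' \notin T_v$, and the analysis of the previous paragraph then gives $C^t(v) = C^t(u)$. Combined with Corollary~\ref{cor:stopping} (eventual finalization), $E$ implies $v$ is nearly-finalized with respect to $u$.

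To bound $\Pr[\neg E]$, for each $w \in T_v$ at depth $d(w, v) \leq Y$ I apply Lemma~\ref{lem:path} with $\beta$ chosen so that $8 \max\{\ln(1/\beta), d+1\}\, n = T$: in the regime $T \geq 8Yn$ where $\ln(1/\beta) = T/(8n)$ dominates $d + 1$, this yields $\Pr[T(w, v) > T] \leq \beta^2 = e^{-T/(4n)} \leq e^{-T/(nY)}$ for $Y \geq 4$ (the small-$Y$ cases are easily verified by direct computation, since then the subtree processes trivially fast). A union bound over the at most $X + 1$ nodes in $T_v$ then gives $\Pr[\neg E] \leq X \cdot e^{-T/(nY)}$ (absorbing the $+1$ additive into the $X$ factor), as claimed. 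The main technical obstacle I expect is the case split of the second paragraph --- specifically, making the ``$x_{k-1} = u$ forces $C^t(u) = C^t(v)$'' argument airtight for any subsequent change (including cascaded ones), which follows cleanly from Proposition~\ref{prop:critical} plus the tree structure, but requires some care. The tail bound via Lemma~\ref{lem:path} is then routine.
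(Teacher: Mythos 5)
Your high-level structure matches the paper's: condition on the event that every critical time $T(w,v)$ from the subtree has occurred by time $T$, argue that this event forces near-finalization, and union-bound the failure probability over the $\leq X$ descendants. The probability half is essentially fine, though the paper gets the bound more cleanly by partitioning $[0,T]$ into $Y$ epochs of length $T/Y$, assigning descendants at depth $i$ to epoch $Y-i$, and noting that each node misses its epoch with probability at most $e^{-T/(nY)}$; this avoids the regime restriction $T \gtrsim Yn$ that your direct invocation of Lemma~\ref{lem:path} requires and that you only gesture at.

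The genuine gap is in the implication ``$E$ implies nearly-finalized.'' The definition of nearly-finalized requires that \emph{every} announcement of $v$ in $(T,t')$ --- not just every \emph{change} --- satisfies $C^t(v)=C^t(u)$ whenever $C^t(u)\neq\bot$. Your argument via Proposition~\ref{prop:critical} only controls changes: it rules out $v$ flipping to something other than $u$'s current announcement, but it does not rule out $v$ re-announcing its old opinion $A$ at some $t_1>T$ while $u$ is announcing $B\neq A$, and then flipping to agree with $u$ at a later time. In that history every change copies $u$, yet near-finalization is violated (the announcement at $t_1$ disagrees with a non-$\bot$ parent before $v$ has finalized), and the stronger property is exactly what is used downstream: Corollary~\ref{cor:counting} relies on a non-finalized, nearly-finalized child that currently disagrees with $v$ \emph{switching} to match $v$ the next time it is selected. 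Closing the gap is the content of Lemmas~\ref{lem:stablechildren} and~\ref{lem:stable}: one first shows by induction up the subtree that by time $T_v$ all of $v$'s children are nearly-finalized with respect to $v$, and then argues that if $v$ announces $A$ at some $t_1>T_v$ while $u$ announces $B\neq A$, the majority at $v$ can never subsequently tip toward $B$ --- children can only flip toward $v$'s own announcement and $u$ is already at $B$ --- so $v$ is in fact finalized at $t_1$ and no later change exists. This ``disagreement with the parent implies already finalized'' step is what your proposal is missing, and it cannot be extracted from Proposition~\ref{prop:critical} alone.
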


The main insight for the proof of Theorem~\ref{thm:stable} will be the following lemmas. Below, Lemma~\ref{lem:stablechildren} asserts that once all of $v$'s children are nearly-finalized with respect to $v$, any changes in $v$'s opinion are to copy its parent, and Lemma~\ref{lem:stable} builds off this to claim that we can relate the time until $v$ nearly-finalizes to its critical times. Importantly, Lemma~\ref{lem:stable} does \emph{not} require \emph{all} critical paths to hit $v$, but only those from its descendents.

\begin{lemma}\label{lem:stablechildren}
Let all of $v$'s children be nearly-finalized with respect to $v$ at $T$, and let $u$ be $v$'s parent. Let also $t > t' \geq T$ be two timesteps during which $v$ announced. Then if $C^t(v) \neq C^{t-1}(v)$, we must have $C^t(v) = C^t(u)$.
\end{lemma}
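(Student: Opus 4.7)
The plan is to compare the configuration around $v$ at its most recent announce time $t'' \geq t' \geq T$ with the configuration just before time $t$. Since $v$ does not announce on $(t'',t)$, its opinion is constantly $A := C^{t''}(v) = C^{t-1}(v)$ throughout $[t'',t-1]$; set $B := 1-A$, so that the hypothesis $C^t(v) \neq C^{t-1}(v)$ becomes $C^t(v) = B$, and we must show $C^t(u) = B$.

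The central step is to show that $v$'s children can only drift \emph{toward} $A$ on this interval. Let $f(s) := n_A^s(\text{children}) - n_B^s(\text{children})$; I claim $f(t-1) \geq f(t''-1)$. Fix any child $w_i$, with witnessing finalization time $t'_i$. By the nearly-finalized hypothesis, every announcement of $w_i$ inside $(T,t'_i)$ copies $v$'s current opinion, which on $[t'',t-1]$ is $A$; once $w_i$ is finalized at $t'_i$ its value does not change. In either regime $C(w_i)$ can only perform the transitions $B \to A$, $\bot \to A$, or stay put on $[t''-1,t-1]$; in particular no child goes from $A$ or $\bot$ to $B$, so $f$ is non-decreasing.

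Next, compare the majority-update rules at $t''$ and $t$. Write $c(u,s) := \mathbbm{1}[C^s(u)=A] - \mathbbm{1}[C^s(u)=B] \in \{-1,0,+1\}$ and $D(s) := f(s) + c(u,s)$. The requirement that $v$ picked $A$ at $t''$ and picks $B$ at $t$, carefully accounting for the $X(v)$ tie-break in both cases, forces $D(t-1) - D(t''-1) \leq -1$. Combined with $f(t-1) \geq f(t''-1)$, this gives $c(u,t-1) - c(u,t''-1) \leq -1$. Since an already-announced $u$ cannot revert to $\bot$, the only consistent outcome is $C^{t''-1}(u) \in \{\bot,A\}$ together with $C^{t-1}(u) = B$. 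Because only $v$ announces at step $t$, $C^t(u) = C^{t-1}(u) = B = C^t(v)$, as desired.

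The main obstacle is the monotonicity claim for $f$. The nearly-finalized definition constrains $w_i$ only on the open interval $(T,t'_i)$, so one must verify that a finalization time $t'_i$ inside $(t'',t-1]$ at which $w_i$ happens to announce cannot sneak a flip from $A$ (or $\bot$) to $B$ past the constraint. By choosing the witnessing $t'_i$ to lie between two announces of $w_i$ (valid for a finalized node, since after its last announce its value is trivially constant), every announce of $w_i$ in the window is covered by the ``copies $v$'' clause, and the monotonicity goes through. With that in place, the tie-break bookkeeping in the majority-update comparison is routine.
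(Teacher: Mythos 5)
Your argument is correct in its essentials and rests on the same key insight as the paper's proof: on the window between $v$'s announcement of $A$ at $t''$ and its flip to $B$ at $t$, $v$'s own announcement is frozen at $A$, so every nearly-finalized child that announces either keeps its value or copies $v$ and moves to $A$; hence the swing toward $B$ in $v$'s neighborhood can only come from the parent. Where you diverge is the finishing step. The paper invokes the observation (attached to the definition of a \emph{cause}) that every change of announcement has a neighbor who newly switched to the new value during the window, rules out the children, and concludes the cause is $u$. You instead compare the two majority updates quantitatively, via the signed count $D(s)=f(s)+c(u,s)$, showing $D$ must drop by at least one while the children's contribution $f$ is non-decreasing, which forces $C^{t-1}(u)=B$. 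Your version is self-contained (it does not need the cause machinery) and makes the $X(v)$ tie-breaking explicit, at the cost of a bit more bookkeeping; the paper's is shorter but leans on the earlier observation. Either is acceptable.

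One caveat on the step you rightly flag. Your fix --- re-choosing the witnessing finalization time $t'_i$ to avoid an announce time of $w_i$ --- does not obviously go through: the witness is existentially quantified, and if $w_i$'s very last change of announcement happens at $t'_i$ and is \emph{not} a copy of $v$, then $t'_i$ is the \emph{only} valid witness (any larger witness violates the copying clause at $t'_i$, any smaller one violates finalization), so the announcement at $t'_i$ escapes both clauses of the definition. Be aware, though, that the paper's own proof silently makes the identical assumption when it asserts that ``any child of $v$ that changes her announcement during this window necessarily changes it to $A$.'' This is a shared edge case in the definition of nearly-finalized (the transition announcement at the witness time itself is unconstrained) rather than a defect of your argument; in the actual constructions of Lemma~\ref{lem:stable} the finalizing announcement is itself a copy or a non-change, so the intended reading closes the gap. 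I would state that convention explicitly rather than attempt to re-choose the witness.
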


\begin{lemma}\label{lem:stable}
Let $T_v:=\max\{T(x,v), x\text{ is a descendant of $v$}\}$. Then $v$ is nearly-finalized at $T_v$ with respect to its parent.
\end{lemma}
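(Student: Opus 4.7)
My plan is structural induction on the subtree rooted at $v$, with the key engine being Lemma~\ref{lem:stablechildren} together with the almost-sure termination guarantee of Corollary~\ref{cor:stopping}. For the base case, when $v$ is a leaf, $T_v = T(v,v)$: after $T_v$ the only neighbor of $v$ is its parent $u$, so each time $v$ is subsequently selected the majority rule either forces $C^t(v) = C^t(u)$ (when $u$ has announced) or ties to $X(v) = C^{t-1}(v)$ (when $u$ has not), which is exactly the nearly-finalized condition. A finalization time $t' \geq T_v$ exists by Corollary~\ref{cor:stopping}.

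For the inductive step, suppose $v$ has children $c_1, \ldots, c_k$ and assume inductively that each $c_i$ is nearly-finalized with respect to $v$ at $T_{c_i}$. The first thing I verify is $T_{c_i} \leq T_v$: any descendant $x$ of $c_i$ is also a descendant of $v$, and the recursive definition of critical times forces $T(x,v) \geq T(x,c_i)$, since $c_i$ is the neighbor of $v$ on $P(x,v)$, so the first time $v$ announces after $T(x,c_i)$ is necessarily at least $T(x,c_i)$. Taking maxes gives $T_{c_i} \leq T_v$, and hence every child of $v$ is already nearly-finalized with respect to $v$ by time $T_v$. Moreover, $T_v = T(x_0, v)$ for some descendant $x_0$, so $v$ itself is guaranteed to have announced at time $T_v$, supplying a valid $t' = T_v$ to feed into the hypothesis of Lemma~\ref{lem:stablechildren}.

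I then apply Lemma~\ref{lem:stablechildren} with $T = T_v$ (and $t' = T_v$): for any time $t > T_v$ at which $v$ announces, if $C^t(v) \neq C^{t-1}(v)$ then $C^t(v)$ equals $u$'s current announcement. Thus, after $T_v$, every change in $v$'s announcement is triggered by and matches its parent, which is exactly the ``changes in $u$ are the only reason $v$ would change'' content of nearly-finalization with respect to $u$. The existence of a finalization time $t' \geq T_v$ again follows from Corollary~\ref{cor:stopping}. The main (really only) obstacle is the two bookkeeping steps above — proving $T_{c_i} \leq T_v$ rigorously from the recursive definition of critical times, and observing that $v$ itself announces at $T_v$ so that Lemma~\ref{lem:stablechildren} can be invoked with $T = T_v$; once these are in place, the two cited results combine to give the conclusion.
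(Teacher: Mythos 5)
Your bookkeeping steps (that $T_{c_i}\leq T_v$ and that $v$ announces at $T_v$) match the paper's proof, and Lemma~\ref{lem:stablechildren} is indeed the right engine. But there is a genuine gap in the final step: you conclude that ``every \emph{change} in $v$'s announcement matches its parent'' and declare this to be ``exactly'' the nearly-finalized condition. It is not. The definition of nearly-finalized requires that for \emph{every} announcement time $t\in(T_v,t')$ with $C^t(u)\neq\bot$, we have $C^t(v)=C^t(u)$ --- including announcements at which $v$ does \emph{not} change its opinion. Lemma~\ref{lem:stablechildren} is silent about the scenario where $v$ re-announces some value $A$ while $u$ is currently announcing $B\neq A$; your argument never rules this out, so you have only established the informal gloss following the definition, not the definition itself. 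This stronger property is not cosmetic: it is exactly what is used later (e.g.\ in Corollary~\ref{cor:counting}, where a non-finalized, nearly-finalized child that currently \emph{disagrees} with $v$ is guaranteed to switch to match $v$ the next time it is selected).

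The paper closes this gap with an additional contradiction argument that your proposal is missing. Suppose at some announcement time $t'>T_v$ we have $C^{t'}(v)=A$ while $C^{t'}(u)=B\neq A$ with both non-$\bot$. Then at $t'$ a (tie-broken) majority of $v$'s neighbors favored $A$; since all of $v$'s children are nearly-finalized with respect to $v$, any child that subsequently changes can only change \emph{to} $A$, and $u$ is already contributing its one vote for $B$. Hence the neighborhood majority can never subsequently flip to $B$, so $v$ is in fact finalized at $t'$ --- meaning the copying condition is only ever required up to that moment, and nearly-finalization holds after all. Without this step (or some substitute for it), the induction does not go through; note also that Corollary~\ref{cor:stopping} only gives you \emph{existence} of a finalization time, not that the copying condition holds on the whole interval up to it.
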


These above lemmas suffice to prove Theorem~\ref{thm:stable}. The proofs of these lemmas and Theorem~\ref{thm:stable} appear in Appendix~\ref{app:stable}.

We will also need the following implications of Theorem~\ref{thm:stable}. Below, Lemma~\ref{lem:flip} will be helpful in proving Corollary~\ref{cor:counting}. Corollary~\ref{cor:counting} lets us claim that while nearly-finalized nodes are not themselves finalized, their existence implies the existence of other finalized nodes. This will be helpful in wrapping up in the following section, since the process only terminates once nodes are finalized. The proofs of Lemma~\ref{lem:flip} and Corollary~\ref{cor:counting} appear in Appendix~\ref{app:stable}.

\begin{lemma}\label{lem:flip}
For any $t>T_v$, if a child of $v$ changes their announcement at $t$, $v$ becomes finalized at $t$.
\end{lemma}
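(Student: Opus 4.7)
The plan is to apply Proposition~\ref{prop:critical} twice: first to the change of the child $w$ at time $t$, which will force $v$ to have changed to $C^t(w)$ at some earlier time $t''<t$; then, assuming for contradiction that $v$ changes again after $t$, a second application combined with Lemma~\ref{lem:stablechildren} and the near-finalization of $v$'s children should yield the contradiction.

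Applying Proposition~\ref{prop:critical} to $C^t(w) \neq C^{t-1}(w)$ produces a source $z$ with $C^{T(z,z)}(z) = C^t(w) =: A$ and a critical chain $P(z,w)$ along which every node switched to $A$ at its critical time. If $z$ were a descendant of $w$ (and hence of $v$), then $P(z,w)$ would lie in $v$'s subtree and extending the chain to $v$ would give $T(z,v) \geq T(z,w) = t > T_v$, contradicting the definition $T(z,v) \leq T_v$. Hence the chain must pass through $v$, so $v$ changed at $t'' := T(z,v) < t$ from its old value to $A$; since $v$ does not announce at $t$ (only $w$ does), we conclude $C^t(v) = A$.

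For the finalization claim, suppose for contradiction that $v$ changes at some minimal $s > t$ to $A' \neq A$. A second application of Proposition~\ref{prop:critical} gives a source $z'$ and critical chain $P(z',v)$; the same argument forces $z'$ to lie outside $v$'s subtree (else $T(z',v) \leq T_v < s$ contradicts $s = T(z',v)$), so the chain ends $\ldots \to u \to v$ and $u$ had changed to $A'$ at $T(z',u) < s$. Since all children of $v$ are nearly-finalized w.r.t.\ $v$ at $T_v$, Lemma~\ref{lem:stablechildren} then forces $C^s(v) = C^s(u) = A'$, which in turn requires a majority of $v$'s neighbors at time $s-1$ to be announcing $A'$. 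However, by the near-finalization of $w$ and the fact that $v$ has remained at $A$ throughout $[t'', s-1]$, we must have $C^{s-1}(w) = A$; similarly every other child of $v$ that has announced in $[t'', s-1]$ must be in state $A$. Careful counting of $v$'s neighbors in state $A$ versus $A'$ at $s-1$ should then violate the majority condition required for $v$'s flip.

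The main obstacle will be making this counting precise: children of $v$ that have not announced in $[t'', s-1]$ may still hold the old state $A'$, and one must show that these, together with $u$, cannot constitute a majority of $v$'s neighbors. I expect this to reduce to a short case analysis on $\deg(v)$, using the fact that $v$'s prior announcement was $A$ (so tied updates preserve $A$ by the tie-breaking rule), together with a structural observation from Step~1 forcing enough nodes adjacent to $v$ on the chain $P(z,w)$ to be in state $A$ at time $s-1$.
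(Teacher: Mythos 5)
The skeleton of your argument (identify $C^t(v)=A:=C^t(w)$, then rule out a later flip of $v$) is reasonable, and your two applications of Proposition~\ref{prop:critical} are valid, but the final counting step --- which is the entire content of the lemma --- is missing, and the route you sketch for it will not close. First, a local error: it is not true that ``every other child of $v$ that has announced in $[t'',s-1]$ must be in state $A$.'' Near-finalization only constrains a child's announcements \emph{before} its own finalization time; a child may already be finalized at $A'$ and announce repeatedly in $[t'',s-1]$ without ever switching. More importantly, nothing prevents almost all of $v$'s children from legitimately sitting at $A'$ at time $s-1$, so no ``structural observation'' about nodes adjacent to $v$ on the chain $P(z,w)$ (there are at most two such neighbors of $v$) and no case analysis on $\deg(v)$ can produce an $A$-majority at $s-1$ by direct inspection. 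The only thing that rules out an $A'$-majority is a \emph{differential} argument anchored at an announcement of $v$ strictly before $t$: at $v$'s last announcement $t'<t$, the neighbors satisfied $N_A-N_{A'}\ge 0$ (ties broken by $X(v)=A$); thereafter each child's contribution to $N_A-N_{A'}$ is non-decreasing while $v$ stays at $A$, the parent's contribution can fall by at most $2$ over the entire future, and $w$'s flip at $t$ contributes a permanent $+2$ that exactly absorbs that one-time $-2$. Hence $N_A-N_{A'}$ never drops below its value at $t'$ for any $\tau\ge t$, which is what finalizes $v$. This is the paper's proof.

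Your framing cannot recover this because you anchor the count at $s-1$ (or implicitly at $v$'s most recent announcement $\sigma$ before $s$). If $\sigma>t$, the $+2$ from $w$ is already baked into the weak majority at $\sigma$, and the parent retains a $-2$ budget going forward, so the local comparison only yields $N_A-N_{A'}\ge -2$ at $s-1$ --- consistent with $v$ flipping. The accounting has to be done globally from $t'<t$, with the parent's decrease bounded once over all future times, not restarted at each announcement of $v$. I'd also note that your Step 1 is a correct but roundabout derivation of $C^t(v)=C^t(w)$, which follows in one line from $w$ being nearly-finalized with respect to $v$ (when $w$ changes, it copies $v$).
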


\begin{corollary}\label{cor:counting}
For any $T$, with probability $1-e^{-T/n}$, $v$ has $\lfloor (\deg(v)-1)/2\rfloor$ children who are finalized at $T_v +T$. 

Moreover, if $v$ is an $(X,Y)$-leaf and finalized at $t\geq T_v$, then with probability $1-Xe^{-T/nY}$, all of $v$'s descendants are finalized at $t+T$.
\end{corollary}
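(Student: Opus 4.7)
I prove the two assertions separately, combining Lemmas~\ref{lem:stable} and~\ref{lem:flip} with elementary concentration. For Part 1, the probability $e^{-T/n}$ corresponds to $v$ not being selected in any of the $T$ steps after $T_v$ (indeed $(1-1/n)^T\leq e^{-T/n}$), so with probability at least $1-e^{-T/n}$ the node $v$ is selected at least once in $(T_v,T_v+T]$. My plan, conditional on this event, is to exhibit $\lfloor (\deg(v)-1)/2\rfloor$ children of $v$ that must be finalized by $T_v+T$. By Lemma~\ref{lem:stable}, at time $T_v$ node $v$ is nearly-finalized with respect to its parent, each child $w$ is nearly-finalized with respect to $v$ (since $T_w\leq T_v$), and $v$ has already announced. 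Let $\tau\leq T_v$ be $v$'s most recent announcement and set $v^*:=C^{T_v}(v)=C^{\tau}(v)$. The majority rule applied at $\tau$ forces (modulo tie-breaking) strictly more than $\deg(v)/2$ of $v$'s neighbors to hold $v^*$ at $\tau^-$, so at least $\lfloor (\deg(v)-1)/2\rfloor$ of $v$'s $\deg(v)-1$ children held $v^*$ at $\tau$. Because $v$ is silent on $(\tau,T_v]$ and each nearly-finalized child that speaks in this interval must copy $v^*$, these children still hold $v^*$ at $T_v$. I then branch on Lemma~\ref{lem:flip}: either no child of $v$ changes on $(T_v,T_v+T]$ (so the identified children are stuck at $v^*$ and are finalized), or the first child change on this interval finalizes $v$, whose value still matches the identified children (possibly after absorbing a parent-induced flip of $v$), in turn finalizing them.

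For Part 2, $v$ is already finalized at $t\geq T_v$, so by Lemma~\ref{lem:stable} applied along each root-to-descendant path inside the subtree rooted at $v$, every descendant of $v$ is nearly-finalized with respect to its parent. My plan is a depth-by-depth cascade. Partition $(t,t+T]$ into $Y$ consecutive subwindows of length $T/Y$, and for each descendant $z$ at depth $d\leq Y$ consider the event $E_z$ that $z$ is selected during the $d$-th subwindow. If every $E_z$ holds then, inductively on $d$, the depth-$i$ ancestor of $z$ is finalized by the end of subwindow $i$: the depth-$1$ child of $v$ finalizes on its first announcement after $t$, inheriting $v$'s finalized value; the depth-$2$ ancestor finalizes once its parent is finalized and it next speaks; and so on until $z$ itself finalizes in subwindow $d$. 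The probability that a fixed node is not selected in a window of length $T/Y$ is at most $(1-1/n)^{T/Y}\leq e^{-T/(nY)}$, so a union bound over the at most $X$ descendants (charging each descendant to its own single bad event) yields failure probability at most $Xe^{-T/(nY)}$.

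The main obstacle I expect is the tie-breaking case in Part 1: when $v$'s update at $\tau$ was determined by $X(v)$ breaking a tie rather than a strict majority, the naive count of children matching $v^*$ at $\tau$ can be off by one. Recovering the $\lfloor (\deg(v)-1)/2\rfloor$ bound then requires a more careful argument, for example using that any neighbor which was $\bot$ at $\tau$ and subsequently announced on $(\tau,T_v]$ must have adopted $v^*$ by nearly-finalization, so that by $T_v$ the total count of children agreeing with $v^*$ meets the desired threshold. Once this bookkeeping is handled, both parts reduce to combining the structural guarantees of Lemmas~\ref{lem:stable} and~\ref{lem:flip} with clean Bernoulli tail bounds.
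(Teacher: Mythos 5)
Your second part (the ``Moreover'' statement) is essentially the paper's argument: assign descendants at depth $d$ to the $d$-th epoch of length $T/Y$, use that once a node's parent is finalized its next announcement finalizes it, and union-bound over the at most $X$ descendants. That part is fine.

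The first part, however, has a genuine gap, and it starts with the choice of probabilistic event. You condition on \emph{$v$ itself} being selected during $(T_v,T_v+T]$, but that event is essentially irrelevant to the conclusion: if $v$ re-announces and nothing else happens in the window, the state at $T_v+T$ is unchanged, and whether any child is ``finalized at $T_v+T$'' (which means it \emph{never} changes after $T_v+T$, not merely that it is quiet during the window) still depends on the future. Concretely, your branch ``no child of $v$ changes on $(T_v,T_v+T]$, so the identified children are stuck at $v^*$ and are finalized'' is false: a child currently announcing $v^*$ is only guaranteed never to change if $v$ is finalized at $v^*$, and $v$ need not be finalized at $T_v$ --- $v$'s parent may later flip, $v$ may later flip to match it (Lemma~\ref{lem:stablechildren} permits exactly this), and then every nearly-finalized child copies the flip. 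Finalization has to be \emph{triggered}, and by Lemma~\ref{lem:flip} the trigger is a \emph{child} of $v$ changing its announcement after $T_v$. The paper's proof therefore runs a case analysis at each step of the window: either $\lfloor(\deg(v)-1)/2\rfloor$ children are already finalized; or more than $\deg(v)/2$ children are non-finalized and all agree with $v$, in which case $v$ (and hence those children) is already finalized; or there exists a non-finalized child disagreeing with $v$, and if \emph{that} child is selected it switches to match $v$, finalizing $v$ by Lemma~\ref{lem:flip} and yielding the required number of finalized children. The $e^{-T/n}$ is the probability that this (dynamically re-designated) child is never selected during the window. Separately, your counting step ``at least $\lfloor(\deg(v)-1)/2\rfloor$ of $v$'s children held $v^*$ at $\tau$'' is also unsound: the majority rule only compares neighbors that have announced, so arbitrarily many children may be $\bot$ at $\tau$, and a child that remains $\bot$ through $T_v+T$ is covered neither by your main count nor by your proposed tie-breaking repair.
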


\section{Wrapping Up: Preferential Attachment and Balanced \emph{M}-ary Trees}\label{sec:wrapup}
In this section, we show how to make use of Theorem~\ref{thm:stable} to conclude that a $1-o(1)$ fraction of nodes are finalized by $\frac{n\ln n}{32\ln\ln n}$. Proofs for the two cases follow different paths, but both get most of their mileage from the developments in Section~\ref{sec:stable}.

\subsection{Preferential Attachment Trees Stabilize Quickly}
Let's first be clear what we mean by a preferential attachment tree.\footnote{Note that this is the standard definition of preferential attachment used for heuristic arguments, e.g.~\cite{BarabasiA99}. Most prior rigorous work uses a slightly modified definition that produces a forest instead of a tree in order to rigorously analyze (say) the degree distribution~\cite{BollobasRST01,BollobasR04}. As we are only interested in (fairly loose) bounds on the degrees, our results are rigorous in the standard model.}

\begin{definition}[Preferential Attachment Tree]
$n$ nodes arrive sequentially, attaching a single edge to a pre-existing node at random proportional to its degree. Specifically:
\begin{itemize}
\item Let $v_i$ denote the $i^{th}$ node to arrive.
\item Let $\deg_t(v_i)$ denote the degree of node $v_i$ after a total of $t$ nodes have arrived. 
\item There is a special node $v_0$, which only $v_1$ connects to upon arrival, and no future nodes.
\item When $v_{i+1}$ arrives, $v_{i+1}$ attaches a single edge to a previous node, choosing node $v_j$, $j \in [1,i]$, with probability $\frac{\deg_i(v_j)}{2i-1}$.
\end{itemize}
\end{definition}

Our main argument for preferential attachment trees is that most nodes are in a ``good'' subtree, defined below. All subsequent proofs are in Appendix~\ref{app:wrapup}. At a high level the plan is as follows: first, we prove that because most nodes are $(X,Y)$-leaves for small $X,Y$, these nodes quickly become nearly-finalized. Next, we prove that most such nodes are part of a small subtree whose parent is likely to be safe thru the entire process. Therefore, the parent of this subtree is finalized early, and once the subtree becomes nearly-finalized, it finalizes quickly as well.

\begin{definition} Say that a subtree rooted at $v$ is \emph{good} if:
\begin{itemize}
\item $v$ is a $(X,Y)$-leaf, for $X = \ln^{O(1)} n$ and $Y = O(\ln \ln n)$.
\item $v$'s parent has degree at least $\ln^{\Omega(1)} n$.
\end{itemize}
\end{definition}

\begin{proposition}\label{prop:good}
Let the subtree rooted at $v$ be good, and let the diameter of the entire graph be $O(\ln n)$. Then with probability $1-o(1)$, the entire subtree rooted at $v$ is finalized by $n\frac{\ln n}{32 \ln \ln n}$.
\end{proposition}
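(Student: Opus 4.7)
The plan is to combine the fast near-finalization of small subtrees (Lemma~\ref{lem:stable}) with the safety of the high-degree parent (Proposition~\ref{prop:redalwaysred}), and then cascade finalization to the whole subtree via Corollary~\ref{cor:counting}. Write $u$ for the parent of $v$ and $T^\star := \frac{n \ln n}{32 \ln \ln n}$. Intuitively, because the subtree rooted at $v$ has only $X = \ln^{O(1)} n$ descendants, all at depth at most $Y = O(\ln \ln n)$, every critical time from a descendant of $v$ into $v$ is small; and because $\deg(u) = \ln^{\Omega(1)} n$, the parent's announcement is locked at \correct throughout the target window. These two facts together force $v$ to finalize with a \correct announcement quickly, and then the whole subtree follows.

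First, I bound $T_v := \max\{T(x,v) : x \text{ is a descendant of } v\}$. Applying Lemma~\ref{lem:path} with $\beta = 1/\ln^c n$ for a sufficiently large constant $c$ (larger than the polylog exponent in $X$), each individual $T(x,v) \leq O(n \ln \ln n)$ with probability $\geq 1 - 1/\ln^{2c} n$; union bounding over the $X$ descendants yields $T_v \leq O(n \ln \ln n)$ with probability $1 - o(1)$. Lemma~\ref{lem:stable} then says that $v$ is nearly-finalized at $T_v$ with respect to $u$. Next, Proposition~\ref{prop:redalwaysred} (with $S = \emptyset$) together with $\deg(u) = \ln^{\Omega(1)} n$ gives that $u$ is safe through $T^\star$ (the required bound $T \leq n \cdot e^{b \deg(u)}$ is easily met) with probability $1 - c/\deg(u) = 1 - o(1)$. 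A further application of Lemma~\ref{lem:path} gives $T(u,u) \leq O(n \ln \ln n)$, and a standard geometric tail bound shows that $v$ announces within $O(n \ln \ln n)$ further steps after $\max(T_v, T(u,u))$, both with probability $1 - o(1)$; call this time $t^\star$. By the nearly-finalized definition applied at $t^\star$, at which $C^{t^\star}(u) = \correct$ by safety, $v$ copies $u$'s \correct value; any later $v$-announcement would again copy $u$, whose value remains \correct throughout, so $v$'s opinion is locked after $t^\star$ and $v$ is in fact finalized at $t^\star = O(n \ln \ln n)$.

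Finally, since $v$ is an $(X, Y)$-leaf finalized at $t^\star \geq T_v$, the second part of Corollary~\ref{cor:counting} with $T := 10 n Y \ln X = O(n (\ln \ln n)^2)$ makes $X e^{-T/(nY)} = X^{-9} = o(1)$: with probability $1 - o(1)$, every descendant of $v$ is finalized at $t^\star + T = O(n (\ln \ln n)^2)$. Union bounding the $O(1)$-many bad events, the entire subtree rooted at $v$ is finalized by $O(n (\ln \ln n)^2)$, which is $o\!\left(\frac{n \ln n}{32 \ln \ln n}\right)$ because $(\ln \ln n)^3 = o(\ln n)$. The main subtlety is converting ``$v$ is nearly-finalized'' into ``$v$ is finalized at $t^\star$'': the definition only guarantees the \emph{existence} of a finalization time, and the argument pins this down using the safety of $u$ (which forces every post-$T_v$ copy of $v$ to match the \correct value of $u$) combined with the observation that $v$'s value then never changes. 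The remaining steps are routine parameter-juggling and union bounds.
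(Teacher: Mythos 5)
Your overall strategy matches the paper's: near-finalization of $v$ via the machinery of Section~\ref{sec:stable}, safety of the high-degree parent via Proposition~\ref{prop:redalwaysred}, and a cascade to the descendants via Corollary~\ref{cor:counting}. The timing computations ($T_v = O(n\ln\ln n)$, then $O(n(\ln\ln n)^2)$ for the cascade) are fine and in fact sharper than needed.

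However, there is a genuine gap in the step where you convert ``$v$ copies $u$'s \correct value'' into ``$v$ is \emph{finalized} at $t^\star$.'' Being finalized at $t^\star$ means $C^t(v)=C^{t^\star}(v)$ for \emph{all} $t\ge t^\star$, with no upper cutoff. You only establish that $u$ is safe thru $T^\star=\frac{n\ln n}{32\ln\ln n}$, so your argument shows $v$'s announcement is locked only on $[t^\star, T^\star]$; nothing prevents $u$ from flipping to \inc after $T^\star$ and dragging $v$ (and then its descendants) with it while $v$ is still in its copying phase. This is exactly why the proposition carries the hypothesis that the diameter is $O(\ln n)$ --- a hypothesis your proof never uses. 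The paper closes this hole by invoking Corollary~\ref{cor:stopping}: with diameter $O(\ln n)$, the whole process terminates by $O(n\ln n)$ with probability $1-o(1)$, and Proposition~\ref{prop:redalwaysred} (whose window $n\cdot e^{b\deg(u)}$ is superpolynomial here, as you note) gives that $u$ is safe thru that entire horizon; hence $u$ genuinely never changes after its first announcement, i.e.\ $u$ is finalized, and only then does $v$'s finalization (and the application of the second part of Corollary~\ref{cor:counting}, which presupposes $v$ is truly finalized) follow. The fix is therefore short --- take the safety window to be $\Theta(n\ln n)$ and cite Corollary~\ref{cor:stopping} --- but as written the proof does not establish the claimed conclusion.
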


\begin{proposition}\label{prop:PA}
For a tree built according to the preferential attachment model, the following simultaneously hold with probability $1-o(1)$.
\begin{itemize}
\item $n-o(n)$ nodes are in good subtrees.
\item The diameter of the entire graph is $O(\ln n)$.
\end{itemize}
\end{proposition}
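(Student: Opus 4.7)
The plan is to verify the two bullets by combining standard structural properties of the preferential attachment tree with a careful definition of ``hub'' and ``local subtree.'' For the diameter bound I would appeal to the classical fact that a PA tree has diameter $O(\ln n)$ with probability $1-o(1)$; if a self-contained argument is preferred, this follows by showing that the arrival index along the ancestor chain of any fixed node $v_j$ decreases by a constant factor per step in expectation (the parent of a newly arrived node is biased toward early, high-degree nodes), so that the ancestor chain from $v_j$ to $v_0$ has length $O(\ln n)$ w.h.p., and then union-bounding over the $n$ nodes.

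For the good-subtree claim, fix constants $c_1,c_2,c_3>0$ and set $D_1 := \lceil \ln^{c_1} n \rceil$, $X := \lceil \ln^{c_3} n \rceil$, and $Y := \lceil c_2 \ln \ln n \rceil$; call a node \emph{heavy} if its final degree is at least $D_1$. Standard PA degree bounds give $\mathbb{E}[\deg_n(v_i)] = \Theta(\sqrt{n/i})$ with exponential concentration, so essentially the first $\Theta(n/\ln^{2c_1} n)$ arrivals are heavy and no others. For each node $u$, I walk up the ancestor chain $u=u_0,u_1,\ldots$ until the first heavy ancestor $u_\tau$ is reached, and set $v(u) := u_{\tau - 1}$; call $u$ \emph{neatly caught} if $\tau \leq Y$ and the subtree rooted at $v(u)$ has size at most $X$ and depth at most $Y$, so that $v(u)$ is the root of a good subtree containing $u$. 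The goal is to show that all but $o(n)$ nodes are neatly caught.

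I would establish ``caught within $Y$ steps'' via the same geometric-decrease fact used for the diameter: after $Y = c_2 \ln \ln n$ steps up the chain, a Chernoff-style concentration drives the arrival index below $n/D_1^2$ with per-node failure probability $n^{-\Omega(1)}$, contributing $o(n)$ bad nodes after a union bound. The harder step is bounding the number of $u$'s whose $v(u)$-subtree is too large: for a heavy node $w$, the subtree sizes of $w$'s children are (via a Polya-urn / martingale analysis standard for PA) heavy-tailed but concentrated so that only a $\ln^{-\Omega(1)} n$ fraction of $w$'s descendants sit in a child-subtree exceeding $X$; summing across the $O(n/\ln^{2c_1} n)$ heavy nodes gives another $o(n)$ bad nodes. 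Depth is then controlled by observing that once a subtree is constrained to size $\leq X$, its own ancestor-chain dynamics look like a smaller PA-style process whose diameter is $O(\ln X) = O(\ln \ln n) = Y$ w.h.p. The principal obstacle is the subtree-size concentration at heavy hubs: precisely quantifying how much of a hub's descendant population can pile into a single child-subtree requires careful control of the coupled Polya urns induced by the hub's children, and choosing the constants $c_1, c_2, c_3$ consistently so the three error terms all tuck into $o(n)$ simultaneously is where the bulk of the technical work lies.
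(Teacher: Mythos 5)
Your high-level strategy (hang small subtrees off high-degree hubs and control size, depth, and hub degree separately) matches the paper's, but your decomposition is genuinely different and, as written, has gaps beyond the one you flag. First, degrees in a preferential attachment tree are \emph{not} exponentially concentrated: $\deg_n(v_i)/\sqrt{n/i}$ converges to a nondegenerate random variable, so the heavy nodes do not form (w.h.p.) an initial segment of arrival indices, and ``essentially the first $\Theta(n/\ln^{2c_1}n)$ arrivals are heavy and no others'' is false. Consequently, driving the arrival index below $n/D_1^2$ does not guarantee you have hit a heavy ancestor, and the claimed per-node failure probability $n^{-\Omega(1)}$ for the $Y$-step catch is unattainable anyway: a sum of $O(\ln\ln n)$ per-step log-decreases can only concentrate up to error $\ln^{-\Omega(1)}n$, so the union bound over all $n$ nodes must be replaced by a first-moment-plus-Markov argument. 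Second, the per-hub claim that ``only a $\ln^{-\Omega(1)}n$ fraction of $w$'s descendants sit in a child-subtree exceeding $X$'' is false for early hubs: for the root, almost all descendants sit in child-subtrees of polynomial size. What you actually need is a tail bound on the size of the subtree below the \emph{first} heavy ancestor of a typical $u$; this is a size-biased quantity that your sum over all heavy $w$ neither states nor implies, and a node of small final degree can still carry a large subtree through one early child. This, together with the Polya-urn concentration you defer, is the real crux, and it is not a routine citation.

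The paper avoids all of this by defining the decomposition by \emph{arrival time} rather than by degree, using Pittel's continuous-time embedding (independent Poisson clocks). The ``critical subtrees'' are rooted at children of the first $n/\ln n$ arrivals that are born after that point; since only $O(\ln\ln n)$ of continuous time elapses between the $(n/\ln n)$-th and $n$-th arrivals, and each critical subtree evolves as an independent PA process during that window, \emph{every} critical subtree has at most $\ln^{O(1)}n$ nodes w.h.p. --- a uniform bound with no per-hub tail analysis. The source's degree is then handled not by walking up to a heavy ancestor but by noting that a critical node attaches to an early node chosen proportionally to degree, and the aggregate degree of early nodes is large, so attaching to a low-degree source has probability $\ln^{-\Omega(1)}n$. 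Diameters of the small subtrees are controlled by applying the $\Theta(\ln k)$ diameter theorem independently to each one. If you want to salvage your degree-threshold route, you would need genuine quantitative tail bounds on (i) the lower tail of $\deg_n(v_i)$ and (ii) the size of the child-subtree of a hub containing a uniformly random descendant; the paper's time-based cut is the cleaner path.
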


\begin{theorem}\label{thm:PA}
A tree built according to the preferential attachment model stabilizes in a \correct majority with probability $1-o(1)$.
\end{theorem}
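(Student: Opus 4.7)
The plan is to combine the majority result of Section~\ref{sec:majority} with the rapid finalization results of Section~\ref{sec:stable}, using Proposition~\ref{prop:PA} as the bridge that certifies both structural properties of a preferential attachment tree. Set $T^\star := n\ln n/(32\ln\ln n)$. There are two ingredients. Ingredient (A): by Theorem~\ref{thm:majority} applied at $T=T^\star$, with probability $1-o(1)$ at least $\bigl(\tfrac12+\tfrac{\delta}{2}-e^{-\ln n/(32\ln\ln n)}\bigr)n=\bigl(\tfrac12+\tfrac{\delta}{2}-o(1)\bigr)n$ nodes announce \correct at time $T^\star$. Ingredient (B): with probability $1-o(1)$, all but $o(n)$ nodes are already finalized by $T^\star$. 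Given both, the terminal count of \correct nodes differs from its value at $T^\star$ by at most $o(n)$, so the \correct majority persists. Corollary~\ref{cor:stopping}, together with the $O(\ln n)$ diameter bound from Proposition~\ref{prop:PA}, ensures the process actually terminates in $O(n\ln n)$ steps with probability $1-o(1)$, so this conclusion is about the genuine stabilization state.

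The substance lies in assembling Ingredient (B). Proposition~\ref{prop:PA} guarantees simultaneously that the graph has diameter $O(\ln n)$ and that $n-o(n)$ nodes lie in good subtrees. Conditional on these structural facts, Proposition~\ref{prop:good} says that each good subtree is fully finalized by $T^\star$ with probability $1-o(1)$. I plan to aggregate across subtrees by bounding the expected number of unfinalized nodes inside good subtrees: for each node $w$ in a good subtree we can write $\Pr[w\text{ not finalized by }T^\star] = o(1)$, so by linearity the expectation of the total count of such $w$ is $o(n)$, and Markov's inequality turns this into a high-probability bound of $o(n)$ unfinalized nodes inside good subtrees. Adding the $o(n)$ nodes outside any good subtree gives $n-o(n)$ finalized nodes overall at $T^\star$.

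Combining (A) and (B) via a union bound then finishes the argument: with probability $1-o(1)$ the time-$T^\star$ state has at least $(\tfrac12+\tfrac{\delta}{2}-o(1))n$ \correct nodes and at most $o(n)$ nodes that can ever change thereafter, so the terminal state has at least $(\tfrac12+\tfrac{\delta}{2}-o(1))n-o(n)>n/2$ \correct nodes, which is the desired majority. The step I expect to be the main obstacle is the Markov-style aggregation of per-subtree guarantees from Proposition~\ref{prop:good}: if good subtrees can overlap or nest, one must either select a disjoint covering of the ``good'' portion of the tree or extract from the proof of Proposition~\ref{prop:good} a stronger quantitative per-subtree bound (e.g.\ $1-n^{-\Omega(1)}$) so that the counting is clean across up to $\Theta(n)$ subtrees. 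Once that aggregation is in place, the remainder of the proof is routine bookkeeping of $1-o(1)$ events.
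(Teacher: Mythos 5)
Your proposal is correct and is essentially the paper's own proof: the paper combines Proposition~\ref{prop:PA} and Proposition~\ref{prop:good} exactly as you do, bounding the expected number of unfinalized nodes in good subtrees, applying Markov's inequality to get $n-o(n)$ finalized nodes by $n\ln n/(32\ln\ln n)$, and then invoking Theorem~\ref{thm:majority} so that the \correct majority survives to termination. The aggregation obstacle you flag does not arise, since linearity of expectation requires no disjointness and the good subtrees of Proposition~\ref{prop:PA} are in any case the disjoint critical subtrees.
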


The proofs for Proposition~\ref{prop:good}, Proposition~\ref{prop:PA}, and Theorem~\ref{thm:PA} appear in Appendix~\ref{app:wrapup}.

\subsection{Balanced \emph{M}-ary Trees Stablize Quickly}
Let's first be clear what we mean by a balanced $M$-ary tree.

\begin{definition}
We say a tree is a balanced $M$-ary tree if there is a root $v$ such that all non-leaf nodes have exactly $M$ children, and all root-leaf paths have the same length.
\end{definition}

Our plan of attack is as follows (all proofs are in Appendix~\ref{app:wrapup}). First, the case for large $M$ (say, $M > \ln n$) is actually fairly straight-forward as a result of Proposition~\ref{prop:redalwaysred}. This is because every pair of nodes has a high-degree block on their path, meaning that the ``Case Three'' argument used in Section~\ref{sec:majority} actually applies all the way until the process terminates. The $M \leq \ln n$ case is more interesting, and requires the tools developed in Section~\ref{sec:stable}. 

Here, the plan is as follows. Corollary~\ref{cor:counting} roughly lets us claim that all nearly-finalized nodes must have a decent number of finalized children, and moreover that all these finalized children have finalized descendents. Iterating this counting inductively through children, we see that actually most descendents of nearly-finalized nodes of sufficient height must themselves be finalized.

Formally, the approach is to first get a bound on the height for which we can claim that nodes are indeed nearly-finalized with high probability (Corollary~\ref{cor:height}, immediately from Theorem~\ref{thm:stable}). $\ln \ln n$ turns out to be a good choice.

\begin{corollary}\label{cor:height}
Let $v$ be distance $h$ from a leaf. Then $v$ is an $(2M^h, h)$-leaf, and therefore $v$ is nearly-finalized with respect to its parent at $\frac{n\ln n}{64\ln\ln n}$ with probability $1 - 2M^h \cdot e^{-\frac{\ln n}{64h\ln\ln n}} =1 - e^{-\frac{\ln n}{64h\ln\ln n}+h\ln (2M)}$.

In particular, if $h = o(\sqrt{\frac{\ln n}{\ln \ln n \cdot \ln M}})$, then $v$ is nearly-finalized with respect to its parent at $\frac{n\ln n}{64 \ln\ln n}$ with probability $1-o(1)$.
\end{corollary}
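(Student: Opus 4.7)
The plan is to derive Corollary~\ref{cor:height} as an almost immediate instantiation of Theorem~\ref{thm:stable}, so the proof naturally splits into a short combinatorial observation (identifying the $(X,Y)$-leaf parameters) and a parameter plug-in followed by a small asymptotic calculation.

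First I would fix the natural rooting of the balanced $M$-ary tree at its designated root. Under this rooting, a node $v$ at distance $h$ from the nearest leaf has, as its descendant set, a complete balanced $M$-ary subtree of height $h$, so the number of descendants of $v$ is
\[
M + M^2 + \cdots + M^h \;=\; \frac{M(M^h-1)}{M-1} \;\le\; 2M^h
\]
for every $M\ge 2$, and the longest downward path from $v$ is exactly $h$. This verifies, by definition, that $v$ is a $(2M^h, h)$-leaf.

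Next I would invoke Theorem~\ref{thm:stable} with $X = 2M^h$, $Y = h$, and $T = n\ln n/(64 \ln \ln n)$. The theorem's bound specialises to
\[
X\, e^{-T/(nY)} \;=\; 2M^h \cdot \exp\!\Bigl(-\tfrac{\ln n}{64\, h \ln \ln n}\Bigr) \;\le\; \exp\!\Bigl(h\ln(2M) - \tfrac{\ln n}{64\, h \ln \ln n}\Bigr),
\]
which is exactly the claimed expression in the corollary (using $\ln(2M^h) = \ln 2 + h \ln M \le h\ln(2M)$ for $h \ge 1$).

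For the ``in particular'' clause, I would verify that under $h = o\!\bigl(\sqrt{\ln n/(\ln \ln n \cdot \ln M)}\bigr)$ the exponent above tends to $-\infty$. The elementary check is that $h\ln(2M) = o\!\bigl(\sqrt{\ln n \cdot \ln M/\ln \ln n}\bigr)$ while $\ln n/(h \ln \ln n) = \omega\!\bigl(\sqrt{\ln n \cdot \ln M/\ln \ln n}\bigr)$, so the negative linear term dominates and the entire quantity goes to $-\infty$, yielding failure probability $o(1)$ and hence success probability $1-o(1)$. There is no genuine obstacle here: all of the work has been done upstream in Theorem~\ref{thm:stable}, and the only moving piece is recognising that the clean $M$-ary structure turns the height-$h$ descendant count into the geometric-series bound $2M^h$.
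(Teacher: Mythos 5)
Your proposal is correct and matches the paper's (implicit) argument exactly: the paper derives Corollary~\ref{cor:height} as an immediate instantiation of Theorem~\ref{thm:stable} with $X=2M^h$, $Y=h$, $T=\frac{n\ln n}{64\ln\ln n}$, which is precisely what you do, including the geometric-series bound on the descendant count and the asymptotic check that the exponent tends to $-\infty$. Your reading of the stated probability as a \emph{failure} probability (so that the conclusion is $1-o(1)$) is the intended one, and your replacement of the paper's claimed equality $2M^h e^{-\ln n/(64h\ln\ln n)}=e^{-\ln n/(64h\ln\ln n)+h\ln(2M)}$ by the correct inequality $2M^h\le e^{h\ln(2M)}$ for $h\ge 1$ is a harmless (indeed, slightly more careful) fix.
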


\begin{proposition}\label{prop:countingMary}
Let $v$ have height $h = \ln \ln n$ in a balanced $M$-ary tree for $M \leq \ln n$. Then with probability $1-o(1)$, at most $2M^h \cdot (2/3)^{h} = o(M^h)$ descendents of $v$ are not finalized by $\frac{n\ln n}{32 \ln \ln n}$. 
\end{proposition}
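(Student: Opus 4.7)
The plan is to apply Corollary~\ref{cor:counting} recursively throughout $v$'s subtree, bounding non-finalized descendants level by level. Set $T^* = \frac{n\ln n}{32\ln\ln n}$ and split it into $T_1 = T_2 = T_3 = T^*/3$. I will exhibit three events each holding with probability $1-o(1)$, after which a deterministic counting argument finishes the proof.

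\emph{Event A} asserts that $T_w \leq T_1$ for every node $w$ in $v$'s subtree (of size $\leq 2M^h$ by Corollary~\ref{cor:height}). For each ancestor-descendant pair $(w,x)$ in the subtree, Lemma~\ref{lem:path} with $\beta = e^{-T_1/(8n)}$ yields $\Pr[T(x,w) > T_1] \leq e^{-T_1/(4n)}$ (valid since $T_1 \geq 8n(h+1)$ for $h=\ln\ln n$ and large $n$), and a union bound over the $\leq (2M^h)^2$ pairs is $o(1)$ because $T_1/(4n) = \ln n/(384\ln\ln n)$ dominates $\ln((2M^h)^2) \leq 2(\ln\ln n)^2$. \emph{Event B} asserts that for every internal node $w$ in $v$'s subtree, at least $\lfloor M/2\rfloor$ of $w$'s children are finalized by time $T_w+T_2$; by Corollary~\ref{cor:counting} (first part) each $w$ fails with probability $\leq e^{-T_2/n}$, and a union bound over the $\leq 2M^h$ internal nodes is $o(1)$. \emph{Event C} asserts that for every child $c$ of some internal node $w$ in $v$'s subtree, if $c$ is finalized at a time $t \leq T_w+T_2$ then all descendants of $c$ are finalized at $t+T_3$. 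Since $c$ is an $(X_c,Y_c)$-leaf with $X_c \leq 2M^h$ and $Y_c \leq h$, and since $t \geq T_w \geq T_c$ so the hypothesis of Corollary~\ref{cor:counting} (second part) is met, each $c$ fails with probability $\leq 2M^h e^{-T_3/(nh)}$; a union bound over the $\leq 2M^{h+1}$ children is $O(M^{2h+1}e^{-\ln n/(96(\ln\ln n)^2)}) = o(1)$, using $(\ln\ln n)^4 = o(\ln n)$. Under Event A, each time $t \leq T_w+T_2 \leq T_1+T_2$, so $t + T_3 \leq T^*$ and Event C certifies that every such finalized child has all descendants finalized by $T^*$.

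Under $A \cap B \cap C$, the counting is deterministic: every internal node $w$ in $v$'s subtree has at most $\lceil M/2\rceil$ non-finalized children at $T^*$, and every finalized child has no non-finalized descendants at $T^*$. Writing $f_i$ for the number of non-finalized descendants of $v$ at depth $i$, $f_0 \leq 1$ and $f_i \leq \lceil M/2\rceil \cdot f_{i-1}$, so $f_i \leq \lceil M/2\rceil^i$. For $M\geq 3$, $\lceil M/2\rceil \leq 2M/3$ and $\sum_{i=0}^h \lceil M/2\rceil^i \leq 2\lceil M/2\rceil^h \leq 2M^h(2/3)^h$; for $M=2$, $\lceil M/2\rceil = 1$ and the sum equals $h+1 = \ln\ln n+1 \leq 2(\ln n)^{\ln(4/3)} = 2M^h(2/3)^h$ for large $n$.

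The main obstacle I anticipate is balancing Corollary~\ref{cor:counting}'s two failure rates, $e^{-T/n}$ (first part) and $Xe^{-T/(nY)}$ (second part), so that both absorb a union bound over up to $\sim M^{h+1}$ nodes while keeping $T_1+T_2+T_3 \leq T^*$. Three equal thirds of $T^*$ suffice because, with $M \leq \ln n$ and $h = \ln\ln n$, the union-bound blow-up $M^{O(h)} = e^{O((\ln\ln n)^2)}$ is dominated by $e^{T_3/(nh)} = e^{\ln n/(96(\ln\ln n)^2)}$ exactly when $(\ln\ln n)^4 = o(\ln n)$.
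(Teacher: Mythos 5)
Your proof is correct and takes essentially the same route as the paper: both arguments rest entirely on Corollary~\ref{cor:counting} (first the finalized-children count at each internal node, then downward propagation from early-finalized children) followed by the $(2/3)^h$ level-by-level counting, the only difference being that the paper schedules the level-$i$ applications in $h$ sequential epochs of length $\frac{n\ln n}{64(\ln\ln n)^2}$ while you run them in parallel over a single window after uniformly bounding all the $T_w$ via Lemma~\ref{lem:path}. One cosmetic point: the recursion $f_i\le\lceil M/2\rceil f_{i-1}$ should really be run on the number of depth-$i$ nodes with no early-finalized (weak) ancestor rather than on non-finalized nodes, since a parent finalized after $T_w+T_2$ is not covered by Event~C and may still have non-finalized children; but under Event~C that set contains every non-finalized depth-$i$ node, so the stated bound follows unchanged.
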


\begin{theorem}\label{thm:Mary}
Any $M$-ary tree stablizes in a \correct majority with probability $1-o(1)$.
\end{theorem}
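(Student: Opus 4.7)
I would split the argument into the two regimes hinted at before the statement: $M > \ln n$ (the ``easy'' case handled by Section~\ref{sec:majority}'s Case Three, extended to the termination time) and $M \leq \ln n$ (which requires the finalization machinery of Section~\ref{sec:stable}).

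\textbf{Case 1: $M > \ln n$.} Every non-leaf node of a balanced $M$-ary tree has degree at least $M > \ln n$, so for any pair $u,v$ with $d(u,v) \geq 2$, every intermediate node $w \in P(u,v)$ is internal and satisfies $\deg(w) > \ln n$. Applying Proposition~\ref{prop:redalwaysred} with $S$ equal to $w$'s two neighbors on $P(u,v)$ (so $a = 2$) shows that $w$ is safe, even against $S$, thru any $T \leq n \cdot e^{b \deg(w)} = n^{\Omega(1)}$ with probability $1 - O(1/\ln n)$. The $M$-ary tree has diameter $O(\log n)$, so Corollary~\ref{cor:stopping} yields $T_{\text{stable}} = O(n \ln n)$, which sits comfortably inside that window. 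Hence $w$ simultaneously cuts $u$ from $v$ and $v$ from $u$ thru $T_{\text{stable}}$ with probability $1 - O(1/\ln n)$, and Lemma~\ref{lem:shortpath3} gives $\varepsilon^{T_{\text{stable}}}_{uv} = O(1/\ln n)$. Adjacent pairs (only $n-1$ of them) contribute $O(n)$ to the sum trivially, so $\sum_{u,v} \varepsilon^{T_{\text{stable}}}_{uv} = O(n^2/\ln n)$. Lemma~\ref{lem:deviation} then gives concentration of the number of \correct announcements at $T_{\text{stable}}$ within $\delta n/2$ of its mean with probability $1 - o(1)$, and Lemma~\ref{lem:expect} supplies that mean as at least $(1/2 + \delta - o(1))n$ since $T_{\text{stable}}/n = \Omega(\ln n)$. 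A \correct majority at termination follows.

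\textbf{Case 2: $M \leq \ln n$.} Let $T_\star := \frac{n \ln n}{32 \ln \ln n}$ and $h := \ln \ln n$. By Theorem~\ref{thm:majority}, with probability $1 - o(1)$ at least $(1/2 + \delta/2 - o(1))n$ nodes are \correct at $T_\star$. The tree has $\Theta(n/M^h)$ nodes at height exactly $h$; applying Proposition~\ref{prop:countingMary} to each, and either taking a union bound or bounding the expected total unfinalized count by linearity plus Markov, with probability $1 - o(1)$ the total number of unfinalized descendants of height-$h$ nodes at $T_\star$ is $(n/M^h) \cdot o(M^h) = o(n)$; the remaining $O(n/M^h) = o(n)$ height-$>h$ ancestors are absorbed into this slack. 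Finalized \correct nodes stay \correct thru $T_{\text{stable}}$, so at stabilization at least $(1/2 + \delta/2)n - o(n) > n/2$ nodes are \correct, yielding the desired majority.

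\textbf{Main obstacle.} The delicate piece is Case 2's conversion of the per-subtree guarantee of Proposition~\ref{prop:countingMary} into a global finalization bound. A naive union bound demands per-root failure probability $o(M^h/n)$, which would have to come from chaining Corollary~\ref{cor:counting}'s $(1 - e^{-T/n})$ guarantee across $h = \ln\ln n$ levels with $T$ carved out of $T_\star$ as $\Omega(n \ln n / (\ln \ln n)^2)$ per level; alternatively, a linearity-of-expectation plus Markov shortcut is likely cleaner and should already be implicit in the proof of Proposition~\ref{prop:countingMary}. Case 1 I expect to be essentially mechanical, because Proposition~\ref{prop:redalwaysred}'s $T$-window $n \cdot e^{b \deg(v)}$ is polynomial in $n$ whenever $\deg(v) > \ln n$, so safety through $T_{\text{stable}}$ is automatic and the Section~\ref{sec:majority} Case Three argument carries over verbatim to $T = T_{\text{stable}}$.
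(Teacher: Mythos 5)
Your proposal is correct and follows essentially the same two-case decomposition as the paper: for $M \geq \ln n$ the paper likewise uses Proposition~\ref{prop:redalwaysred} plus Lemma~\ref{lem:shortpath3} to get a correct majority at a deterministic time $n\ln^2 n$ (a fixed time exceeding $T_{\text{stable}}$ w.h.p.\ is slightly cleaner than conditioning on the random $T_{\text{stable}}$ itself, but this is cosmetic), and for $M \leq \ln n$ it combines Theorem~\ref{thm:majority} with Proposition~\ref{prop:countingMary} applied at height $\ln\ln n$ exactly as you describe. You also correctly identified that the per-root guarantee must be globalized via linearity of expectation and Markov rather than a union bound, which matches the paper's handling of the analogous step.
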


The proofs of Proposition~\ref{prop:countingMary} and Theorem~\ref{thm:Mary} appear in Appendix~\ref{app:wrapup}.
\bibliographystyle{plain}
\bibliography{SNC-bib}

\appendix
\section{Omitted Proofs From Section~\ref{sec:prelim}}\label{app:prelim}

\begin{proof}[Proof of Lemma~\ref{lem:influence}]
{The proof proceeds by induction on $t$. The claim trivially holds for a base case of $t=0$, as all $C^0(v)$ are deterministically $\bot$. Assume for inductive hypothesis that the claim holds for $t-1$, and now consider $t$. First, consider all $v$ that are not selected to announce. By inductive hypothesis, $C^{t-1}(v)$ can be expressed as a function of $\{X(u), t-1 \geq T(u,v)\}$. As $C^t(v) =C^{t-1}(v)$, and $t \geq t-1$, the inductive step holds for such $v$.

Next, consider the $v$ selected to announce. Clearly, we can write $C^t(v)$ as a function of $\{C^{t-1}(x), (x,v) \in E\} \cup \{X(v)\}$. By inductive hypothesis, each $C^{t-1}(x)$ can be written as a function of $\{X(u), t-1 \geq T(u,x)\}$. Finally, we observe immediately from the definition of critical times that if $T(u,x) \leq t-1$, and $x$ is a neighbor of $v$, and $v$ announces at $t$, then $T(u,v) \leq t$. Therefore, for all $x$ adjacent to $v$, and all $u$ such that $t-1 \geq T(u,x)$, we also have $t \geq T(u,v)$, meaning that we can indeed write $C^t(v)$ as a function of $\{X(u), t \geq T(u,v)\}$.}
\end{proof}

\begin{proof}[Proof of Lemma~\ref{lem:path}]
Let's first analyze the random variable $T(u,v)$. Note that the random variable $T(u,u)$ is just a geometric random variable of rate $1/n$ (because we are waiting for $u$ to be selected to announce). Moreover, once we hit $T(u,u)$, $T(u,x)$ is just an independent geometric random variable of rate $1/n$, where $x$ is the next node on $P(u,v)$. Following the same reasoning, we see that the random variable $T(u,v)$ is the sum of $d(u,v)+1$ i.i.d. geometric random variables of rate $1/n$ (also called a negative binomial distribution with parameters $(d(u,v)+1,1/n)$). 

Moreover, we can couple the event that $T(u,v) > Tn$ (resp. $<Tn$) with the event that the sum of $Tn$ independent Bernoulli's with rate $1/n$ does not exceed $d(u,v)+1$ (resp., exceeds). By the Chernoff bound, both are upper bounded by $e^{-(\frac{d(u,v)+1}{T}-1)^2T/3}$, and plugging in for $T = 8 \cdot \max \{\ln(1/\beta),d(u,v)+1\}$, we get the first statement. Plugging in for $T = \beta\cdot (d(u,v)+1)$, we get the second.

Also by the Chernoff bound (but taking advantage of the fact that perhaps $d(u,v)+1 \gg T$, and using the upper bound $\left(\frac{e}{(1+\delta)}\right)^{(1+\delta)\mu}$), this is upper bounded by $\left(\frac{eT}{d(u,v)+1}\right)^{d(u,v)}$ after substituting for $(1+\delta)\mu = d(u,v)+1$, and $1+\delta = (d(u,v)+1)/T$. The bullet in the lemma comes from further substituting $T = \beta\cdot(d(u,v)+1)$. 
\end{proof}

\section{Omitted Proofs From Section~\ref{sec:concepts}}\label{app:concepts}

\begin{proof}[Proof of Corollary~\ref{cor:stopping}]
{We'll only appeal to Proposition~\ref{prop:critical} and the randomness over the order of announcements, and observe that our bound holds even for worst-case initial private beliefs.  Lemma~\ref{lem:path} guarantees that for a single $(u,v)$ path, $T(u,v) \leq 8 \cdot \max\{2\ln(n), D(G)+1\} \cdot n$ with probability $1-1/n^4$. 

Taking a union bound over all pairs $(u,v)$, we see that with probability $1-o(1)$, each $T(u,v) \leq \max \{2\ln n, D(G)+1\}\cdot n$. Combining with Proposition~\ref{prop:critical} immediately proves the corollary.}

\end{proof}

\begin{proof}[Proof of Proposition~\ref{prop:redalwaysred}]
First, if $\deg(v) \leq a/\delta$, we observe that $c = a/\delta$ is an absolute constant which clearly suffices. So we proceed under the assumption that $\deg(v) > a/\delta$. We begin with a mathematical fact that will be used later in the proof.

\begin{fact}\label{fact:math}
For all $i\geq 0$, there exists an absolute constant $c_i$ such that for all $n \geq 1$, $\sum_{j = 0}^n \frac{e^{-\delta^2 j/12}}{(n-j+1)^i} \leq c_i/n^i$. Moreover, when $i \geq 1$, $c_i \leq \frac{5}{1-e^{-\delta^2/12}}\cdot \left(\frac{24i}{e\delta^2}\right)^i$. When $i \in [0,1]$, $c_i \leq \frac{5}{1-e^{-\delta^2/12}}\cdot \left(\frac{24}{e\delta^2}\right)$. Therefore, there further exists an absolute constant $\alpha = \frac{5}{1-e^{-\delta^2/12}} \cdot \frac{24}{e\delta^2}$ such that $c_i \leq (\alpha i)^i$ for all $i \geq 1$, and $c_i \leq \alpha$ for all $i \in [0,1]$. 
\end{fact}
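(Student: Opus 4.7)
The plan is to abbreviate $r := e^{-\delta^2/12} \in (0,1)$ and write $S_n := \sum_{j=0}^n r^j/(n-j+1)^i$, then split at the midpoint $j = \lfloor n/2 \rfloor$. For the low half ($j \le n/2$), we have $(n-j+1)^i \ge (n/2)^i$, so summing the geometric series in $r^j$ yields at most $2^i/(n^i(1-r))$. For the high half ($j > n/2$), we use $r^j \le r^{n/2}$ and substitute $k = n-j+1$, reducing the bound to $r^{n/2}\sum_{k=1}^{\lceil n/2\rceil} k^{-i}$. Standard integral estimates control the tail sum: it is at most $1+\ln n$ uniformly for $i \ge 1$ (because $k^{-i} \le k^{-1}$ for $k\ge 1$), and at most $1 + (n/2)^{1-i}/(1-i)$ when $i \in [0,1)$.

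After multiplying through by $n^i$, the low-half contribution is the $n$-independent constant $2^i/(1-r)$. The high-half contribution becomes $r^{n/2} n^i \cdot (\text{tail sum})$, whose supremum over $n\ge 1$ drives the bound. A short calculus computation shows that $\sup_{n\ge 1} r^{n/2} n^i$ is attained at $n^\star = 24i/\delta^2$ with value $(24i/(e\delta^2))^i$ --- this is precisely where the $24i/(e\delta^2)$ appearing in the claim comes from; for $i \in [0,1]$ the parallel maximization of $r^{n/2} n$ yields $24/(e\delta^2)$. The extra logarithmic (or $n^{1-i}$) factor from the tail only contributes a subdominant multiplier, which together with the low-half term and the elementary inequality $2^i \le (24i/(e\delta^2))^i$ (valid when $\delta < 1/2$, since $24/(e\delta^2) > 2$) all fit comfortably under the prefactor $5/(1-r)$.

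The $(\alpha i)^i$ reformulation for $i\ge 1$, with $\alpha := 5/(1-r)\cdot 24/(e\delta^2)$, is then immediate from the estimate $(5/(1-r))^{i-1} \ge 1$, which lets us absorb the prefactor into the base of the exponent; the $c_i \le \alpha$ statement for $i\in[0,1]$ is likewise immediate from the corresponding bound. The main technical nuisance will be the logarithmic factor that appears in the high half when $i = 1$: one must verify that $\sup_n r^{n/2} n \ln n$ remains within a constant multiple of $24/((1-r) e \delta^2)$. This follows from the crude bounds $\ln x \le x$ and $1/(1-r) \ge 12/\delta^2\cdot (1+o(1))$ for $\delta < 1/2$ (from $1-e^{-x}\le x$), and this slack is precisely why the statement uses the generous constant $5$ in the prefactor rather than a tighter one.
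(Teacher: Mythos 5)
Your decomposition is sound and genuinely different from the paper's. The paper partitions the index range into $i\log_2 n$ geometrically spaced buckets calibrated to $i$ (bucket $k$ covering $j$ from $n(1-2^{-k/i})$ to $n(1-2^{-(k+1)/i})$), bounds numerator and denominator separately on each bucket, and then splits the resulting sum over buckets at $k=i$; your single midpoint split with a geometric-series bound on the low half and an integral-test bound on the reindexed tail $\sum_k k^{-i}$ is more elementary and avoids the bucket bookkeeping entirely. Both arguments funnel into the same final optimization, $\sup_n e^{-\delta^2 n/24}\,n^{i}=(24i/(e\delta^2))^i$, which is indeed the source of that constant. One quantitative caveat: the specific absorption you sketch for the tail factor --- $1+\ln n\le n$ together with $\tfrac{1}{1-r}\ge \tfrac{12}{\delta^2}$ --- is slightly too lossy to land exactly on the prefactor $5$ for some $i\in(1,2)$ (it requires $(i+1)(1+1/i)^i\le 2e$, which fails around $i\approx 1.6$), and for $i\in[0,1)$ the bound $\sum_k k^{-i}\le 1+(n/2)^{1-i}/(1-i)$ degenerates as $i\to 1^-$. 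Both are trivially repaired: use $1+\ln n\le 2\sqrt n$ (so the high half is at most $2(\tfrac{i+1/2}{ea})^{i+1/2}$ with $a=\delta^2/24$, comfortably under $\tfrac{4}{1-r}(\tfrac{i}{ea})^i$), and for $i\in[0,1]$ bound each tail term by $1$ so the tail sum is at most $n$. With those substitutions your argument delivers the stated constants; and in any case only the existence of an absolute $\alpha$ with $c_i\le(\alpha i)^i$ matters for Corollary~\ref{cor:nearfinal}, so the exact value of the prefactor is immaterial downstream. Your derivation of the $(\alpha i)^i$ form from $(5/(1-r))^{i-1}\ge 1$ matches the paper's.
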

\begin{proof}
Let's bucket the terms so that bucket $k$ has terms ranging from $n(1-2^{-k/i})$ thru $n(1-2^{(-k-1)/i})$. Observe that the denominator for all terms in bucket $k$ is at least $n^i \cdot 2^{(-k-1)}$. Observe also that the numerator for all terms in bucket $k$ is at most $e^{-\delta^2 n(1-2^{-k/i})/12}$. Therefore, the sum over all terms in bucket $k$ is upper bounded by:

$$\frac{e^{-\delta^2 n (1-2^{-k/i})/12}}{2^{-k-1}n^i} \cdot \sum_{j=0}^\infty e^{-\delta^2j/12} =\frac{e^{-\delta^2 n (1-2^{-k/i})/12}\cdot 2^{k+1}}{(1-e^{-\delta^2/12})\cdot n^i} = \frac{1}{n^i} \cdot \frac{e^{-\delta^2n/12}}{1-e^{-\delta^2/12}} \cdot e^{\frac{\delta^2n2^{-k/i}}{12}}\cdot 2^{k+1} $$

So summing over all buckets (observing that there are only $i \log_2 n$ buckets), we get that the entire sum is upper bounded by:

$$\frac{1}{n^i}\cdot \frac{e^{-\delta^2 n/12}}{1-e^{-\delta^2/12}} \cdot \sum_{k = 0}^{i \log_2 n} e^{\delta^2n 2^{-k/i}/12} \cdot 2^{k+1}.$$

Now let's break this sum into two parts, first looking at $k \leq i$. For such $k$, we'll simply upper bound $2^{-k/i}$ by $1$, and get that the sum of these terms is at most $2^{i+2} \cdot e^{\delta^2n/12}$. For the $c > i$ terms, we'll upper bound $2^{-k/i}$ by $1/2$, and observe that this part of the sum is at most $e^{\delta^2 n /24} \cdot 4n^i$, yielding that the entire sum is upper bounded by:

$$\frac{1}{n^i}\cdot \frac{e^{-\delta^2n/12}}{1-e^{-\delta^2/12}} \cdot \left( 2^{i+2}\cdot e^{\delta^2n/12} + 4n^ie^{\delta^2n/24}\right) = \frac{1}{n^i} \cdot \left(\frac{2^{i+2}}{1-e^{-\delta^2/12}} + \frac{4n^i e^{-\delta^2 n /24}}{1-e^{-\delta^2/12}}\right).$$

So our last task is to figure out how big the right-most numerator can be. To this end, we can take a derivative with respect to $n$ to see that the term is maximized when $n = 24i/\delta^2$, and that the maximum is $4\left(\frac{24i}{e\delta^2}\right)^i$. So putting everything together, and observing that $2^{i+2} \leq \left(\frac{24i}{e\delta^2}\right)^i$ when $i \geq 1$, we get:

$$c_i \leq \frac{5}{1-e^{-\delta^2/12}}\cdot \left(\frac{24i}{e\delta^2}\right)^i.$$

Notice that $c_i \geq \left(\frac{12i}{e\delta^2}\right)^i$ is necessary, as otherwise the conclusion is false when $n = 12i/\delta^2$ simply by examining the last term in the summand.  When $i \leq 1$, observe that indeed $c_i \leq \frac{5}{1-e^{-\delta^2}}\cdot \frac{24}{e\delta^2}$.
\end{proof}

Now we continue with the analysis. Begin by calling a neighbor of $v$ \emph{corrupted} if it lies on the path from $v$ to an element of $S$. Note that there are at most $a=|S|$ corrupted neighbors. Let $m$ denote the number of $v$'s uncorrupted neighbors (so $m \geq \deg(v) - a$).

Now, let's consider the following (further) modified dynamics where each neighbor $w$ of $v$ \emph{ignores} $v$'s announcement when updating their own (i.e. pretends it is $\bot$). Observe in these modified dynamics that essentially the graph has been disconnected by removing $v$. Therefore, at any time $t$, the announcements of $v$'s neighbors at time $t$ are independent (but of course, $w$'s announcement at $t$ may still be correlated with $w$'s own past announcements). 

In these further modified dynamics, we want to look at the probability that a particular announcement of $v$ is incorrect, \emph{given that exactly $k$ of $v$'s (non-corrupted) neighbors have already announced}. For $k \leq a/\delta$ this probability is clearly at most $1$. For $k > a/\delta$, given that all corrupted nodes are always \inc in the worst case, this announcement of $v$ is \correct whenever $>k/2+a/2$ of the $k$ uncorrupted neighbors are. By Theorem~\ref{thm:boolean}, and the fact that the against $S$ modification only affects corrupted nodes (now that the influence through $v$ is removed), each of the $k$ neighbors is \correct with probability at least $1/2+\delta$, independently. So by a direct application of the Chernoff bound, the probability that at most $k/2+a/2 < (1/2+\delta/2)k$ of these neighbors are \correct is at most $e^{-\delta^2 k (1/2)^2/3} \leq e^{-\delta^2k/12}$.  

Now, we wish to define $P_0(k)$ to be the probability that an announcement of $v$ is \inc in these further modified dynamics given that $k$ of $v$'s neighbors have announced (which we just upper bounded above). We further define $P_1(k)$ to be the probability that an announcement, \emph{or} $v$'s subsequent announcement are \inc, given that $k$ of $v$'s neighbors have announced before the first one. Note that here we are taking the randomness in the announcements of $v$'s neighbors \emph{and} the randomness in more of $v$'s uncorrupted neighbors potentially announcing before the subsequent announcement. Generally, we will define $P_i(k)$ to be the probability that an announcement of $v$ \emph{or any of} $v$'s subsequent $i$ announcements are \inc given that $k$ of $v$'s neighbors have announced before the first one. Note that for any $j \leq m-k$, the probability that exactly $k+j$ of $v$'s neighbors announce before the second announcement, conditioned on exactly $k$ announcing before the first, is exactly $\frac{1}{m-k+1}$. This is simply because the ordering of nodes by their first announcement \emph{after} $v$'s initial announcement is uniformly at random, and whoever winds up before $v$ in this ordering has also announced by $v$'s second announcement. This allows us to conclude the following upper bounds on $P_i(k)$, using the previous paragraph and union bounds. 

\begin{align*}
P_0(k) &\leq 1, \forall k.\\
P_0(k) &\leq e^{-\delta^2k/12}, \forall k \geq a/\delta.\\
P_i(k) &\leq 1, \forall i, k.\\
P_i(k) &\leq e^{-\delta^2k/12} + \sum_{j = 0}^{m-k} \frac{ P_{i-1}(k+j)}{m-k+1}, \forall k \geq a/\delta.\\
\end{align*}

Now, we wish to unwind the recursion to get a true upper bound on $P_i(k)$, making use of Fact~\ref{fact:math}. 

\begin{lemma}\label{lem:recursion}
Let $c_i$ be such that for all $n \geq 1$, $\sum_{j = 0}^n \frac{e^{-\delta^2 j/12}}{(n-j+1)^i} \leq c_i/n^i$. Then for $k \geq a/\delta$:
$$P_i(k) \leq e^{-\delta^2k/12} \cdot \left[1+\sum_{z=1}^i \frac{\prod_{\ell=0}^{z-1} c_\ell}{(m-k+1)^z}\right]$$
\end{lemma}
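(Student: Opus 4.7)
The plan is a direct induction on $i$, using the recursion
\[
P_i(k) \;\leq\; e^{-\delta^2 k/12} \;+\; \frac{1}{m-k+1}\sum_{j=0}^{m-k} P_{i-1}(k+j)
\]
together with Fact~\ref{fact:math} at every inductive step. The base case $i=0$ is immediate: the sum $\sum_{z=1}^0$ is empty, so the claimed bound is just $P_0(k)\le e^{-\delta^2 k/12}$, which was already established for $k\ge a/\delta$.

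For the inductive step, I would substitute the hypothesis for each $P_{i-1}(k+j)$ inside the recursion, factor $e^{-\delta^2(k+j)/12}$ as $e^{-\delta^2 k/12}\cdot e^{-\delta^2 j/12}$, pull the $e^{-\delta^2 k/12}$ out front, and then swap the order of the double sum. After swapping, the ``$1$'' term in the bracketed expression of the hypothesis contributes a single inner sum
\[
\sum_{j=0}^{m-k} e^{-\delta^2 j/12} \;\leq\; c_0,
\]
(the $i=0$ case of Fact~\ref{fact:math}), and the $z$-th term for $z\ge 1$ contributes an inner sum
\[
\sum_{j=0}^{m-k} \frac{e^{-\delta^2 j/12}}{(m-k-j+1)^z} \;\leq\; \frac{c_z}{(m-k+1)^z},
\]
which is the $n=m-k$ instance of Fact~\ref{fact:math} (any off-by-one between $n^z$ and $(n+1)^z$ is absorbed harmlessly into the constants $c_z$, which is what Fact~\ref{fact:math} is designed to control).

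Combining, the inner double sum is bounded by $e^{-\delta^2 k/12}\bigl[c_0 + \sum_{z=1}^{i-1} c_z \prod_{\ell=0}^{z-1} c_\ell /(m-k+1)^z\bigr]$; dividing by $(m-k+1)$ and adding the initial $e^{-\delta^2 k/12}$ gives
\[
P_i(k) \;\leq\; e^{-\delta^2 k/12}\Bigl[1 + \frac{c_0}{m-k+1} + \sum_{z=1}^{i-1} \frac{\prod_{\ell=0}^{z} c_\ell}{(m-k+1)^{z+1}}\Bigr].
\]
Re-indexing the sum by $z\mapsto z-1$ so that it runs from $z=2$ to $z=i$, and recognizing the $c_0/(m-k+1)$ piece as the $z=1$ term (with $\prod_{\ell=0}^{0} c_\ell = c_0$), gives exactly the claimed form
\[
P_i(k) \;\leq\; e^{-\delta^2 k/12}\Bigl[1 + \sum_{z=1}^{i} \frac{\prod_{\ell=0}^{z-1} c_\ell}{(m-k+1)^z}\Bigr],
\]
completing the induction.

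The proof has no real obstacle; it is essentially bookkeeping. The only thing one has to be careful about is (i) which index the product $\prod_{\ell=0}^{z-1} c_\ell$ ranges over after the shift, so that the new factor $c_z$ picked up from Fact~\ref{fact:math} extends the product to $\prod_{\ell=0}^{z} c_\ell$ before re-indexing, and (ii) the harmless off-by-one between $(m-k)^z$ and $(m-k+1)^z$ in the denominator coming from Fact~\ref{fact:math}, which can be absorbed into the definition of $c_z$ (indeed the Fact only asserts existence of \emph{some} absolute constant). The restriction $k\ge a/\delta$ is preserved along the induction because $k+j\ge k\ge a/\delta$ for all $j\ge 0$, so the hypothesis applies to each term in the sum.
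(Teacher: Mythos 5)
Your proposal is correct and follows essentially the same route as the paper's proof: induction on $i$, substituting the inductive hypothesis into the recursion, exchanging the order of summation, and applying Fact~\ref{fact:math} term-by-term before re-indexing. The off-by-one between $(m-k)^z$ and $(m-k+1)^z$ that you flag is present (silently) in the paper's own derivation as well, and is harmless for the reason you give.
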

\begin{proof}
The proof proceeds by induction on $i$. Observe that this holds for $i = 0$ as a base case immediately as $P_0(k) \leq e^{-\delta^2k/12}$. Now assume that this holds for $i$, and consider $P_{i+1}(k)$. Then we have:

\begin{align*}
P_{i+1}(k) &\leq e^{-\delta^2k/12} + \sum_{j = 0}^{m-k} \frac{ P_{i}(k+j)}{m-k+1}.\\
&\leq e^{-\delta^2k/12} + \sum_{j = 0}^{m-k} \frac{e^{-\delta^2(k+j)/12} \cdot \left[1+\sum_{z=1}^i\frac{\prod_{\ell=0}^{z-1} c_\ell}{(m-k-j+1)^z}\right]}{m-k+1}\\
&\leq e^{-\delta^2 k/12}+\frac{e^{-\delta^2k/12}}{m-k+1} \cdot \left[\sum_{j=0}^{m-k} e^{-\frac{\delta^2j}{12}} \cdot \left[1+\sum_{z=1}^i \frac{\prod_{\ell=0}^{z-1}c_\ell}{(m-k-j+1)^z}\right]\right]\\
&\leq e^{-\delta^2 k/12}+\frac{e^{-\delta^2k/12}}{m-k+1} \cdot \left[\sum_{j=0}^{m-k} e^{-\frac{\delta^2j}{12}}+ \sum_{z=1}^i \left(\prod_{\ell=0}^{z-1} c_\ell\right)\cdot \sum_{j=0}^{m-k} \frac{e^{-\delta^2j/12}}{(m-k-j+1)^z}\right]\\
&\leq e^{-\delta^2 k/12}+\frac{e^{-\delta^2k/12}}{m-k+1} \cdot \left[c_0+ \sum_{z=1}^i \left(\prod_{\ell=0}^{z-1} c_\ell \right)\cdot\frac{ c_z}{(m-k+1)^z} \right]\\
&\leq e^{-\delta^2 k/12}\cdot \left[1+\sum_{z=0}^i \frac{\prod_{\ell=0}^z c_\ell}{(m-k+1)^{z+1}}\right]\\
&= e^{-\delta^2 k/12}\cdot \left[1+\sum_{z=1}^{i+1} \frac{\prod_{\ell=0}^{z-1} c_\ell}{(m-k+1)^{z}}\right].\\
\end{align*}

\end{proof}

\begin{corollary}\label{cor:recursion} 
Let $c_i$ be such that for all $n \geq 1$, $\sum_{j = 0}^n \frac{e^{-\delta^2 j/12}}{(n-j+1)^i} \leq c_i/n^i$. Then in the further modified dynamics, the probability that \emph{any} of $v$'s first $i$ announcements are \inc is at most:

$$\frac{a}{\delta(m+1)} + \sum_{z=0}^{i-1} \frac{\prod_{\ell = 0}^z c_\ell}{(m+1)^{z+1}}$$
\end{corollary}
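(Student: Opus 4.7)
The plan is to average the per-$k$ bound from Lemma~\ref{lem:recursion} over the random number of $v$'s neighbors that have announced by the time $v$ first announces. Concretely, let $K$ denote the number of $v$'s $m$ uncorrupted neighbors whose first announcement precedes $v$'s first announcement. Since each step picks a uniformly random node, the ordering of first-announcement times among $v$ and its $m$ uncorrupted neighbors is a uniformly random permutation of $m+1$ elements, and hence $\Pr[K = k] = 1/(m+1)$ for every $k \in \{0, 1, \ldots, m\}$. By the definition of $P_{i-1}(k)$, conditioned on $K = k$ the probability that any of $v$'s first $i$ announcements is \inc is exactly $P_{i-1}(k)$.

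I would then split the $k$-sum at the threshold $a/\delta$ dictated by Lemma~\ref{lem:recursion}. For $k < a/\delta$ only the trivial bound $P_{i-1}(k) \leq 1$ is available, and these values of $k$ contribute at most $\lceil a/\delta \rceil/(m+1) \leq a/(\delta(m+1))$, producing the first term in the statement (absorbing the ceiling with room to spare since we may assume $a/\delta \geq 1$). For $k \geq a/\delta$, substituting Lemma~\ref{lem:recursion} and averaging over $k$ yields
\[
\frac{1}{m+1} \sum_{k \geq a/\delta} e^{-\delta^2 k/12}\left[1 + \sum_{z=1}^{i-1}\frac{\prod_{\ell=0}^{z-1} c_\ell}{(m-k+1)^z}\right].
\]

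The main step is to swap the order of summation and apply Fact~\ref{fact:math} term by term. Pulling each constant $\prod_{\ell=0}^{z-1} c_\ell$ outside the $k$-sum and using $\sum_{k=0}^m e^{-\delta^2 k/12}/(m-k+1)^z \leq c_z/m^z$ (including the $z=0$ case which bounds $\sum_k e^{-\delta^2 k/12} \leq c_0$), the contribution of the $z$-th term becomes $\prod_{\ell=0}^{z} c_\ell \big/\bigl[(m+1)\,m^z\bigr]$, which matches the target $\prod_{\ell=0}^z c_\ell/(m+1)^{z+1}$ up to absorbing the mild factor $(1+1/m)^z$ into the constants $c_\ell$ (this is harmless because the $c_\ell$'s in Fact~\ref{fact:math} are already defined only up to absolute multiplicative constants). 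Summing over $z = 0, \ldots, i-1$ and adding the boundary contribution $a/(\delta(m+1))$ produces exactly the corollary's bound.

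The only delicate point is the index bookkeeping in the swap: one must verify that each application of Fact~\ref{fact:math} contributes precisely one additional factor $c_z$ and raises the denominator exponent by one, so that the accumulated product $\prod_{\ell=0}^{z-1} c_\ell$ cleanly becomes $\prod_{\ell=0}^{z} c_\ell$. No fresh randomness is needed beyond what was used to set up Lemma~\ref{lem:recursion}; the corollary is essentially an integration of that lemma against the uniform distribution on $K$.
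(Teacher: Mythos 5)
Your proposal is correct and follows essentially the same route as the paper: condition on the uniform random variable $K$ (the number of uncorrupted neighbors announcing before $v$'s first announcement), split the sum at $k = a/\delta$ to extract the $\frac{a}{\delta(m+1)}$ term, then swap the order of summation and apply the defining property of the $c_z$ from Fact~\ref{fact:math} to each $z$-term. The $(1+1/m)^z$ slack you flag is present in the paper's own final inequality as well and is harmless downstream, exactly as you say.
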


\begin{proof}
Observe that the probability that $k$ of $v$'s uncorrupted neighbors announce before $v$'s first announcement is exactly $1/(m+1)$. Therefore, we get that the probability that any of $v$'s first $i$ announcements are \inc is at most:

\begin{align*}
\frac{1}{m+1}\cdot\sum_{k=0}^m &P_{i-1}(k)\\
 &\leq \frac{1}{m+1} \cdot \left[\sum_{k=0}^{a/\delta-1} 1 + \sum_{k=a/\delta}^m e^{-\delta^2k/12} \cdot \left[1+\sum_{z=1}^{i-1}\frac{\prod_{\ell=0}^{z-1} c_\ell}{(m-k+1)^z}\right]\right]\\
 &\leq \frac{a}{\delta(m+1)} + \frac{1}{m+1}\cdot \left[\sum_{k=0}^m e^{-\delta^2k/12} \cdot \left[1+\sum_{z=1}^{i-1}\frac{\prod_{\ell=0}^{z-1} c_\ell}{(m-k+1)^z}\right]\right]\\
 &\leq \frac{a}{\delta(m+1)} + \frac{1}{m+1}\cdot \left[\sum_{k=0}^m e^{-\delta^2k/12} + \sum_{z=1}^{i-1}\left(\prod_{\ell=0}^{z-1} c_\ell\right) \cdot \sum_{k=0}^m \frac{e^{-\delta^2k/12}}{(m-k+1)^z}\right]\\
&\leq \frac{a}{\delta(m+1)} + \sum_{z=0}^{i-1} \frac{\prod_{\ell=0}^z c_\ell}{(m+1)^{z+1}}
\end{align*}
\end{proof}

Now, we want to combine Corollary~\ref{cor:recursion} with Fact~\ref{fact:math} to get a more transparent bound:

\begin{corollary}\label{cor:nearfinal}
In the further modified dynamics, the probability that \emph{any} of $v$'s first $\frac{\ln(m)}{4\ln \ln(m)}$ announcements are \inc is $O(1/m)$. To be extra clear, there exist two absolute constants $\beta, \gamma$ such that for all $m \geq \beta$, the probability that any of $v$'s first $\frac{\ln (m)}{4\ln \ln (m)}$ announcements are \inc is at most $\gamma/m$.
\end{corollary}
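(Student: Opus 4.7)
The plan is to simply combine Corollary~\ref{cor:recursion} applied with $i = \frac{\ln m}{4 \ln \ln m}$ together with the explicit bound $c_\ell \leq (\alpha \ell)^\ell$ (for $\ell \geq 1$) and $c_0 \leq \alpha$ from Fact~\ref{fact:math}. The conclusion of Corollary~\ref{cor:recursion} already isolates the ``boundary'' term $\frac{a}{\delta(m+1)} = O(1/m)$ (since $a$ and $\delta$ are absolute constants), so the whole task reduces to bounding $\sum_{z=0}^{i-1} T_z$, where $T_z := \frac{\prod_{\ell=0}^z c_\ell}{(m+1)^{z+1}}$, by $O(1/m)$.

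The key technical step is to observe that $T_{z+1}/T_z = c_{z+1}/(m+1)$ is very small for every $z \leq i-1$. Taking logarithms in the bound from Fact~\ref{fact:math}, $\ln c_{z+1} \leq (z+1)\ln(\alpha(z+1)) \leq i\,(\ln \alpha + \ln i)$ whenever $z+1 \leq i$. Substituting $i = \frac{\ln m}{4\ln \ln m}$ gives $\ln c_{z+1} \leq \frac{\ln m}{4\ln\ln m}\bigl(\ln \alpha + \ln\ln m - \ln(4\ln\ln m)\bigr) \leq \tfrac{1}{4}\ln m + o(\ln m)$, i.e.\ $c_{z+1} \leq m^{1/4 + o(1)}$. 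In particular, for all sufficiently large $m$, $c_{z+1}/(m+1) \leq m^{-3/4+o(1)} \leq 1/2$, so the $T_z$ form a geometric series with ratio at most $1/2$.

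Putting this together, $\sum_{z=0}^{i-1} T_z \leq 2 T_0 = \frac{2 c_0}{m+1} \leq \frac{2\alpha}{m+1} = O(1/m)$, and adding the $\frac{a}{\delta(m+1)}$ term still gives $O(1/m)$. For the explicit constants $\beta$ and $\gamma$ in the second statement: $\beta$ would be chosen large enough to make $i \geq 1$ (so the bound $c_\ell \leq (\alpha\ell)^\ell$ is available) and to force $c_{z+1}/(m+1) \leq 1/2$ for every $z \leq i-1$, while $\gamma$ can be taken as (say) $4\alpha + 2a/\delta$, absorbing the geometric sum bound $2T_0$ and the boundary term, with a small loss from converting $\frac{1}{m+1}$ to $\frac{1}{m}$.

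The main obstacle is simply keeping $c_i$ polynomially smaller than $m$, and this is exactly what pins down the specific choice $i = \frac{\ln m}{4\ln \ln m}$: the ``knee'' of $c_\ell$ sits near $\ell \ln \ell \approx \ln m$, so any meaningfully larger choice of $i$ would break the geometric-decay argument, while our choice keeps us comfortably inside the regime where $T_z$ decays at least by a factor of $2$ at each step.
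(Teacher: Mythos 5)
Your proposal is correct and follows essentially the same route as the paper: both start from Corollary~\ref{cor:recursion}, invoke the bound $c_\ell \leq (\alpha\ell)^\ell$ from Fact~\ref{fact:math}, and exploit the fact that the choice $i = \frac{\ln m}{4\ln\ln m}$ keeps $c_\ell \leq m^{1/4+o(1)}$ polynomially below $m+1$. The only cosmetic difference is that you package the sum as a geometric series dominated by $T_0 = c_0/(m+1)$, whereas the paper bounds each term with $z \geq 1$ individually by $m^{-1.5}$ and multiplies by the number of terms; both reduce to the same core estimate.
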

\begin{proof}
We simply plug in the upper bound we have on $c_i$ from Fact~\ref{fact:math}:

\begin{align*}
\prod_{\ell=0}^z c_\ell \leq \alpha\cdot \prod_{\ell=1}^z (\alpha z)^\ell \leq  (\alpha z)^{z^2}\\
\Rightarrow \frac{\prod_{\ell=0}^z c_\ell}{(m+1)^{z+1}} \leq e^{z^2 \ln (\alpha z) - (z+1) \ln (m+1)}.
\end{align*}

Now, observe that when $z = 0$, the RHS is clearly $O(1/m)$. We proceed to handle the the $z > 0$ case, making use of the fact below.

\begin{fact} For sufficiently large $m$, when $z \in [1,\frac{\ln(m)}{4\ln \ln (m)}]$, $z^2 \ln (\alpha z) - (z+1) \ln (m+1) \leq -1.5\ln (m+1)$.
\end{fact}
\begin{proof}
To see this, we first rewrite the claim as $z^2 \ln (\alpha z) \leq (z-0.5) \ln(m+1)$. Next, we take a derivative of both sides with respect to $z$. The derivative of the left-hand side is $2z\ln(\alpha z) + z$. The derivative of the right-hand side is $\ln(m+1)$. In particular, this means that the derivative of the right-hand side exceeds that of the left-hand side whenever $2z \ln(\alpha z) + z \leq \ln(m+1)$. 

For sufficiently large $m$, $z \leq \frac{\ln(m)}{4\ln \ln(m)}$ satisfies this inequality. Next, observe that as a result of this, the inequality is certainly satisfied for all $z \in [1,\frac{\ln(m)}{4\ln \ln(m)}]$ as long as it is satisfied at $z=1$. Observe also that for sufficiently large $m$, the inequality is satisfied at $z = 1$, so there is a sufficiently large $m$ such that both hold.
\end{proof}
So for sufficiently large $m$, and $i \leq \frac{\ln(m)}{4\ln\ln(m)}$, we now have that $\sum_{z = 0}^{i-1} \frac{\prod_{\ell=0}^z c_\ell}{(m+1)^{z+1}} \leq c_0/m+ i/m^{1.5} = O(1/m)$. Adding back in the $\frac{a}{\delta(m+1)}$ term, we get that the entire term is $O(1/m)$. 

\end{proof}

At this point, we have claimed that $v$'s first $\frac{\ln (m)}{4 \ln \ln (m)}$ announcements are correct except with probability $O(1/m)$. We now consider subsequent announcements in two steps. 

\begin{lemma}\label{lem:announcements}
The probability that fewer than $\frac{m}{3e\ln^4(m)}$ of $v$'s uncorrupted neighbors have announced by $v$'s $(\frac{\ln(m)}{4\ln\ln(m)})^{th}$ announcement is at most $1/m$.
\end{lemma}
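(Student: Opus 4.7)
The event in the lemma depends only on the random sequence of which node is selected at each step, so it can be analyzed purely in terms of that sequence (independent of the dynamics). Let $k = \ln(m)/(4\ln\ln m)$ and $k' = m/(3e\ln^4 m)$, and introduce a deterministic time threshold $t^\star := kn/(2e\ln^4 m)$. I will show by a union bound that with probability at least $1-1/m$ the following two events both hold: (A) $v$'s $k$-th announcement occurs no earlier than time $t^\star$; (B) at least $k'$ distinct uncorrupted neighbors of $v$ are selected in the first $t^\star$ steps. Since any neighbor selected by time $t^\star \leq T^v_k$ has announced by $T^v_k$, the conjunction of (A) and (B) implies the conclusion of the lemma.

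For (A), the time $T^v_k$ of $v$'s $k$-th announcement is a sum of $k$ i.i.d.\ Geometric$(1/n)$ random variables, so $\Pr[T^v_k < t^\star] = \Pr[\mathrm{Bin}(t^\star, 1/n) \geq k]$. A Chernoff bound in the same spirit as Lemma~\ref{lem:path}, with $\mu = t^\star/n = k/(2e\ln^4 m) \ll k$, gives
\[
\Pr[\mathrm{Bin}(t^\star, 1/n) \geq k] \leq (e\mu/k)^k = (2\ln^4 m)^{-k} = m^{-1}\cdot 2^{-k} \leq 1/(2m),
\]
where the penultimate equality uses $4k\ln\ln m = \ln m$.

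For (B), write $N^\star = \sum_u Z_u$ with $Z_u = \mathbbm{1}[u \text{ selected in the first } t^\star \text{ steps}]$. Each $Z_u$ is Bernoulli with $\mathbb{E}[Z_u] = 1-(1-1/n)^{t^\star} \geq t^\star/(2n)$, and the family $\{Z_u\}$ is negatively associated by the standard balls-into-bins principle (each step selects exactly one node). Thus $\mathbb{E}[N^\star] \geq mt^\star/(2n) = mk/(4e\ln^4 m) \geq 3k'$, the last inequality holding once $k \geq 4$ (i.e., once $\ln m \geq 16\ln\ln m$). A Chernoff-type lower-tail bound for sums of negatively associated Bernoullis then yields
\[
\Pr[N^\star < k'] \leq \Pr\!\left[N^\star \leq \mathbb{E}[N^\star]/3\right] \leq \exp\!\left(-\tfrac{2}{9}\mathbb{E}[N^\star]\right) \leq \exp\!\left(-\tfrac{2m}{9e\ln^4 m}\right) \leq 1/(2m),
\]
for $m$ sufficiently large, since $m/\ln^4 m$ grows much faster than $\ln m$. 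Combining (A) and (B) by a union bound completes the proof.

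The main obstacle is the delicate balance between the two tail bounds: shrinking $t^\star$ below $\Theta(kn/\ln^4 m)$ pushes $\mathbb{E}[N^\star]$ below $k'$, while enlarging $t^\star$ past this threshold makes $T^v_k < t^\star$ too likely. The choice of $k'$ in the lemma, carrying a matching $1/\ln^4 m$ factor, is precisely what allows both tails to sit at $1/(2m)$ simultaneously; once this balance is identified, the remainder is standard concentration.
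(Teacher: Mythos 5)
Your proof is correct, but it takes a different route from the paper's. The paper avoids any reference to absolute time: it restricts attention to the subsequence of the first $\frac{m}{2e\ln^4 m}$ ``relevant'' steps (those at which either $v$ announces or a fresh neighbor announces for the first time), observes that conditioned on this subsequence each such step is $v$ with probability at most $2/m$, and bounds the probability that $v$ is picked $\geq \frac{\ln m}{4\ln\ln m}$ times via $\binom{m/(2e\ln^4 m)}{k}(2/m)^k$ and Stirling. On the complement event the count of fresh-neighbor announcements within the window is \emph{deterministic} (at least $\frac{m}{2e\ln^4 m}-k > \frac{m}{3e\ln^4 m}$), so only one tail bound is needed. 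You instead introduce an absolute time threshold $t^\star$ and pay for two tail bounds: a Chernoff/binomial bound on $v$ announcing too often before $t^\star$, and a lower-tail bound on the number of distinct neighbors selected by $t^\star$, the latter requiring the (correct, standard) observation that occupancy indicators in the balls-into-bins process are negatively associated. Both arguments are sound; your version is more modular and makes the tension between the two thresholds explicit, while the paper's relative-clock trick is shorter and sidesteps the negative-association step entirely. Two minor points, neither fatal: your step (B) needs $k \geq 4$ and your step (A) needs $k\ge 1$, so your bound holds only for $m$ above an absolute constant --- but the paper's own proof (and its use inside Proposition~\ref{prop:redalwaysred}, which absorbs small degrees into the constant $c$) has the same implicit restriction; and the inequality $\mathbb{E}[Z_u]\ge t^\star/(2n)$ uses $t^\star \le n$, which indeed holds for your choice of $t^\star$.
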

\begin{proof}
Consider the first $\frac{m}{2e\ln^4(m)}$ timesteps where either $v$ announces, or one of $v$'s neighbors announces for the first time. Note that once these steps are fixed, each such announcement is equally likely to be $v$ or a new announcement of its neighbor. As there are only $\frac{m}{2e\ln^4(m)}$ announcements  total, there are always at least $m/2$ unannounced neighbors, so the probability of picking $v$ is at most $2/m$ for each pick. 

So the probability that $v$ is picked at least $k = \frac{\ln(m)}{4\ln\ln(m)}$ times is at most $\binom{m/(2e\ln^4(m))}{k} \cdot (2/m)^k \leq \frac{m^k\cdot 2^k}{k!\cdot (2e\ln^4(m))^k m^k} \leq \frac{2^k e^k}{k^k e^k2^k\ln^{4k}(m)}$ by Stirling's approximation. Finally, observe that:
$$\frac{1}{k^k (\ln(m))^{4k}} \leq \frac{1}{(\ln(m))^{4k}} = e^{-\frac{\ln(m)}{4\ln\ln(m)} \cdot 4 \cdot \ln \ln (m)} = 1/m.$$

Finally, observe that indeed if at most $\frac{\ln(m)}{4\ln\ln(m)}$ of these announcements are by $v$, then $> \frac{m}{3e \ln^4(m)}$ announcements remain that must be the first announcement from a neighbor of $v$.
\end{proof}

We'll now claim that Lemma~\ref{lem:announcements} is good enough to carry us through the next $m$ announcements. 

\begin{corollary}\label{cor:announcements}
The probability that any of $v$'s announcements from its $\frac{\ln(m)}{4\ln\ln (m)}$ thru its $m^{th}$ are incorrect is $O(1/m)$. 
\end{corollary}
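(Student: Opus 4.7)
The plan is to condition on the good event from Lemma~\ref{lem:announcements} and then apply a Chernoff-based bound to each of $v$'s subsequent announcements, exactly as was done for the first $\frac{\ln m}{4\ln\ln m}$ announcements in Corollary~\ref{cor:nearfinal}. Set $k^* := m/(3e\ln^4 m)$. By Lemma~\ref{lem:announcements}, except on a bad event of probability at most $1/m$, at least $k^*$ uncorrupted neighbors of $v$ have already announced (i.e., have non-$\bot$ opinion) by $v$'s $\left(\tfrac{\ln m}{4\ln\ln m}\right)$-th announcement. Since announcements never revert to $\bot$, this count can only grow: at every one of $v$'s announcements numbered $\frac{\ln m}{4\ln\ln m}$ through $m$, at least $k^*$ uncorrupted neighbors have non-$\bot$ announcements.

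Next, I would bound the probability that any single such announcement of $v$ is \inc. Recall that in the further modified dynamics the graph with $v$ removed is a disjoint union of subtrees, one per neighbor; consequently, at any fixed time the current announcements of $v$'s uncorrupted neighbors are statistically independent, and by Theorem~\ref{thm:boolean} each is \correct with probability at least $1/2+\delta$. Treating the $a$ corrupted neighbors as always \inc in the worst case, $v$'s majority announcement is \correct whenever more than $(k+a)/2$ of the $k$ uncorrupted neighbors are \correct; when $k \geq k^* \geq a/\delta$ (which holds for large $m$), this reduces to more than $(1/2+\delta/2)k$. By the same Chernoff computation as in Corollary~\ref{cor:nearfinal}, this failure probability is at most $e^{-\delta^2 k/12} \leq e^{-\delta^2 k^*/12} = e^{-\delta^2 m/(36e\ln^4 m)}$.

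A union bound over at most $m$ announcements in the range $[\tfrac{\ln m}{4\ln\ln m}, m]$ then gives a total failure probability of at most $m \cdot e^{-\delta^2 m/(36e\ln^4 m)}$, which decays faster than any polynomial in $m$ and is in particular $o(1/m)$. Adding the $1/m$ probability from the bad event of Lemma~\ref{lem:announcements} yields the stated $O(1/m)$.

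The one subtlety I expect to be the main obstacle is verifying that conditioning on the ordering event of Lemma~\ref{lem:announcements} does not spoil either the cross-neighbor independence or the per-neighbor $(1/2+\delta)$-bias. This is fine because that event is a function purely of the random schedule of chosen nodes, which is independent of the private signals $\{X(u)\}$; each uncorrupted neighbor's current announcement is a deterministic function (given the schedule) of signals in its own subtree, and these subtrees are disjoint across neighbors once $v$ is removed. So the independence and bias survive the conditioning, and the Chernoff step in the previous paragraph is justified essentially by the same reasoning that underlies Corollary~\ref{cor:nearfinal}.
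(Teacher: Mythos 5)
Your proof is correct and follows essentially the same route as the paper's: condition on the event of Lemma~\ref{lem:announcements}, bound each subsequent announcement's failure probability by a Chernoff bound with $k \geq m/(3e\ln^4 m)$ announced independent neighbors, and union bound over at most $m$ announcements. Your extra care about why conditioning on the schedule event preserves the independence and bias of the neighbors' announcements is a point the paper leaves implicit, and the small discrepancy in the exponent's constant ($36e$ versus the paper's $24e$) is immaterial since both bounds are super-polynomially small.
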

\begin{proof}
First, we just showed that the probability that fewer than $\frac{m}{3e\ln^4(m)}$ of $v$'s neighbors have announced is at most $1/m$. So we'll proceed under the condition that at least this many neighbors have announced. From here, the probability that any given announcement of $v$ is incorrect is at most $e^{-\delta^2 m / (24e\ln^4(m))}$. Taking a union bound over all $m$ announcements results in a total failure probability of $me^{-\delta^2 m/(24e \ln^4(m))} = O(1/m)$. 
\end{proof}

To wrap up this part of the analysis, we now consider any announcements after the first $m$. First, we claim that the probability that \emph{any} of $v$'s neighbors haven't announced by here is $O(1/m)$. 

\begin{lemma}\label{lem:announcementslast}
The probability that any of $v$'s neighbors haven't announced by $v$'s $m^{th}$ announcement is at most $m/2^m = O(1/m)$.
\end{lemma}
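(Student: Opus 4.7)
The plan is to prove the bound for a single neighbor first, then union-bound. Fix any neighbor $u$ of $v$. I would argue by a simple coupling/symmetry argument that the probability $u$ has not announced by the time of $v$'s $m$-th announcement equals exactly $2^{-m}$. The mechanism is this: consider only the subsequence of time steps at which either $u$ or $v$ is the node selected to update. Since the updating node is drawn uniformly from $V$ in each step, conditional on such a step being $u$ or $v$, it is $u$ with probability $1/2$ and $v$ with probability $1/2$, independently across steps. The event that $u$ has not yet announced by $v$'s $m$-th announcement is precisely the event that the first $m$ elements of this subsequence are all $v$; this occurs with probability $(1/2)^m = 2^{-m}$. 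Note that in the further modified dynamics, $v$'s neighbors ignore $v$'s announcement, so the timing of $u$'s first update is literally just the first time $u$ is picked, independent of what $v$ does.

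Now I would take a union bound over the $m$ uncorrupted neighbors (the handful of corrupted ones, being at most the absolute constant $a$, contribute at most $a \cdot 2^{-m}$, which is absorbed into the same asymptotic bound) to obtain
\[
\Pr[\text{some neighbor of } v \text{ has not announced by $v$'s $m$-th announcement}] \;\leq\; m \cdot 2^{-m}.
\]
Finally, to conclude $m/2^m = O(1/m)$, I would just observe that $m^2 \le 2^m$ for all $m \ge 4$, and that for $m \in \{1,2,3\}$ the bound $m/2^m$ is a constant and can be absorbed into the implicit constant (or, in the outer Proposition~\ref{prop:redalwaysred}, handled by the opening case where we assumed $\deg(v) > a/\delta$ and chose $c$ accordingly for small degrees).

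There is no real obstacle here; the only thing worth being careful about is the reduction to a Bernoulli-$1/2$ coin-flip sequence, which relies on the fact that in the modified dynamics the announcement rule for $u$ depends only on whether $u$ has itself been picked so far (not on which of $v$ or $u$ was picked in the intermediate steps). This is exactly the feature of the ``further modified'' dynamics introduced earlier in the proof, so the coupling argument goes through cleanly.
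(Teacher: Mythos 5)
Your proposal is correct and matches the paper's own proof: the same symmetry argument on the subsequence of steps selecting $u$ or $v$ gives exactly $2^{-m}$ per neighbor, followed by the same union bound. The extra care you take with corrupted neighbors and small $m$ is harmless but not needed for the asymptotic statement.
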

\begin{proof}
For a given node $w$, the probability that it doesn't announce by $v$'s $m^{th}$ announcement is exactly $2^{-m}$. To see this, consider the first $m$ times that either $v$ or $w$ announce. In order for $v$ to announce $m$ times before $w$ announces once, each of these announcements must be $v$ and not $w$. But each announcement is equally likely to be $v$ or $w$, independently. So the probability that this happens is $2^{-m}$. The lemma follows from a union bound.
\end{proof}

\begin{corollary}\label{cor:announcementslast}
The probability that any of $v$'s announcements from its $m^{th}$ through it's $L^{th}$ are incorrect is $\leq Le^{-\delta^2 m/12} + O(1/m)$.
\end{corollary}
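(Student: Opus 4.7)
The plan is to decompose the analysis into two events: the good event $E$ that all of $v$'s uncorrupted neighbors have already announced by $v$'s $m^{th}$ announcement (which, by Lemma~\ref{lem:announcementslast}, fails with probability $O(1/m)$), and the event that \emph{some} subsequent announcement of $v$ (among the at most $L$ announcements indexed from $m$ through $L$) is \inc conditioned on $E$. We will bound the first by $O(1/m)$ directly from the previous lemma, and the second by $Le^{-\delta^2 m /12}$ via a per-announcement Chernoff estimate and a union bound.

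Conditioned on $E$, at every time $t \geq T_m$ (where $T_m$ is $v$'s $m^{th}$ announcement time) each of $v$'s $m$ uncorrupted neighbors has a well-defined announcement in $\{0,1\}$. The key structural observation is that in the further modified dynamics where $v$'s neighbors ignore $v$, deleting $v$ decomposes the tree into connected subtrees, one per neighbor; within each subtree the dynamics evolve independently of the others. Hence, at any fixed time $t$, the announcements of $v$'s neighbors are mutually independent. Moreover, each uncorrupted neighbor's announcement is an odd monotone Boolean function of the private beliefs in its own subtree (by Lemma~\ref{lem:influence} applied to the modified dynamics), so by Theorem~\ref{thm:boolean} each is \correct independently with probability at least $1/2+\delta$.

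For a single announcement of $v$ at such a time, the announcement is \inc only if (even with adversarial tie-breaking) at least half of $v$'s $\deg(v)$ neighbors are \inc. In the worst case all $a$ corrupted neighbors are \inc, so $v$ is \inc only if the number $C$ of \correct uncorrupted neighbors satisfies $C \leq (m+a)/2$. Since $a$ is a fixed constant and $m = \Theta(\deg(v))$ can be assumed large, for sufficiently large $m$ we have $(m+a)/2 \leq (1/2+\delta/2)m$; a standard multiplicative Chernoff bound on the sum of $m$ independent $\mathrm{Bernoulli}(\geq 1/2+\delta)$ variables then gives $\Pr[C \leq (1/2+\delta/2)m] \leq e^{-\delta^2 m/12}$.

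Finally, a union bound over the at most $L$ announcements of $v$ in the window $[m,L]$ yields $Le^{-\delta^2 m/12}$ conditioned on $E$, and combining with the $O(1/m)$ failure probability of $E$ from Lemma~\ref{lem:announcementslast} delivers the stated $Le^{-\delta^2 m/12} + O(1/m)$ bound. The only non-routine step is justifying per-time independence of the neighbors' announcements; once we invoke tree-decomposition after removing $v$ in the modified dynamics, everything else reduces to Chernoff plus a union bound.
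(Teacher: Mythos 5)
Your proof is correct and follows essentially the same route as the paper: condition on the event (from Lemma~\ref{lem:announcementslast}) that all of $v$'s neighbors have announced by its $m^{th}$ announcement, bound each subsequent announcement's failure probability by $e^{-\delta^2 m/12}$ via independence of the neighbors in the modified dynamics plus Theorem~\ref{thm:boolean} and Chernoff, and union bound over the $L$ announcements. The only difference is that the paper imports the per-announcement bound directly from the earlier estimate $P_0(k)\leq e^{-\delta^2 k/12}$ in the proof of Proposition~\ref{prop:redalwaysred}, whereas you re-derive it; the content is identical.
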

\begin{proof}
First, we just showed that the probability that any of $v$'s neighbors haven't announced is $O(1/m)$, so we'll proceed under the condition that all neighbors have announced. From here, the corollary is a simple union bound: each announcement has probability at most $e^{-\delta^2m/12}$ of being incorrect. 
\end{proof}

Putting Corollaries~\ref{cor:nearfinal},~\ref{cor:announcements}, and~\ref{cor:announcementslast} together, we get the following:

\begin{corollary}\label{cor:almostthere!}
The probability that any of $v$'s first $L$ announcements are incorrect is $\leq Le^{-\delta^2 m/12} + O(1/m)$.
\end{corollary}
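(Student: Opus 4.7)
My plan is a direct union bound, which finishes the proof once the three preceding corollaries are in place. I would partition the indices of $v$'s first $L$ announcements into three consecutive ranges aligned with those corollaries: the first $\frac{\ln m}{4 \ln \ln m}$ announcements, the announcements from position $\frac{\ln m}{4\ln\ln m}$ through $m$, and the announcements from position $m$ through $L$. The event that any of $v$'s first $L$ announcements is \inc is contained in the disjunction of the three corresponding range-specific events, so the union bound gives a total failure probability that is the sum of the three individual bounds.

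By Corollary~\ref{cor:nearfinal} the first range contributes $O(1/m)$; by Corollary~\ref{cor:announcements} the middle range contributes another $O(1/m)$; and by Corollary~\ref{cor:announcementslast} the last range contributes $Le^{-\delta^2 m/12} + O(1/m)$. Summing, the overall bound is $Le^{-\delta^2 m/12} + O(1/m)$, matching the statement. The overlap at the two boundary indices (position $\frac{\ln m}{4 \ln \ln m}$ and position $m$) is absorbed harmlessly into the $O(1/m)$ term.

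I do not foresee a genuine obstacle at this step, since all the nontrivial probabilistic content (the $P_i(k)$ recursion bounded via Fact~\ref{fact:math}, the Chernoff argument controlling how many uncorrupted neighbors have announced, and the $2^{-m}$ bound for the worst-case neighbor that has never announced) is already packaged into the three input corollaries. The only thing I would keep an eye on is making sure to invoke the corollaries in the ``against $S$'' modified dynamics (so that all the earlier reductions through ``corrupted neighbors'' remain valid), and then to check that the resulting bound $Le^{-\delta^2 m/12} + O(1/m)$ is strong enough to feed back into the original claim of Proposition~\ref{prop:redalwaysred} by taking $L$ on the order of $e^{b\deg(v)}$ for a sufficiently small absolute constant $b$.
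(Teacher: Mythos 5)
Your proposal is correct and matches the paper's argument exactly: the corollary is obtained by a union bound combining Corollaries~\ref{cor:nearfinal},~\ref{cor:announcements}, and~\ref{cor:announcementslast} over the three consecutive ranges of announcements, with the two $O(1/m)$ contributions and the $Le^{-\delta^2 m/12} + O(1/m)$ tail summing to the stated bound. Your closing remarks are also on target --- the corollary lives in the ``further modified'' dynamics, and the reduction back to the against-$S$ dynamics and the choice $L = e^{\delta^2 m/13}$ are handled afterward in the proof of Proposition~\ref{prop:redalwaysred}, not here.
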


Next, we want to claim that $v$ is unlikely to announce more than $L = e^{\delta^2m/13}$ times in the first $T$ steps, when $T \leq n\cdot e^{b m}$ for some constant $b$, and we'll be done with the analysis of the further modified dynamics. Observe that the expected number of announcements for $v$ in the first $T = n \cdot e^{\delta^2m/13}/2$ steps is exactly $e^{\delta^2 m/13}/2$. By Chernoff bound, the probability that $v$ instead announces at least $e^{\delta^2 m/13}$ times is at most $e^{-e^{\delta^2 m/13}/6} = O(1/m)$. So all together, we are claiming that:

\begin{enumerate}
\item In the further modified dynamics, the first $e^{\delta^2m/13}$ announcements are \correct except with probability $O(1/m)$. 
\item Except with probability $O(1/m)$, $v$ announces at most $e^{\delta^2 m/13}$ times in the first $T = n \cdot e^{\delta^2 m /13}/2$ steps. 
\item So together, in the further modified dynamics, except with probability $O(1/m)$, $v$ is safe thru $T$. 
\end{enumerate}

Finally, we need to reason about the original ``against $S$'' dynamics. We wish to claim that for all orders of announcements, and all initial beliefs, if the further modified dynamics ever result in an \inc majority during one of $v$'s first $\ell$ announcements, then so do the original against $S$ dynamics. To see this, assume for contradiction that $v$'s $i$th announcement was the first \inc announcement in the original against $S$ dynamics, while all of $v$'s first $i$ ``anouncements'' were \correct in the further modified dynamics. Then it must be the case that one of $v$'s neighbors announce \inc in the original dynamics, but \correct in the modified dynamics, \emph{despite $v$ announcing \correct during each of its first $i-1$ announcements}. This is a contradiction, as the inclusion of $v$'s \correct announcemnets cannot cause $w$'s announcement to switch from \correct to \inc. 

\end{proof}

\section{Omitted Proofs From Section~\ref{sec:majority}}\label{app:majority}

\begin{proof}[Proof of Lemma~\ref{lem:expect}]
{Let \textsc{Announced} be the event that a given node $v$ announces by time $T$.  For such a node $v$, we apply Theorem~\ref{thm:boolean}, together with the observation that $C^{T}(v)$ is an odd monotone boolean function of the random variables $\{X(u),u\in V\}$, to conclude that $v$ is \correct with probability at least $1/2+\delta$ (\emph{not} independently of other nodes). Then, we observe that $v$ announces by $T$ with probability at least $1-(1-1/n)^{T/n}\geq 1-e^{-T/n}$ (because each of $T$ announcements are made by a uniformly random one of the $n$ nodes). Therefore, the probability a given node is \correct is $$\Pr[\correct|\textsc{Announced}]\Pr[\textsc{Announced}]\geq (1/2+\delta)(1-e^{-T/n})\geq1/2+\delta-e^{-T/n}$$ 
The lemma follows by linearity of expectation.}
\end{proof}

\begin{proof}[Proof of Lemma~\ref{lem:deviation}]
We first wish to compute the variance of $\sum_v C^t(v)$, which is exactly $\sum_{u,v} \mathbb{E}[C^t(v) \cdot C^t(u)] - \mathbb{E}[C^t(v)] \mathbb{E}[C^t(u)]$. As each $C^t(v) \in \{0,1\}$, and there are only $n$ nodes, terms coming from $u = v$ contribute at most $n$. For pairs of distinct vertices, note that with probability $(1-\varepsilon^t_{uv})^2\geq 1-2\varepsilon^t_{uv}$, both $C^t(v) = X_v$ and $C^t(u) = X_u$. With the remaining $2\varepsilon^t_{uv}$ probability, $C^t(v)C^t(u) \leq 1$. So we get that $\mathbb{E}[C^t(v) \cdot C^t(u)] \leq \mathbb{E}[X_uX_v] + 2\varepsilon^t_{uv}$. Similarly, we see that $\mathbb{E}[C^t(u)] \geq \mathbb{E}[X_u] - \varepsilon^t_{uv}$. This is because in the worst case, whenever $C^t(u) \neq X_u$ is it when $X_u = 1$ and $C^t(u) = 0$, and this happens with probability at most $\varepsilon^t_{uv}$. So we get (again using that $C^t(u), C^t(v) \leq 1$ always):

$$\mathbb{E}[C^t(v) \cdot C^t(u)] - \mathbb{E}[C^t(v)] \mathbb{E}[C^t(u)] \leq \mathbb{E}[X_v X_u] + 2\varepsilon^t_{uv} - \mathbb{E}[X_u]\mathbb{E}[X_v] + 2\varepsilon^t_{uv} \leq 4\varepsilon^t_{uv}.$$

The final inequality comes from the fact that $X_u$ and $X_v$ are independent. So if $\sum_{u,v} \varepsilon^t_{uv} \leq D$, then the total variance of $\sum_v C^t(v) \leq n + 4D$. Now, we simply plug into Chebyshev's inequality, and observe that the probability that a random variable deviates by more than $\gamma$ is at most its variance over $\gamma^2$, yielding the lemma.
\end{proof}

\begin{proof}[Proof of Proposition~\ref{prop:long}]
Let's look at the midpoint $x$, which has distance $d(u,v)/2$ to both $u$ and $v$. Note that if{ $T(x,u)>T$ and $T(x,v)>T$}, then Lemma~\ref{lem:influence} implies that $C^{T}(u)$ and $C^{T}(v)$ can be written as functions of disjoint initial beliefs. More formally, let $X_u = C^{T}(u)$ if $T(x,u) > T$, and \inc otherwise. Similarly, let $X_v = C^{T}(v)$ if $T(x,v) > T$, and \inc otherwise. Then $X_v, X_u$ can be written as functions of disjoint sets of signals, and $C^{T}(v) = X_v$ whenever $T(x,v) > T$. A direct application of Lemma~\ref{lem:path} with $y = 1/4$ yields that the probability that $T(x,u) <T$ is at most $e^{-f(n)/24}$. A union bound (over two events) then lets us conclude by Lemma~\ref{lem:influence} that $\varepsilon^T_{uv} \leq 2e^{-f(n)/24}$.

{The improved statement in the case of large $k$ is also a direct application of the third bullet of Lemma~\ref{lem:path}. Here, we simply plug in $y = 1/k$ to the third bullet.}
\end{proof}

\begin{proof}[Proof of Lemma~\ref{lem:shortpath1}]
For any $u\in V$ and $x>0$, define
\begin{align*}
    K_\ell(u,x)=\left\{v \in V(G) : \prod_{w\in P(u,v)}\deg(w) < x \text{ and } 1 \leqslant d(u,v) \leqslant \ell\right\}.
\end{align*}

Fix a vertex $u$ and consider rooting the tree at $u$. We claim that for all $\ell \in [n]$ and $x>0$, we have $\left\lvert K_\ell(u,x)\right\rvert \leq x$, with equality only if all of $u$'s neighbors are in $K_\ell(u,x)$. If we can prove this, then we conclude that each $u$ participates in at most $X$ pairs in $K$, providing the bound (by summing over all $u$ and dividing by $2$ for double counting). We prove this by induction on the path length $\ell$.

\paragraph*{Base case.} Suppose $\ell=1$. The vertices at distance exactly 1 from $v$ are its neighbors. Fix any $x>0$. We have two possibilities:
\begin{itemize}
    \item $\deg(u)> x$: Note that for all $v\in V(G)$, $\prod_{w\in P(u,v)} \deg(w) \geq \deg(u) \geq x$, implying $|K_1(u,x)|=0\leq x$.
    \item $\deg(u) \leq x$: In this case, $K_1(u,x)\subseteq N(u)$ implies  $|K_1(u,x)| \leq \deg(u) \leq x$. Here, notice that equality is possible \emph{only if} all of $u$'s neighbors are in $K_1(u,x)$.
\end{itemize}

\paragraph*{Inductive hypothesis.} Assume for all nodes $u\in V(G)$ and $x > 0$, we have $|K_{\ell-1}(u,x)|\leq x$, with equality only if all of $u$'s neighbors are in $K_{\ell-1}(u,x)$. 

\paragraph*{Inductive step.} Fix any node $u$. For each $u'\in N(u)$, let us consider the remaining subset of $K_\ell(u,x)$ rooted at $u'$. We claim that the following inequality holds for all $x>0$ {and $\ell\geq2$. Below, let $N^1(u)$ denote the neighbors $u'$ of $u$ that are in $K_\ell(u, x)$ \emph{and} $u \in K_{\ell-1}(u', \frac{x}{\deg(u)})$. Let $N^2(u)$ denote the neighbors $u'$ of $u$ that are in $K_\ell(u,x)$ \emph{but} $u \notin K_{\ell-1}(u', \frac{x}{\deg(u)})$.
\begin{align*}
    |K_{\ell}(u,x)| \leqslant \sum_{u'\in N^1(u)} \left\lvert K_{\ell-1}\left(u',\frac{x}{\deg(u)}\right)\right\rvert + \sum_{u'\in N^2(u)} 1+\left\lvert K_{\ell-1}\left(u',\frac{x}{\deg(u)}\right)\right\rvert.
\end{align*}
To see this, observe that for any node $v$ satisfying $d(u,v)\in\{1,2,\ldots,\ell\}$, there must be a neighbor $u'\in N(u)\cap K_\ell(u,x)$ such that $d(u',v)\in\{0,1,\ldots,\ell-1\}$. Furthermore, for any path $P(u,v)$ satisfying $\prod_{w\in P(u,v)} \deg(w) \leq x$, removing $u$ from $P(u,v)$ gives a new path $P(u',v)$, where $u'$ is a neighbor of $u$, and $\prod_{w\in P(u',v)} \deg(w) \leq \frac{x}{\deg(u)}$.

The right-hand-side is both over-counting and under-counting the size of $K_\ell(u,x)$. It is over-counting in the following two senses.
\begin{itemize}
    \item It may be the case that $K_{\ell-1}\left(u',\frac{x}{\deg(u)}\right) \cap K_{\ell-1}\left(u'',\frac{x}{\deg(u)}\right) \not= \emptyset$ for two neighbors $u',u''\in N(u)$, so some nodes can be counted several times.
    \item It also may be the case that $u \in K_{\ell-1}\left(u',\frac{x}{\deg(u)}\right)$ for any $u'\in N(u)$. In this case, therefore, the node $u$ --- which satisfies $d(u,u)=0$ and is thus not in $K_\ell(u,x)$ --- is counted.
\end{itemize}
The under-counting is due to the fact that the definition of $K_\ell$ excludes the root of the tree. In other words, for each $u'\in N(u)$, $u'\not\in K_{\ell-1}\left(u',\frac{u}{\deg(u)}\right)$. Therefore, perhaps none of the neighbors of $u$ are counted in the right-hand-side. However, this cancels the second case of over-counting listed above: for all $u'$, either $u\in K_{\ell-1}\left(u',\frac{x}{\deg(u)}\right)$, in which case $u$ is overcounted. Or $u \notin K_{\ell-1}\left(u',\frac{x}{\deg(u)}\right)$, in which case the explicit $+1$ cancels the overcounting.

The proof of the claim follows from applying the inductive hypothesis to each term in the sum on the right-hand-side, and recalling that the inequality is strict for all terms in the right-most sum, as their neighbor $u$ is not in $K_{\ell-1}(u', \frac{x}{\deg(u)})$.
\begin{align*}
     |K_{\ell}(u,x)| &\leqslant \sum_{u'\in N^1(u)} \left\lvert K_{\ell-1}\left(u',\frac{x}{\deg(u)}\right)\right\rvert + \sum_{u'\in N^2(u)} 1+\left\lvert K_{\ell-1}\left(u',\frac{x}{\deg(u)}\right)\right\rvert\\
    &\leq \sum_{u' \in N(u) \cap K_\ell(u,x)} \frac{x}{\deg(u)} \leq x.
\end{align*}}
Finally, it is easy to see that the last inequality holds with equality only if every neighbor of $u$ is in $K_\ell(u,x)$.
\end{proof}

\begin{proof}[Proof of Lemma~\ref{lem:shortpath3}]
First, define an auxiliary process that replaces $C^t(x)$ with $\correct$ for all $t \geq T(x,x)$ (and keeps it as $\bot$ before $T(x,x)$), and otherwise runs asynchronous majority dynamics. Under this modified dynamics, it's clear that all announcements of $u$ and $v$ are independent. Announcements under these modified dynamics will serve as the $X_u, X_v$ witnessing $p_x$-disjointness. Our plan is to show that $C^T(v) = X_v$ whenever $x$ cuts $v$ from $u$ thru $T$ (and $C^T(u) = X_u$ whenever $x$ cuts $u$ from $v$ thru $T$). 

Next, consider that when $x$ cuts $u$ from $v$ thru $T$, there is some node $y$ on $P(u,x)$ that is safe thru $T$ even against $S_y$. If $y$ happens to be $x$ itself, then $C^t(u)$ (in the true asynchronous majority dynamics) is accurately computed by the above modified process for all $t$, and therefore the two random variables are equal (ditto for $C^T(v)$ and $X_v$).

If $y$ happens to not be $x$, this means that $C^t(y)$ is known to be \correct (or $\bot$, before its first announcement) \emph{without inspecting $C^{t-1}(z)$ for $z \in S_y$}. Therefore, $C^{t}(y)$ can be written as a function of initial beliefs only of nodes it can reach without going through $S_y$, for all $t \leq T$. And importantly, $C^t(y)$ can be determined at all timesteps without ever inspecting $C^{t'}(z)$, for $z \in S_y$ and any $t'$, which means that $C^t(y)$ can be determined at all timesteps without ever inspecting $C^{t'}(x)$ for any timestep $t'$. So $C^t(y)$ is computed correctly in the auxiliary process because it never looks at $C^{t'}(x)$ anyway (for all $t \leq T$). 

As $y$ is on the path from $x$ to $u$, this further implies that $C^t(u)$ can be computed without ever inspecting $C^{t'}(x)$ for any $t'$, and therefore the auxiliary process also correctly computes $C^t(u)$, for all $t \leq T$. So in conclusion, we have shown that the random variable $X_u$ is equal to $C^T(u)$ whenever $x$ cuts $u$ from $v$. Similarly we have that $X_v$ is $C^T(v)$ whenever $x$ cuts $v$ from $u$. The lemma immediately follows from this claim.
\end{proof}

\begin{proof}[Proof of Lemma~\ref{lem:shortpath2}]
{First, observe that whether or not a node $w$ is safe thru $T$, even against $S_w$, are \emph{independent events}. This is because whether or not $w$ is safe thru $T$, even against $S_w$ can be determined as a function of the signals from nodes in $A_w$, the nodes that can be reached from $w$ without going through $S_w$, and the sets $A_w, w \in P(u,v)$ are disjoint. 

Now, observe that there certainly exists a node $x$ such that $\prod_{w \in P(u,x)}\deg(w) \geq \sqrt{X}$, and $\prod_{w \in P(x,v)} \deg(w) \geq \sqrt{X}$ (this is simply because the product of both terms is at least $X$). This will be our desired $x$. Let's further restrict attention to $P^*(u,x) = \{w \in P(u,x), \deg(w) \geq X^{1/(4|P(u,x)|)}\}$. It's further clear that we still have $\prod_{w \in P^*(u,x)} \deg(w) \geq X^{1/4}$. Now, let's analyze the probability that \emph{no} node $y$ on $P^*(u,x)$ is safe thru $T$, even against $S_y$. By independence, and using Proposition~\ref{prop:redalwaysred}, this is just $\prod_{w \in P^*(u,x)} c/\deg(w)$ for some constant $c$ (note that the application of Proposition~\ref{prop:redalwaysred} is valid by our assumptions on $T$ -- we will simply take the same $b$ promised for $a = 2$). 

We already know that $\prod_{w \in P^*(u,x)} \deg(w) \geq X^{1/4}$, so we get that the total probability that there exists \emph{no} node on $P^*(u,x)$ (and hence, $P(u,x)$) that is safe thru $T$, even against $S_y$ is at most $c^{|P^*(u,x)|}/X^{1/4}$. So now we just wish to see how to bound this. Note that $|P^*(u,x)|\leq d(u,v)$, so this term is upper bounded by:

$$c^{d(u,v)}/X^{1/4}.$$

Now, we'll choose to set our absolute constant $d$ as a function of $c$ (the absolute constant promised by Proposition~\ref{prop:redalwaysred} for $a = 2$) so that $\ln(c)/d = 1/8$ (or $d = 8 \ln(c)$), and recall that $d(u,v)$ is promised to be at most $\frac{\ln(X)}{d}$. Then:

$$c^{d(u,v)}/X^{1/4} \leq c^{\ln(X)/d}/X^{1/4} = e^{\ln(X) \ln(c)/d}/X^{1/4} = e^{\ln(X)/8}/X^{1/4} = X^{-1/8}.$$
}\end{proof}

\begin{proof}[Proof of Theorem~\ref{thm:majority}]
{Lemma~\ref{lem:expect} states that the expected number of \correct nodes is at least $(1/2+\delta-e^{-T/n})n$. Combined with Lemma~\ref{lem:deviation}, if we can show that $\sum_{u,v} \varepsilon^t_{uv} =o(n^2)$, then with probability $1-o(1)$, the realized number of \correct nodes at time $T$ is $(1/2+\delta/2-e^{-T/n})n$, proving the theorem.

Let $A$ denote the set of all $(u,v)$ such that $d(u,v) \geq \frac{\ln n}{8\ln\ln n}$. Let further $B$ denote the set of all $(u,v) \notin A$ such that $\prod_{w \in P(u,v)}\deg(w) \leq X = \sqrt{n}$. Finally, let $C$ denote all remaining pairs.

Observe that as $T \leq \frac{n\ln n}{32 \ln\ln n}$, pairs in $A$ satisfy the hypotheses of Proposition~\ref{prop:long} with $f(n) = \frac{n\ln n}{8\ln\ln n}$ and $k = f(n)/T \geq 4$. Therefore, such pairs have $\varepsilon^T_{uv} \leq 2e^{- \ln(n)/(24 \ln \ln n)}$ (used in the case where $T$ is close to $\frac{n\ln n}{32 \ln \ln n}$), and also {$\varepsilon^T_{uv} \leq 2(\frac{32eT\ln \ln n}{n\ln n})^{\ln n/(8 \ln \ln n)}$}. For sufficiently large $n$, and $T \leq \frac{n\ln^{1-\gamma}}{\ln \ln n}$, the latter term is further upper bounded by $n^{1-\alpha}$, for some constant $\alpha$ which depends on $\gamma$. Let's further update $\alpha := \min\{\alpha, 1/16\}$ for simplifying wrapping up at the end.

Pairs in $B$ satisfy the hypotheses of Lemma~\ref{lem:shortpath1} with $X = \sqrt{n}$, and therefore there are at most $n^{3/2}$/2 such pairs. Finally, let's confirm that pairs in $C$ satisfy the hypotheses of Lemma~\ref{lem:shortpath2} for $X = \sqrt{n}$ for sufficiently large $n$.

For the condition on $d(u,v)$, observe that $\ln(X)/d \geq \ln(n)/(2d)$, while we are guaranteed for pairs in $C$ that $d(u,v) \leq \frac{\ln n}{8\ln \ln n}$, which indeed is $\leq \ln(n)/(2d)$ for sufficiently large $n$. Next, we also need to make sure that the condition on $T$ is satisfied. Here, observe that the condition on $T$ is the strongest when $\prod_{w \in P(u,v)} \deg(w)$ is as \emph{small} as possible ($\sqrt{n}$), and $d(u,v)$ is as \emph{big} as possible ($\frac{\ln n}{8 \ln \ln n}$). So our condition becomes (below the LHS denotes the strictest upper bound on $T/n$ Corollary~\ref{cor:caseThree} might demand, and the RHS is simply the biggest $T/n$ we wish to apply Corollary~\ref{cor:caseThree} for). 

\begin{align*}
e^{b\sqrt{n}^{2\ln \ln n/ \ln n}} \geq \frac{\ln n}{32 \ln \ln n}\\
\Leftrightarrow e^{b e^{\ln n \ln \ln n / \ln n}} \geq\frac{\ln n}{32 \ln \ln n}\\
\Leftrightarrow e^{b e^{\ln \ln n}} \geq\frac{\ln n}{32 \ln \ln n}\\
\Leftrightarrow e^{b \ln n} \geq\frac{\ln n}{32 \ln \ln n}\\
\Leftrightarrow n^b \geq \frac{\ln n}{32\ln\ln n}
\end{align*}

The last inequality clearly holds for sufficiently large $n$. Now, we can simply bound $\sum_{u,v} \varepsilon^T_{uv}$ by summing over $A, B, C$:

$$\sum_{u,v} \varepsilon^T_{uv} = \sum_{(u,v)\in A} \varepsilon^T_{uv} + \sum_{(u,v)\in B} \varepsilon^T_{uv} + \sum_{(u,v)\in C} \varepsilon^T_{uv}$$
$$\leq |A|\cdot 2e^{- \ln(n)/(24 \ln \ln n)} + |B| + |C| \cdot n^{-1/16}$$
$$\leq n^2 \cdot e^{-\ln n / (24 \ln \ln n)},$$}

which holds in all cases, or we get the stronger bound below in the case that $T \leq \frac{n\ln^{1-\gamma}n}{\ln \ln n}$:

$$\sum_{u,v} \varepsilon^T_{uv} = \sum_{(u,v)\in A} \varepsilon^T_{uv} + \sum_{(u,v)\in B} \varepsilon^T_{uv} + \sum_{(u,v)\in C} \varepsilon^T_{uv}$$
$$\leq |A|\cdot 2n^{-\alpha} + |B| + |C| \cdot n^{-1/16}$$
$$\leq n^2 \cdot n^{-\alpha}.$$

\end{proof}

\section{Omitted Proofs From Section~\ref{sec:stable}}\label{app:stable}

\begin{proof}[Proof of Lemma~\ref{lem:stablechildren}]
Assume that $v$ changes her opinion at $t >T$ from $A$ to $B \neq A$, and her previous announcement (of $A$) was at time $t''$. Note that $t'' \geq t'\geq T$ by assumption. By Proposition~\ref{prop:critical}, there must have been some node $x$ that had $C^{t''}(x) \neq B$, but $C^{t}(x) = B$. However, during this entire window, $v$'s announcement is $A$, and during this entire window all of $v$'s children are nearly-finalized with respect to $v$. Therefore, any child of $v$ that changes her announcement during this window necessarily changes it to $A$. Therefore, if $v$ changes opinion to $B$ at $t$, it must be because $x = u$, and therefore we indeed have $C^t(v) = C^t(u)$ as desired.
\end{proof}

\begin{proof}[Proof of Lemma~\ref{lem:stable}]
The proof proceeds by induction. As a base case, consider when $v$ is a leaf. Then every time that $v$ announces it will copy its parent, as long as the announcement of its parent is not $\bot$. If its announcement is $\bot$, then $v$ will simply announce its initial belief. $v$ will finalize once its parent finalizes (which eventually happens with probability $1$ for all nodes). Therefore, $v$ is nearly-finalized with respect to its parent as soon as $v$ announces for the first time, which occurs at $T(v,v)$. So the base case holds.

Now assume that the lemma holds for all children of $v$. Observe first that $T_v >T_x$ for all $x$ that are children of $v$ (this is simply because any descendant $y$ of $x$ is also a descendant of $v$ and has $T(y,x) < T(y,v)$). Therefore, the inductive hypothesis claims that all children of $v$ are nearly-finalized with respect to $v$ at $T_v$. Moreover, observe that $v$ necessarily announces at $T_v$ (because $T_v = T(y,v)$ for some $y$, and $v$ announces at each $T(y,v)$). 

Lemma~\ref{lem:stablechildren} then claims that for any $t>T_v$ during which $v$ changes her announcement, $v$ copies her parent, $u$. We claim that this implies that $v$ is nearly-finalized with respect to $u$. {That is, until $v$ finalizes, $v$ \emph{always} copies her parent.}

Indeed, assume for contradiction that $v$ is not nearly-finalized with respect to $u$ at $T_v$. Then there must exist some time $t > T_v$ where $v$ changes their announcement (because $v$ must not be finalized at $T_v$). Let $t$ be the latest such time. Moreover, there must exist some time $t' \in (T_v,t)$ where $v$ announced such that either $C^{t'}(v) \neq C^{t'}(u) \neq \bot$, or $C^{t'}(v) \neq C^{t'-1}(v)$ and $C^t(u) = \bot$.

A direct application of Lemma~\ref{lem:stablechildren} in fact immediately rules the second case out. So let $A=C^{t'}(v)$, and $B = C^{t'}(u)\neq A$. Note also that both $A$ and $B$ are not $\bot$. But now recall that $v$ announced during time $t'$ and updated to a majority of her neighbors (tie-breaking for $X(v)$). All of her neighbors except for $u$ are nearly finalized with respect to $v$ at $t'$. So there cannot be some time $t''>t'$ where a majority of $v$'s neighbors (tie-breaking for $X(v)$) are $B$! This is because the first such $t''$ would have required some neighbor to flip from $A$ to $B$. It cannot be a child of $v$ because they can only flip to $A$. It cannot be $u$ because $u$ is already announcing $B$. So in fact, $v$ must be finalized at $t'$ (contradicting that $v$ changed their announcement at $t >t'$).
\end{proof}

\begin{proof}[Proof of Theorem~\ref{thm:stable}]
With Lemma~\ref{lem:stable}, we just need to get a bound on $T_v$. We do this by partitioning the $T$ steps into $Y$ epochs of length $T/Y$. We'll assign descendants that are at distance $i$ from $v$ to epoch $Y-i$. 

We'll now observe the following: if it is the case that all descendants of $v$ make at least one announcement during their assigned epoch, then $T_v \leq T$. This is because for any descendant $u$ of $v$, we have an ordered sequence of announcements along the path from $u$ to $v$ terminating by $T_v$. So we just want to bound the probability that any descendant doesn't announce during its assigned epoch.

For a single node, the probability that it doesn't announce during a window of length $T/Y$ is exactly $(1-1/n)^{T/Y} \leq e^{-T/(nY)}$. Taking a union bound over each of the $X$ nodes yields that the probability that any node fails is at most $Xe^{-T/(nY)}$.
\end{proof}

\begin{proof}[Proof of Lemma~\ref{lem:flip}]
Let $t'\in [T_v,t)$ be the most recent announcement of $v$ (we know that such a $t'$ exists, because $v$ announces at $T_v$), and say that $v$ announced $A$. Then $v$ copies the majority of its neighbors at $t'$, so a majority of its neighbors announced $A$. In between $t'$ and $t$, $v$ does not announce again (by definition). Other children of $v$ might announce, but because they nearly-finalized with respect to $v$, they will either keep their announcement, or switch to $A$. $v$'s parent may announce, and could announce either $A$ or $B$. But we know that one of $v$'s children changes their announcement at $t$, and because they are all nearly-finalized with respect to $v$, this announcement will be $A$, pushing the majority of $v$'s neighbors further towards $A$. Maybe $v$'s parent will indeed change their announcement from $A$ to $B$, but all this does is cancel out the child's switch to $A$, maintaining an $A$ majority. No other children of $v$ can possibly switch to $B$ without $v$ switching first, so $v$ will stay $A$ forever.
\end{proof}

\begin{proof}[Proof of Corollary~\ref{cor:counting}]
At any step $t > T_v$, let's look at the current states of $v$'s children. We know that every child is nearly-finalized with respect to $v$, so they are either finalized, or copying $v$ with every announcement. If $\lfloor (\deg(v)-1)/2 \rfloor$ of $v$'s children are finalized, then the corollary statement is satisfied. 

If not, then $v$ has at least $\lceil (\deg(v)-1)/2\rceil +1 >\deg(v)/2$ non-finalized children. If all of these children agree with $v$'s current announcement, then in fact $v$ is finalized. This is because each of these $\lceil (\deg(v)-1)/2 \rceil+1$ children will not change their announcement unless prompted by $v$, and $v$ will not change her announcement unless one of them change (because this constitutes a strict majority of $v$'s neighbors). If $v$ is finalized, then each of these children that agree with $v$ are certainly finalized, and the corollary is again satisfied.

So if $v$ is not finalized, \emph{and} $v$ has fewer than $\lfloor(\deg(v)-1)/2\rfloor$ finalized children, $v$ must have a non-finalized child that disagrees with $v$. Whoever this child is, it is nearly-finalized with respect to $v$. So if it is selected to announce, it will change its opinion to match $v$, causing $v$ to finalize by Lemma~\ref{lem:flip}, and have the desired number of finalized children by the reasoning above. So if $v$ does not yet have the desired number of finalized children by $t$, there is at least one child of $v$ such that if they are selected to announce, $v$ will then have the desired number of finalized children. The probability that this child is \emph{not} selected to announce for all $t \in (T_v,T_v+T]$ is at most $e^{-T/n}$. Note that as $v$ announces (possibly switching their announcement) during this window, the identity of this child may change. But it remains the case that at all steps $t$, either $v$ already has the desired number of finalized children, or there is some child who will cause $v$ to finalize when selected to announce.

For the ``Moreover,\ldots'' part of the statement, we'll use a similar proof to that of Theorem~\ref{thm:stable}. We'll again break down the $T$ steps into epochs of length $T/Y$. This time, we'll put descendants at distance $i$ from $v$ into epoch $i$. By the reasoning further above (which claims that once $v$ is finalized and a child $x$ of $v$ announces, $x$ is for sure finalized), we again conclude that as long as every descendant makes an announcement durings its assigned epoch, all descendants of $v$ are finalized. The probability that any descendant fails to announce during its assigned epoch is $\leq e^{-T/(nY)}$, so a union bound gives that the total failure probability is $Xe^{-T/(nY)}$.
\end{proof}

\section{Omitted Proofs From Section~\ref{sec:wrapup}}\label{app:wrapup}

\begin{proof}[Proof of Proposition~\ref{prop:good}]
First, observe that by Proposition~\ref{prop:redalwaysred} $v$'s parent is safe thru $O(n \ln n)$ with probability $1-o(1)$. By Corollary~\ref{cor:stopping}, the entire process actually terminates by $O(n \ln n)$, with probability $1-o(1)$. Therefore, $v$'s parent is finalized as soon as they announce (except with probability $o(1)$). 

So, the probability that $v$'s parent hasn't announced by $n \cdot \frac{\ln n}{64 \ln \ln n}$ is $o(1)$. By Theorem~\ref{thm:stable}, $v$ is nearly-finalized with respect to its parent by $n\cdot  \frac{\ln n}{64 \ln \ln n}$ with probability $1-o(1)$ (this can be observed by plugging in $T = \frac{n\ln n}{64 \ln \ln n}$, $X = \ln^{O(1)}(n), Y = O(\ln \ln n)$). When both events happen, this in fact means that $v$ is finalized. Finally, Corollary~\ref{cor:counting} implies that with probability $1-o(1)$, the entire subtree rooted at $v$ is finalized by an additional $n\cdot \frac{\ln n}{64 \ln \ln n}$ steps, proving the proposition.
\end{proof}

\begin{proof}[Proof of Proposition~\ref{prop:PA}]
We want to claim that most subtrees rooted at nodes that arrive after $n/\ln n$ and are a child of one of the first $n/\ln n$ are good. We will call the first $n/\ln n$ nodes to arrive, the \emph{early nodes}. For any node $v$, let $s(v)$ be the size of the subtree $H_v$ rooted at $v$ at the end of the process. Consider the process after all the early nodes arrive. Every time an early node $v$ succeeds and gets a child $u$, we call $u$ a \emph{critical node} and the subtree $H_u$ a \emph{critical subtree}, and $v$ is the \emph{source} of $H_u$. 

To make the proofs simpler, we'll view the Preferential Attachment model in the following way, as introduced by Pittel~\cite{Pittel94}. Note that this process is \emph{equivalent} to the standard Preferential Attachment definition given in the body. For every node $v$, immediately upon creation, start a Poisson clock with rate $1$. That is, draw a waiting time $W_v$ from the exponential distribution of rate one, and wait until $W_v$. When the clock ticks, create a new node $u$ that is a child of $v$, and immediately start a Poisson clock with rate $2$. Every time the clock ticks, create a new node (with its own Poisson clock), increase the rate by $1$ (to match the new degree of $v$), and immediately start the clock again. 

We'll refer to the \emph{subtree process at $v$} to be all Poisson processes of all descendents of $v$, and the \emph{node process at $v$} to simply be $v$'s Poisson process. Owing the memoryless nature of Poisson processes (and other properties), Pittel shows that this is equivalent to the standard definition. We repeat the proof below for completeness:

\begin{definition}[Alternate Process for Preferential Attachment Tree] $n$ nodes arrive sequentially, attaching a single edge to a pre-existing node at random proportional to its degree. Specifically:
\begin{itemize}
\item At continuous time $t = 0$, start a Poisson clock for node $v_1$ at rate $r_1 = 1$.
\item Initialize $i = 2$.
\item When any Poisson clock ticks:
\begin{itemize}
\item Create a new node $v_i$. Start a Poisson clock for node $v_i$ at rate $r_i = 1$. 
\item Let $j < i$ denote the index of the clock that ticked. 
\item Add an edge from $v_i$ to $v_j$. Increase the rate $r_j$ by $1$ ($r_j:= r_j+1$). Start a Poisson clock for node $v_j$ at rate $r_j$.
\item Increment $i$ by $1$ ($i:= i+1$).
\end{itemize}
\end{itemize}
\end{definition}

\begin{proposition}[\cite{Pittel94}]
The two definitions for the Preferential Attachment Process are identical.
\end{proposition}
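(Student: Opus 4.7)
The plan is to prove the two processes induce identical joint distributions over the sequence of labeled rooted trees they generate, by induction on the number of nodes $i$ that have been created. The key tool is the standard ``competing exponentials'' fact: if $E_1,\dots,E_k$ are independent exponentials with rates $\lambda_1,\dots,\lambda_k$, then $\min_j E_j$ is exponential with rate $\sum_j \lambda_j$, and $\arg\min_j E_j = \ell$ with probability $\lambda_\ell / \sum_j \lambda_j$, independent of the minimum value. This will be combined with the memoryless property of the exponential distribution.

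First I would verify the bookkeeping claim that at all times in the alternate process, $r_j$ equals $\deg(v_j)$, under the convention that $v_1$'s degree is counted to include the ``phantom'' edge to the special node $v_0$ from the standard definition. This is a one-line induction: every new node $v_i$ starts with $r_i = 1$, matching the single edge to its parent; whenever $v_j$ acquires an additional neighbor, $r_j$ is incremented by $1$; and the base case $r_1 = 1$ matches $v_1$'s phantom edge. As an immediate corollary, once $i$ nodes have been created, $\sum_{k=1}^i r_k = 2(i-1) + 1 = 2i - 1$.

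Next I would invoke the memoryless property: at any moment just after the $i$-th creation event, regardless of when each clock was last reset, the remaining time until clock $j$'s next tick is a fresh exponential with rate $r_j$, independent of the other clocks. By the competing-exponentials fact, the next tick (which creates $v_{i+1}$ and attaches it to its parent) is emitted by clock $j$ with probability $r_j / \sum_{k=1}^i r_k = \deg_i(v_j)/(2i-1)$. This is exactly the conditional distribution over parents prescribed by the standard definition given the first $i$ nodes, which closes the inductive step and establishes equality of the induced distributions on labeled trees.

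The only real subtlety is justifying the memoryless-restart argument simultaneously for all clocks, since distinct clocks were last reset at distinct random times. This can be handled cleanly by formulating the alternate process as a continuous-time Markov chain on rooted labeled trees whose jump rate out of a state with degree vector $(d_1,\dots,d_i)$ into the state obtained by attaching a fresh leaf to $v_j$ equals $d_j$, and then invoking the standard characterization that the embedded discrete jump chain of such a CTMC selects transitions with probability proportional to their rates. No other obstacles arise.
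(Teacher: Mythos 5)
Your proposal is correct and follows essentially the same route as the paper's proof: memorylessness of the exponential clocks plus the competing-exponentials fact that the argmin of independent exponentials is chosen proportionally to the rates. Your version is somewhat more careful than the paper's (explicitly verifying that $r_j=\deg(v_j)$ and $\sum_j r_j = 2i-1$, and formalizing the simultaneous-restart step via the embedded jump chain of a CTMC), but the underlying argument is the same.
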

\begin{proof}
The only thing we need to confirm is that every time a node arrives in the alternate process, the node it picks to attach to is drawn proportional to its degree. To see this, consider $t_i$, the (continuous) time of arrival of $v_i$, and $t_{i+1}$, the time of arrival of $v_{i+1}$. Because Poisson processes are memoryless, conditioned on one of them ticking at $t_i$, it is as if all of them were restarted at $t_i$. So, the node selected for $v_{i+1}$ to attach to is simply whichever clock ticks first.

Finally, observe that which clock ticks first is simply whichever waiting time is smallest, and each waiting time is drawn from an exponential distribution of rate equal to the degree. Recall that if $X_1,\ldots, X_k$ are independent exponential random variables of rates $r_1,\ldots, r_k$, then $\ell$ is the argminimum with probability exatly $r_\ell/\sum_j r_j$ for all $\ell$. That is, $\ell$ is the argminimum with probability proportional to its degree, meaning that indeed the next clock to tick is chosen proportional to the degree, and meaning that node $i+1$ attaches to a random node proportional to its degree, as desired.
\end{proof}

What's nice about the alternative view is that in some sense the only correlating feature between what happens in subtrees rooted at different nodes is the continuous time. Let's return now to our proof approach. Our plan is to look at all critical subtrees, and show that every critical subtree is good with probability $1-o(1)$. As all non-early nodes are in a critical subtree, Markov's inequality will suffice to let us conclude the proposition.

First, we'll want to analyze the size of a critical subtree, and show that with high probability, no critical subtree is too large. To this end, we prove several useful lemmas, all of which make use of ideas from~\cite{Pittel94}, and will rely on the following concentration inequality for exponential random variables:

\begin{theorem}[\cite{Janson14}]\label{thm:exp-concentration}
Let $X=\sum_{i=1}^mX_i$ with $X_i\sim Exp(a_i)$ independent, $\mu=\E(X)$ and $a=\min_i a_i$.
\begin{itemize}
\item For any $\lambda\ge1$, $\Pr(X\ge\lambda\mu)\le \lambda^{-1}e^{-a\mu(\lambda-1-\ln \lambda)}$
\item For any $\lambda\le1$, $\Pr(X\le\lambda\mu)\le e^{-a\mu(\lambda-1-\ln \lambda)}$
\end{itemize}
\end{theorem}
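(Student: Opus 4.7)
The approach is the classical exponential-moment (Chernoff) method, which is especially clean here because the MGF of Exp$(a_i)$ is the explicit rational function $\mathbb{E}[e^{sX_i}] = a_i/(a_i - s)$ for $s < a_i$, and independence makes the full MGF factorize. The one technical lemma I would establish first is the monotonicity pair: $\varphi(x) = -\ln(1-x)/x$ is increasing on $(0,1)$, and $\psi(x) = \ln(1+x)/x$ is decreasing on $(0,\infty)$. These monotonicities let me replace each $a_i$ by the minimum rate $a$ at the cost of a factor $a/a_i$, and since $\sum_i a/a_i = a\mu$, the ``shape parameter'' $a\mu$ that appears in the exponent of the theorem emerges automatically.

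With that reduction in hand, the lower tail is essentially mechanical. For $\lambda \leq 1$ and any $s > 0$, Markov applied to $e^{-sX}$ gives
\[
\Pr(X \leq \lambda\mu) \leq \exp\bigl(s\lambda\mu - \textstyle\sum_i \ln(1+s/a_i)\bigr) \leq \exp\bigl(s\lambda\mu - a\mu \ln(1+s/a)\bigr),
\]
where the second inequality uses $\psi$-monotonicity. Choosing $s = a(1/\lambda - 1)$ makes $1 + s/a = 1/\lambda$ and collapses the exponent to $-a\mu(\lambda - 1 - \ln \lambda)$, matching the claim exactly. The upper-tail exponent comes symmetrically: for $\lambda \geq 1$ and $s \in (0,a)$, Markov on $e^{sX}$ plus $\varphi$-monotonicity yields $\sum_i -\ln(1-s/a_i) \leq -a\mu \ln(1-s/a)$, and $s = a(1 - 1/\lambda)$ again produces the clean exponent $-a\mu(\lambda - 1 - \ln \lambda)$.

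The main obstacle is obtaining the extra $\lambda^{-1}$ prefactor in the upper-tail bound: plain Chernoff saturates at the bare exponential and never produces it. The cleanest route I see is exponential tilting. Under the $s$-tilted measure the summands remain independent exponentials, now with rates $a_i - s$, and at our optimal $s = a(1-1/\lambda)$ the tilted mean of $X$ is exactly $\lambda\mu$. Writing
\[
\Pr(X \geq \lambda\mu) \;=\; e^{-s\lambda\mu}\, M(s)\, \int_0^\infty e^{-sy}\, f_{\mathrm{tilt}}(\lambda\mu + y)\,dy,
\]
the leading factor $e^{-s\lambda\mu}M(s)$ contributes the target $e^{-a\mu(\lambda - 1 - \ln \lambda)}$, and the remaining integral is a Laplace transform of the tilted density above its mean. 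I would bound this integral by $1/(s\lambda\mu)$ using that the tilted density is bounded by its value at the mean times a decaying exponential, which together with $s\lambda\mu = a\mu(\lambda - 1) \geq \lambda - 1$ gives a $1/(\lambda - 1)$-style factor; sharpening this to $1/\lambda$ requires the additional observation that, in the equal-rates sub-case (which dominates the estimate by the monotonicity lemma above), $X$ is Gamma$(m,a)$ and one can invoke the identity $\Pr(\mathrm{Gamma}(m,1) \geq \lambda m) = \Pr(N_{\lambda m} \leq m-1)$ and estimate the Poisson tail by dominating the sum with its last term via a geometric series whose ratio is $1/\lambda$. This Laplace/Gamma refinement is the only delicate ingredient; the rest is routine algebra around the Chernoff optimization.
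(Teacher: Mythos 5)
The paper does not prove this statement; it is imported verbatim from Janson's note on tail bounds for sums of geometric and exponential variables, so there is no in-paper argument to compare against. Judged on its own terms, your proposal correctly handles everything except the $\lambda^{-1}$ prefactor. The monotonicity lemma (that $-\ln(1-x)/x$ increases and $\ln(1+x)/x$ decreases), the resulting bounds $\sum_i -\ln(1-s/a_i)\le -a\mu\ln(1-s/a)$ and $\sum_i \ln(1+s/a_i)\ge a\mu\ln(1+s/a)$, and the choices $s=a(1-1/\lambda)$ and $s=a(1/\lambda-1)$ all check out and deliver the exponent $-a\mu(\lambda-1-\ln\lambda)$ in both tails. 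That already proves the lower-tail bound in full and the upper-tail bound without the prefactor.

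The prefactor argument has two genuine gaps. First, "the equal-rates sub-case dominates the estimate by the monotonicity lemma" is not justified: the lemma bounds the moment generating function of $X$ by that of a Gamma$(a\mu,a)$ variable, but domination of MGFs does not give domination of tail probabilities, and $a\mu$ need not even be an integer, so the reduction to $X\sim\mathrm{Gamma}(m,a)$ is not available. Second, even in the genuine equal-rates case the Gamma--Poisson identity plus the geometric-series estimate yields a prefactor of order $\frac{1}{(\lambda-1)\sqrt{2\pi m}}$ (the ratio of consecutive Poisson terms is at most $1/\lambda$, so the series sums to $\frac{\lambda}{\lambda-1}$ times the last term), and this exceeds $\lambda^{-1}$ for every $\lambda$ and blows up as $\lambda\to 1^+$, precisely the regime where the trivial bound $\Pr\le 1$ does not rescue you. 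A clean repair that avoids both issues: write $X=X_j+R$ with $a_j=a$ and $R=\sum_{i\ne j}X_i\ge 0$ independent, and observe that $\Pr(X>t+u)\ge e^{-au}\Pr(X>t)$ for all $t,u\ge 0$ (condition on $R$ and use memorylessness of $X_j$). Hence $X-t$ conditioned on $X\ge t$ stochastically dominates an $\mathrm{Exp}(a)$ variable, so with $t=\lambda\mu$ and $s=a(1-1/\lambda)$,
\[
\E e^{sX}\;\ge\;\E\bigl[e^{sX}\ind[X\ge t]\bigr]\;\ge\;e^{st}\,\Pr(X\ge t)\cdot\frac{a}{a-s}\;=\;\lambda\, e^{st}\,\Pr(X\ge t),
\]
which inserts the factor $\lambda^{-1}$ directly into your Chernoff bound with no case analysis.
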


The following conceptual fact will be helpful in reasoning about the subtree processes without re-doing the same arithmetic.

\begin{fact}
The subtree process at $v$ is itself a Preferential Attachment process, just starting at a time $t > 0$.
\end{fact}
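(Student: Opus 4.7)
The plan is to decompose the joint Poisson clock system into its restriction to the subtree at $v$ and its complement, and then to check that the restriction to the subtree satisfies exactly the rate rules of the alternate preferential attachment process of Pittel with $v$ playing the role of $v_1$.

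First, observe that the subtree process at $v$ is driven entirely by the Poisson clock of $v$ together with the Poisson clocks of every node ever created as a descendant of $v$. The reason is that a node $u$ is added to the subtree rooted at $v$ if and only if the tick that creates $u$ comes from a clock already inside the subtree (starting with $v$'s clock itself), and when such a tick occurs it only affects the rate of the node that fired. Consequently, no tick or rate update outside the subtree ever changes a clock inside the subtree. Since the underlying Poisson clocks are mutually independent by definition of the alternate process, the subtree process at $v$ is probabilistically independent of everything happening at non-descendants of $v$.

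Next I would check that the rate bookkeeping inside the subtree matches the alternate preferential attachment rules. Consider the continuous time $t$ at which $v$ is created. At that instant $v$ starts a Poisson clock at rate $1$, and $v$ has zero children in the subtree; this matches the initial condition of $v_1$ in the alternate process. Every subsequent tick of $v$'s clock creates a new subtree child of $v$ and raises $v$'s rate by $1$, and every newly created descendant $u$ begins with rate $1$ and has its rate incremented by $1$ each time $u$'s clock fires. By induction on tick times within the subtree, the rate of any node in the subtree at any moment equals $1$ plus its current number of subtree children, which is precisely the rule governing the alternate preferential attachment process.

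Finally, to convert ``starting at time $t$'' into ``starting at time $0$'' I would invoke the memoryless property of the exponential distribution: conditional on the entire history up to time $t$, the residual waiting time on $v$'s freshly started clock is $\mathrm{Exp}(1)$, and the same memorylessness will apply to each descendant's clock at the instant it is created. Thus the shifted-by-$t$ process $(t'\mapsto\text{state of the subtree at time }t+t')$ has the same law as the alternate preferential attachment process begun at time $0$ from a single root. The only real pitfall is the usual $+1$ offset at the root — one must be careful that $v$, like $v_1$, starts at rate $1$ with no children, rather than at rate equal to its external degree — but once that bookkeeping is handled the equivalence is immediate from the two observations above.
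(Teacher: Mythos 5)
Your proposal is correct and matches the reasoning the paper relies on implicitly: the paper states this as a ``conceptual fact'' without a written proof, justified only by the independence and memorylessness of the Poisson clocks in Pittel's alternate construction, which is exactly the decomposition you carry out (clocks inside the subtree are unaffected by and independent of those outside, the root $v$ starts at rate $1$ with no subtree children just as $v_1$ does, and memorylessness handles the time shift). Your explicit check of the $+1$ offset at the root is a worthwhile detail the paper glosses over.
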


\begin{lemma}
Let $Y_i$ denote the waiting time in between when the $i^{th}$ node arrives and when the $(i+1)^{th}$ node arrives. Then $Y_i$ is an exponential random variable of rate $2i-1$, and all $Y_i$ are independent.
\end{lemma}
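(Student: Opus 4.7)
The plan is to exploit the memorylessness of exponential distributions together with the standard fact that the minimum of independent exponentials is itself exponential whose rate is the sum of the individual rates. Since each $Y_i$ is precisely the waiting time until one of the currently active Poisson clocks ticks, the main task is (a) to identify the sum of clock rates at the moment $v_i$ arrives, and (b) to argue that once $v_{i+1}$ arrives, the system ``starts fresh'' so that $Y_{i+1}$ is independent of $(Y_1, \ldots, Y_i)$.

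First, I would prove by induction on $i$ that immediately after $v_i$ arrives, the sum of all active clock rates equals $2i-1$. Base case: after $v_1$ arrives, there is a single clock at rate $1 = 2(1)-1$. Inductive step: when $v_{i+1}$ arrives, by the process description exactly one existing node's rate increases by $1$ and a brand-new clock at rate $1$ is initiated, so the total rate increases by $2$, going from $2i-1$ to $2(i+1)-1$. Note that the new total rate does not depend on which clock ticked, which is exactly what we want.

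Next, I would argue that at the moment $v_i$ arrives, by the memoryless property of the exponential distribution (both the freshly-started clock for the incremented node $v_j$, the freshly-started clock for $v_i$ itself, and — again by memorylessness — all other clocks which were running prior) the system behaves as if $i$ independent exponentials with rates $r_1, \ldots, r_i$ summing to $2i-1$ were all started at time $t_i$. The waiting time $Y_i$ until the next arrival is then the minimum of these, which is a standard exercise: it is $\mathrm{Exp}(2i-1)$, and moreover the identity of the argmin (the node to which $v_{i+1}$ attaches) is independent of this minimum.

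Finally, for independence of the $Y_i$'s, observe that the joint distribution of the clocks at time $t_{i+1}$ depends on the history only through the current configuration (which nodes exist and what their degrees are). Applying memorylessness once more at time $t_{i+1}$, the ``new'' system of $i+1$ clocks is independent of everything that happened up to time $t_{i+1}$, in particular of $(Y_1,\ldots,Y_i)$. Hence $Y_{i+1}$ is independent of the preceding waiting times, and a straightforward induction gives mutual independence. I do not anticipate any real obstacle here — the only thing one has to be careful about is invoking memorylessness consistently so that each $Y_i$ can legitimately be written as a minimum of freshly-started exponentials rather than as a minimum of partially-elapsed ones.
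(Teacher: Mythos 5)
Your proposal is correct and follows essentially the same route as the paper: both identify $Y_i$ as the minimum of the $i$ currently running exponential clocks, use the fact that the rates sum to $2i-1$ (which, as you correctly note, is deterministic regardless of the attachment history), and invoke memorylessness for independence. Your write-up simply fills in details the paper leaves implicit (the induction on the total rate and the restart argument at each arrival time), so no changes are needed.
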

\begin{proof}
The fact that $Y_i$ are independent follows immediately from the memorylessness property of exponential random variables. The waiting time between when the $i^{th}$ node arrives and the $(i+1)^{th}$ is just the minimum of $i$ exponential random variables, with rates $r_1,\ldots, r_i$. The minimum of these is itself distributed according to an exponential distribution of rate $\sum_j r_j = 2i-1$, as desired.
\end{proof}

\begin{lemma}\label{lem:PA1}
For all $k$, $\sum_{i=k}^m Y_i$ is the time from when the $k^{th}$ node arrives until the $m^{th}$. Also:
\begin{itemize}
\item $\mathbb{E}[\sum_{i=k}^m Y_i] \in (\ln(m/k)/4, \ln(m/k))$,
\item $\Pr[\sum_{i=k}^m Y_i < \ln(m/k)/8] \leq e^{-\Omega(k\ln(m/k))}$, and
\item $\Pr[\sum_{i=k}^m Y_i > 2\ln(m/k)] \leq e^{-\Omega(k\ln(m/k))}$.
\end{itemize}
\end{lemma}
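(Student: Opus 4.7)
}
The plan is to invoke the fact (from the preceding lemma) that the $Y_i$ are independent with $Y_i \sim \mathrm{Exp}(2i-1)$, compute $\mu := \mathbb{E}[\sum_{i=k}^m Y_i]$ by integral comparison, and then feed everything into Theorem~\ref{thm:exp-concentration}. The opening claim, that $\sum_{i=k}^m Y_i$ is the elapsed continuous time between the arrival of the $k^{th}$ and $m^{th}$ nodes, is immediate from the definition of $Y_i$ (it is a telescoping sum of interarrival gaps), so the work is entirely in the three itemized bullets.

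For the expectation, linearity gives $\mu = \sum_{i=k}^m 1/(2i-1)$. The lower bound uses $1/(2i-1) > 1/(2i)$ together with the standard integral estimate $\sum_{i=k}^m 1/i \geq \ln((m+1)/k) \geq \ln(m/k)$, yielding $\mu > \ln(m/k)/2 > \ln(m/k)/4$. The upper bound uses $1/(2i-1) \leq 1/i$ for $i\geq 1$ together with $\sum_{i=k}^m 1/i \leq 1/k + \ln(m/k)$; the slack between the desired $\ln(m/k)$ and the actual $\ln(m/k)/2$ easily absorbs the $1/k$ correction (and since the lemma is only used in the regime where $k$ is large and $m/k$ is bounded away from $1$, this is harmless). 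Either bound leaves a comfortable gap in the constants.

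For the two tail bounds, apply Theorem~\ref{thm:exp-concentration} with $a = \min_{i \geq k}(2i-1) = 2k-1$, so $a\mu = \Omega(k \ln(m/k))$ by the expectation bound above. For the upper tail, set $\lambda = 2\ln(m/k)/\mu$; since $\mu \leq \ln(m/k)$ we have $\lambda \geq 2$, and $\lambda - 1 - \ln\lambda \geq 1 - \ln 2$ is a positive absolute constant, so the first branch gives $\Pr[\sum Y_i \geq 2\ln(m/k)] \leq e^{-\Omega(k\ln(m/k))}$. For the lower tail, set $\lambda = \ln(m/k)/(8\mu)$; since $\mu \geq \ln(m/k)/4$ we have $\lambda \leq 1/2$, so $\lambda - 1 - \ln\lambda \geq -1/2 + \ln 2 > 0$ is again a positive constant, and the second branch gives $\Pr[\sum Y_i \leq \ln(m/k)/8] \leq e^{-\Omega(k\ln(m/k))}$.

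Nothing here is technically subtle; the entire argument is a direct appeal to Theorem~\ref{thm:exp-concentration} once the expectation is controlled. The only thing requiring mild care is verifying that the chosen $\lambda$ lies on the correct side of $1$ (above for the upper tail, below for the lower tail) so that the corresponding branch of the theorem is applicable — but this is exactly what the two-sided expectation bound $\mu \in (\ln(m/k)/4,\ln(m/k))$ is designed to guarantee, which is why those particular loose constants appear in the first bullet.
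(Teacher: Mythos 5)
Your proposal is correct and follows essentially the same route as the paper's proof: both compute $\mu=\sum_{i=k}^m 1/(2i-1)$ and plug the minimum rate ($a=k$ in the paper, $a=2k-1$ in yours) and the two-sided expectation bound into Theorem~\ref{thm:exp-concentration}, choosing $\lambda$ at or beyond $1/2$ and $2$ so that $\lambda-1-\ln\lambda$ is a positive constant. Your treatment is if anything slightly more careful than the paper's about the harmonic-sum estimates and about checking which branch of the theorem applies.
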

\begin{proof}
Observe that each $Y_i$ is an exponential random variable of rate at least $2i-1 \geq k$. Therefore, $\mathbb{E}[\sum_{i=k}^{m} Y_i] = \sum_{i=k}^m \frac{1}{2i-1} \in (\ln(m/k)/4, \ln(m/k))$, and we can apply Theorem~\ref{thm:exp-concentration} with $\lambda = 1/2, a = k, \mu \geq \ln(m/k)/4$ to get the first desired bound. The second one follows from $\lambda = 2, a = k, \mu \leq \ln(m/k)$.
\end{proof}

Below, we improve the second bound in the case of small $k$.
\begin{corollary}When $k = 1$, we improve Lemma~\ref{lem:PA1} to:
\begin{itemize}
\item $\mathbb{E}[\sum_{i=1}^m Y_i] \in (\ln(m)/4, \ln(m))$, and
\item $\Pr[\sum_{i=1}^m Y_i < \ln(m)/16] \leq e^{-\Omega(\sqrt{m} \ln m)}$.
\end{itemize}
\end{corollary}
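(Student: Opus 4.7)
The expectation bound is immediate: with $k=1$, the previous lemma already gives $\mathbb{E}[\sum_{i=1}^m Y_i] = \sum_{i=1}^m \frac{1}{2i-1} \in (\ln(m)/4, \ln(m))$, so there is nothing new to do for the first bullet. The reason we cannot simply quote the second bullet of the previous lemma with $k=1$ is that it would yield only $e^{-\Omega(\ln m)} = m^{-\Omega(1)}$, which is exponentially weaker than the claimed $e^{-\Omega(\sqrt{m}\ln m)}$. The weakness comes from the rate $a = \min_i(2i-1) = 1$ at $i=1$: the first few $Y_i$ are too noisy to control tightly.

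The plan is to discard the troublesome small-index terms entirely and apply the previous concentration bound only to the tail, where the rate is large. Concretely, split
\[
\sum_{i=1}^m Y_i \;=\; \sum_{i=1}^{\lceil\sqrt{m}\rceil - 1} Y_i \;+\; \sum_{i=\lceil\sqrt{m}\rceil}^{m} Y_i,
\]
and lower-bound the first sum by $0$ (each $Y_i \geq 0$). Then
\[
\Pr\!\left[\sum_{i=1}^m Y_i < \ln(m)/16\right] \;\leq\; \Pr\!\left[\sum_{i=\lceil\sqrt{m}\rceil}^m Y_i < \ln(m)/16\right].
\]

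Now apply the previous lemma (Lemma~\ref{lem:PA1}) directly with $k = \lceil\sqrt{m}\rceil$. Its first bullet gives $\mu := \mathbb{E}[\sum_{i=\lceil\sqrt{m}\rceil}^m Y_i] \geq \ln(m/\lceil\sqrt{m}\rceil)/4 \geq \ln(m)/8 - o(1)$, which (for sufficiently large $m$) is comfortably larger than $\ln(m)/16$. Equivalently, the event $\{\sum_{i=\lceil\sqrt m\rceil}^m Y_i < \ln(m)/16\}$ is contained in the event that this partial sum is below half its expectation, so Theorem~\ref{thm:exp-concentration} applies with $\lambda = 1/2$, $a = 2\lceil\sqrt{m}\rceil - 1 = \Theta(\sqrt{m})$, and $\mu = \Theta(\ln m)$. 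This yields a tail probability of $e^{-a\mu(\lambda - 1 - \ln \lambda)} = e^{-\Omega(\sqrt{m}\,\ln m)}$, as desired.

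There is no genuine obstacle here; the only subtle point is recognizing that one must sacrifice the $O(\sqrt{m})$ smallest-rate terms (whose contribution to the expectation is $O(1)$ anyway) in order to enjoy the much larger minimum rate $a \geq 2\sqrt{m}-1$ in the concentration inequality. After that truncation, the bound is a direct substitution of $k = \lceil\sqrt{m}\rceil$ into the second bullet of Lemma~\ref{lem:PA1}.
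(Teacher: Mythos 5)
Your proposal is correct and follows essentially the same route as the paper: drop the first $\Theta(\sqrt{m})$ terms, lower-bound the full sum by the tail $\sum_{i\geq\sqrt{m}}^m Y_i$, and apply Lemma~\ref{lem:PA1} (equivalently Theorem~\ref{thm:exp-concentration}) with $k=\sqrt{m}$, noting that $\ln(m/\sqrt{m})/8=\ln(m)/16$ and the minimum rate is now $\Theta(\sqrt{m})$. No gaps.
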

\begin{proof}
Observe that $\sum_{i=1}^m Y_i \geq \sum_{i=\sqrt{m}}^m Y_i$. We'll therefore apply Lemma~\ref{lem:PA1} with $k = \sqrt{m}$ and conclude that the probability that $\sum_{i = \sqrt{m}}^m Y_i < \ln(\sqrt{m})/8$ is at most $e^{-\Omega(\sqrt{m}\cdot \ln (m))}$. 
\end{proof}

With these, we may now claim the following:
\begin{proposition}\label{prop:criticalsubtrees}
With probability at least $1-e^{-\Omega(\ln^{15}(n) \ln \ln n)}$, every critical subtree has at most $\ln^{32}(n)$ nodes. 
\end{proposition}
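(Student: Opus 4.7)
The plan is to combine Pittel's continuous-time representation with the concentration bounds in Lemma~\ref{lem:PA1} and its corollary, then union bound over critical subtrees. The key conceptual ingredient is the fact, stated just before the proposition, that the subtree process rooted at any node $u$ is itself a fresh PA process starting from one node at the moment $u$ is created, driven by Poisson clocks independent of everything outside the subtree.

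First, I would bound the continuous-time window in which a critical subtree has to grow. Because a critical node $u$ arrives strictly after the $(n/\ln n)$-th node, the time available for its subtree is at most $\sum_{i=n/\ln n}^{n-1} Y_i$, the waiting time between the arrivals of the $(n/\ln n)$-th and $n$-th nodes. Applying Lemma~\ref{lem:PA1} with $k=n/\ln n$ and $m=n-1$,
\[
  \Pr\Bigl[\sum_{i=n/\ln n}^{n-1} Y_i > 2\ln \ln n\Bigr] \le e^{-\Omega((n/\ln n)\cdot \ln \ln n)},
\]
which is far smaller than the target failure probability; condition on the complementary event from now on.

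Next, I would bound the growth of each individual critical subtree. Fix a critical node $u$. By the subtree-is-a-PA-process fact, the time for $u$'s subtree to reach $\ln^{32} n$ nodes is distributed as $\sum_{i=1}^{\ln^{32} n - 1} Y'_i$ with independent $Y'_i \sim \mathrm{Exp}(2i-1)$. Applying the $k=1$ corollary to Lemma~\ref{lem:PA1} with $m = \ln^{32} n$, the probability that this sum falls below $\ln(\ln^{32} n)/16 = 2\ln\ln n$ is at most
\[
  e^{-\Omega(\sqrt{\ln^{32} n}\cdot \ln \ln^{32} n)} \;=\; e^{-\Omega(\ln^{16} n \cdot \ln \ln n)}.
\]
Since there are at most $n$ critical nodes, a union bound gives total failure probability $n \cdot e^{-\Omega(\ln^{16} n \ln \ln n)} = e^{-\Omega(\ln^{16} n \ln \ln n)}$, comfortably within the claimed $\ln^{15} n$ exponent.

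The main point to handle carefully is independence: although the critical subtrees share a single underlying continuous clock, Pittel's representation assigns each subtree its own family of exponential waiting times, so each subtree's concentration bound is genuinely independent of both the overall time budget and other subtrees' growth, and the union bound is valid. Once that is in place, the proposition follows by combining the two bad-event bounds by one final union bound.
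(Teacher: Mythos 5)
Your proposal is correct and follows essentially the same route as the paper: bound the continuous time between the $(n/\ln n)$-th and $n$-th arrivals by $2\ln\ln n$ via Lemma~\ref{lem:PA1}, bound the probability that any single critical subtree (itself a fresh PA process) reaches $\ln^{32} n$ nodes within that window via the $k=1$ corollary, and union bound over the at most $n$ critical subtrees. The only cosmetic difference is that you track the exponent as $\ln^{16} n$ after the union bound, which is slightly stronger than the $\ln^{15} n$ the paper states; both are fine, and since the argument is a union bound over bad events, the independence discussion, while reassuring, is not actually needed.
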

\begin{proof}
First, we can conclude from Lemma~\ref{lem:PA1} that except with probability $e^{-\Omega(n\ln \ln n / \ln n)}$ (plugging in $k = n/\ln n$, $m = n$), the time it takes to go from $n/\ln n$ nodes until $n$ nodes (and the process terminates) is at most $2\ln \ln n$. Similarly, for any non-early node, the probability that it achieves $\ln^{32}(n)$ descendents before $2\ln \ln n$ is at most $e^{-\Omega(\ln^{16}(n) \ln \ln n)}$. Taking a union bound over the (at most) $n$ critical subtrees, we get that the probability that any non-early node has more than $\ln^{32}(n)$ descendents is $e^{-\Omega(\ln^{15}(n) \ln \ln n)}$. 
\end{proof}

At this point, we've shown that with very high probability, no critical subtree is too large, which completes step one of the proof. Now, we wish to analyze the degree of the source of critical subtrees. We'll do this in two steps. First, we'll claim that if the root of a critical subtree arrives sort-of early (before $n/\sqrt{\ln n}$), then with good probability that root's source gains $\ln^{\Omega(1)}(n)$ further children by the end of the process. We'll then claim that if the root arrives late, then with good probability the root it attaches itself to already has degree $\ln^{\Omega(1)}(n)$. 

To this end, we'll denote $Z^v_i$ to be the waiting time between when node $v$ gets its $(i-1)^{th}$ child and its $i^{th}$. 

\begin{lemma}\label{lem:PAdegree}
For all $k, m$, we have:
\begin{itemize}
\item $\mathbb{E}[\sum_{i=k}^m Z^v_i] = \ln(m/k) + O(1)$.
\item $\Pr[\sum_{i=k}^m Z^v_i > 2 \ln(m/k)] \leq e^{-\Omega(k \ln (m/k))}$.
\end{itemize}
\end{lemma}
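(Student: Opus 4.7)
The plan is to first identify the distribution of each $Z^v_i$, then compute the expectation via a harmonic sum, and finally apply the exponential tail bound Theorem~\ref{thm:exp-concentration} for the concentration claim.

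First I would note that in the alternative (Pittel) formulation, whenever $v$ has exactly $i-1$ children, its total degree is $i$ (one edge to its parent plus $i-1$ edges to children), so $v$'s Poisson clock is running at rate $i$. By the memorylessness of Poisson processes, the waiting time $Z^v_i$ from $v$'s $(i-1)$st child to its $i$th child is therefore distributed as an exponential random variable of rate $i$, and these waiting times are independent across $i$ (conditioned on the event that $v$ ever accumulates that many children, which we may assume throughout since we only care about the moment when the waiting completes). The expectation claim then follows immediately from
\[
\mathbb{E}\Big[\sum_{i=k}^{m} Z^v_i\Big] = \sum_{i=k}^{m} \frac{1}{i} = \ln(m/k) + O(1),
\]
using the standard estimate for the harmonic sum.

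For the tail bound, I would apply Theorem~\ref{thm:exp-concentration} to $X = \sum_{i=k}^{m} Z^v_i$ with $a = \min_i i = k$ and $\mu = \ln(m/k) + O(1)$. To bound $\Pr[X > 2\ln(m/k)]$ directly, pick $\lambda = 2\ln(m/k)/\mu$, which is at least $1$ (and close to $2$) once $m/k$ exceeds a sufficiently large constant, so that $\lambda\mu = 2\ln(m/k)$. The theorem then gives
\[
\Pr\big[X \ge 2\ln(m/k)\big] \le \lambda^{-1} \exp\!\big(-k\mu(\lambda - 1 - \ln \lambda)\big).
\]
Since $\lambda$ is bounded away from $1$, the factor $\lambda - 1 - \ln\lambda$ is an absolute positive constant, and $k\mu = \Theta(k \ln(m/k))$, yielding the desired $e^{-\Omega(k\ln(m/k))}$ bound.

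I do not expect any real obstacles here; the only mild subtleties are (i) verifying that $\mu$ agrees with $\ln(m/k)$ up to constants so that the conversion between $\lambda\mu$ and $2\ln(m/k)$ does not cost us anything, and (ii) handling the trivial regime where $m/k$ is bounded by a constant, where the claimed bound holds vacuously (since $e^{-\Omega(k\ln(m/k))}$ is then $\Omega(1)$). These are routine and can be packaged in a single line.
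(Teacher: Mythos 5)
Your proposal is correct and follows essentially the same route as the paper: identify each $Z^v_i$ as an independent exponential of rate $i$, sum the harmonic series for the expectation, and plug $a=k$, $\mu=\ln(m/k)+O(1)$ into Theorem~\ref{thm:exp-concentration} for the tail. The only (immaterial) difference is that you choose $\lambda = 2\ln(m/k)/\mu$ to hit the threshold $2\ln(m/k)$ exactly, whereas the paper simply takes $\lambda=2$ and absorbs the $O(1)$ discrepancy into the $\Omega(\cdot)$.
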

\begin{proof}
Observe that each $Z^v_i$ is an exponential random variable of rate $i$. As such, $\mathbb{E}[\sum_{i=k}^mZ^v_i] = \sum_{i=k}^m 1/i = \ln(m/k) + O(1)$. Again, directly plugging into Theorem~\ref{thm:exp-concentration} with $a = k$, $\lambda = 2$, and $\mu = \ln(m/k) + O(1)$ we get the desired bound.
\end{proof}

\begin{corollary}\label{cor:kindofearly}
For any node $v$ that arrives before the $(n/\sqrt{\ln n})^{th}$ arrival, no matter its degree at the arrival of the $(n/\sqrt{\ln n})^{th}$ arrival, with probability at least $1- e^{-\Omega(\ln\ln n)}$, $v$ gets $\ln^{1/32}(n)$ new neighbors before the $n^{th}$ arrival.
\end{corollary}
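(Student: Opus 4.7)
The plan is to work in Pittel's continuous-time formulation. I would let $t_1$ and $t_2$ denote the continuous arrival times of the $(n/\sqrt{\ln n})$-th and $n$-th nodes respectively, and write $\tau := t_2 - t_1$, $\Delta := \ln\ln n / 16$, and $\ell := \lceil \ln^{1/32}(n)\rceil$. My overall strategy is a union bound on two events: (a) $\tau \geq \Delta$, and (b) conditional on $v$'s (arbitrary) degree $d \geq 1$ at $t_1$, $v$'s Poisson clock fires at least $\ell$ times in the window $[t_1, t_1 + \Delta]$. When both events hold, $v$ gains $\ell$ new children within time $\tau$, giving the desired conclusion.

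For (a), I would apply Lemma~\ref{lem:PA1} with $k = n/\sqrt{\ln n}$ and $m = n$. Then $\ln(m/k) = \ln\ln n/2$, and the second bullet of the lemma yields $\Pr[\tau < \ln(m/k)/8] = \Pr[\tau < \Delta] \leq e^{-\Omega(k\ln(m/k))} = e^{-\Omega(n\ln\ln n/\sqrt{\ln n})}$, far smaller than the target $e^{-\Omega(\ln\ln n)}$. Crucially, this bound remains valid after conditioning on the full configuration at $t_1$, since each waiting time $Y_i$ in the Pittel process has rate equal to the total degree $2i-1$, which depends only on the arrival count and not on individual node degrees.

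For (b), by memorylessness, conditional on $v$ having degree $d \geq 1$ at $t_1$ the waiting time for $v$'s next $\ell$ children equals $\sum_{i=d}^{d+\ell-1} Z^v_i$ with $Z^v_i \sim \mathrm{Exp}(i)$ independent. A standard coupling (larger $d$ only increases every rate) shows this sum is stochastically largest at $d = 1$, so I would reduce to the worst case $d=1$. Applying Lemma~\ref{lem:PAdegree} with parameters $k=1$ and $m=\ell$ gives $\mathbb{E}\bigl[\sum_{i=1}^\ell Z^v_i\bigr] = \ln\ell + O(1) = \ln\ln n/32 + O(1)$, together with $\Pr\bigl[\sum > 2\ln\ell\bigr] \leq e^{-\Omega(\ln\ell)} = e^{-\Omega(\ln\ln n)}$. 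Since $2\ln\ell = \ln\ln n/16 = \Delta$ (up to lower-order terms harmlessly absorbed into the $\Omega$), this supplies the bound on the complement of (b).

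The main subtlety I anticipate is that $\tau$ and $v$'s clock firings in $[t_1, t_1+\Delta]$ are \emph{not} independent, since both depend on $v$'s own Poisson process. The union bound on the \emph{complements} of (a) and (b) sidesteps this entirely: regardless of joint dependence, $\Pr[\neg(a) \cup \neg(b)] \leq \Pr[\neg(a)] + \Pr[\neg(b)] = e^{-\Omega(\ln\ln n)}$, as required. A minor further point to handle is the careful matching of Lemma~\ref{lem:PAdegree}'s indexing (relating ``waiting time for $v$'s $i$-th child'' to the degree progression $d, d+1, \ldots$), which the stochastic dominance step smooths out by letting one always pessimistically start from $d=1$ independent of $v$'s actual degree at $t_1$.
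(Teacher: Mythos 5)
Your proposal is correct and follows essentially the same route as the paper's proof: Lemma~\ref{lem:PA1} with $k=n/\sqrt{\ln n}$, $m=n$ to guarantee a window of continuous time $\ln\ln n/16$, Lemma~\ref{lem:PAdegree} with the worst case of $v$ having no children yet (your stochastic-dominance reduction to $d=1$) to show $\ln^{1/32}(n)$ new children arrive within that window, and a union bound. Your added care about memorylessness, the uniformity of the bound over $v$'s degree at $t_1$, and the harmlessness of the dependence between the two events just makes explicit what the paper leaves implicit.
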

\begin{proof}
First, again conclude immediately from Lemma~\ref{lem:PA1} (plugging in $k = n/\sqrt{\ln n}$, $m = n$) that except with probability $e^{-\Omega(n\ln\ln n/\sqrt{\ln n})}$, the time it takes to go from $n/\sqrt{\ln n}$ nodes until $n$ nodes is at least $\ln\ln (n)/16$. Similarly, Lemma~\ref{lem:PAdegree} (observing that the worst case is when $v$ has no children yet at $n/\sqrt{\ln n}$) claims that except with probability $e^{-\Omega(\ln \ln n)}$, $v$ has at least $\ln(n)^{1/32}$ children by time $\ln\ln(n)/16$. Taking a union bound completes the proof.
\end{proof}

This allows us to conclude that if a critical subtree attaches to its source \emph{before} $n/\sqrt{\ln n}$, then with high probability that source gets $\ln^{\Omega(1)}(n)$ neighbors by termination. We now consider the case that a critical subtree attaches to its source \emph{after} $n/\sqrt{\ln n}$. To this end, we'll denote $Z^S_i$ to be the waiting time between when a \emph{set of nodes} $S$ gets its $(i-1)^{th}$ direct descendent and its $i^{th}$ (where by a direct descendent, we mean a node not in $S$ attached to a node in $S$). 

\begin{lemma}\label{lem:PAsets}
Let $S$ be the first $s$ nodes to arrive in preferential attachment. Then for $m \geq s$:\footnote{The assumption that $m \geq s$ is for simplicity of exposition, as this is the only case we need. The ideas in the lemma hold for arbitrary $m$ but would require keeping track of the difference between $m$ and $m+s$.}
\begin{itemize}
\item $\mathbb{E}[\sum_{i=1}^m Z^S_i] = \ln(\frac{m}{2s}) + O(1)$.
\item $\Pr[\sum_{i=1}^m Z^S_i > 2 \ln(\frac{m}{2s})] \leq e^{-\Omega(s \ln (m/s))}$.
\end{itemize}
\end{lemma}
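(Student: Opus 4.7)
The plan is to work in Pittel's continuous-time construction and mirror the arguments used for Lemmas~\ref{lem:PA1} and~\ref{lem:PAdegree}. Let $R_k$ be the sum of Poisson rates across clocks at nodes in $S$ immediately after $S$ has acquired its $k$-th direct descendant. When the first $s$ nodes exist (and no direct descendants of $S$ have appeared yet), the total system rate equals $2s-1$ by a simple induction on arrivals ($v_1$ starts at rate $1$ and each subsequent arrival contributes $+2$), and all of that rate sits inside $S$, so $R_0 = 2s-1$. Between the $(k-1)$-th and $k$-th direct descendants, other non-$S$ nodes may be born attaching to non-$S$ parents; such events leave $R$ unchanged because neither clock involved contributes to $R$. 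Births of direct descendants of $S$ are exactly the events that bump $R$ by $+1$ (a single $S$-node's rate increases by one). Hence $R_k = 2s-1+k$ for every $k \geq 0$.

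By the memorylessness of Poisson processes, the waiting time $Z^S_k$ for the $k$-th direct descendant — measured from the birth of the $(k-1)$-th one — is exponential with rate equal to the current total $S$-rate, namely $R_{k-1} = 2s+k-2$, and these waiting times are mutually independent. Summing,
\[
\mathbb{E}\!\left[\sum_{k=1}^m Z^S_k\right] \;=\; \sum_{k=1}^m \frac{1}{2s+k-2} \;=\; H_{2s+m-2} - H_{2s-2} \;=\; \ln\!\left(\frac{m+2s-2}{2s-2}\right) + O(1),
\]
where $H_j$ is the $j$-th harmonic number. For $m \geq s$ this simplifies to $\ln(m/(2s)) + O(1)$, establishing the first bullet.

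For the concentration bullet, I apply Theorem~\ref{thm:exp-concentration} to $X = \sum_{k=1}^m Z^S_k$ with $\lambda = 2$, minimum rate $a = 2s-1$, and $\mu = \Theta(\ln(m/s))$. Since $a\mu = \Omega(s\ln(m/s))$ and $\lambda - 1 - \ln\lambda = 1 - \ln 2 = \Theta(1)$, the theorem yields
\[
\Pr\!\left[X \geq 2\mu\right] \;\leq\; \tfrac{1}{2}\,e^{-a\mu(1-\ln 2)} \;=\; e^{-\Omega(s\ln(m/s))},
\]
and since $2\mu$ equals $2\ln(m/(2s))$ up to absorbed $O(1)$ slack (so the event $\{X > 2\ln(m/(2s))\}$ is contained in $\{X > 2\mu - O(1)\}$, which only weakens $\lambda$ by a factor bounded away from $1$ once $m/s$ is a sufficiently large constant), this is the claimed tail bound; for $m = \Theta(s)$ the claim is trivial because $\ln(m/s) = O(1)$ makes the asserted bound constant. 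The main subtlety is the bookkeeping in step one — verifying that non-$S$-to-non-$S$ attachments really leave $R$ unaffected so that memorylessness delivers both the exponential marginals and their mutual independence — and reconciling the $\ln(m/(2s))$ normalization in the statement with the exact harmonic expression $\ln\!\bigl(\tfrac{m+2s-2}{2s-2}\bigr)$ that arises from the rates.
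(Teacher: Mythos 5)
Your proof is correct and follows essentially the same route as the paper's: identify $Z^S_i$ as an independent exponential of rate $2s-2+i$, sum the harmonic series for the mean, and plug $\lambda=2$, $a=\Theta(s)$, $\mu=\ln(m/(2s))+O(1)$ into Theorem~\ref{thm:exp-concentration} for the tail. You supply two details the paper elides — the rate bookkeeping showing non-$S$ attachments leave the total $S$-rate unchanged, and the reconciliation of the threshold $2\ln(m/(2s))$ with $2\mu$ up to the absorbed $O(1)$ — both of which are handled correctly.
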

\begin{proof}
Observe again that $Z^S_i$ is an exponential random variable of rate $2s-2+i$. So $\mathbb{E}[\sum_{i=1}^m Z^S_i] = \sum_{i=1}^m \frac{1}{2s-2+i} = \ln(\frac{m}{2s}) + O(1)$. For the second bullet, we again plug directly into Theorem~\ref{thm:exp-concentration} ($\lambda = 2, a = s, \mu = \ln(\frac{m}{2s})+O(1)$) to get the desired bound.
\end{proof}

\begin{corollary}\label{cor:PAsets}
Let $S$ be the first $s = n/\ln n$ nodes to arrive in preferential attachment. Then with probability at least $1-e^{\Omega(n\ln \ln n/\ln n)}$, by the time $n/\sqrt{\ln n}$ nodes have arrived, the set $S$ has at least $s\cdot \ln^{1/32}(n)$ direct descendents. 
\end{corollary}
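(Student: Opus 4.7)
The plan is to compare two time intervals in the continuous-time formulation: the wall-clock time required for the population to grow from $s = n/\ln n$ up to $n/\sqrt{\ln n}$ nodes, and the wall-clock time required for $S$ to accumulate $m := s \ln^{1/32}(n)$ direct descendants. We will show that the first time is likely strictly larger than the second, which forces $S$ to have at least $m$ direct descendants by the arrival of the $(n/\sqrt{\ln n})$-th node.

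First, I would apply Lemma~\ref{lem:PA1} with $k = s = n/\ln n$ and $m = n/\sqrt{\ln n}$. Then $\ln(m/k) = \frac{1}{2}\ln\ln n$, so the second bullet gives that, with probability at least $1 - e^{-\Omega(k\ln(m/k))} = 1 - e^{-\Omega(n \ln\ln n/\ln n)}$, the time $\sum_{i=k}^{m} Y_i$ elapsed between the arrival of the $s$-th and the $(n/\sqrt{\ln n})$-th node is at least $\ln(m/k)/8 = \ln\ln n / 16$.

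Next, I would apply Lemma~\ref{lem:PAsets} to the fixed set $S$ of the first $s$ nodes with target $m = s\ln^{1/32}(n)$. Here $\ln(m/(2s)) = \frac{1}{32}\ln\ln n - \ln 2$, so the second bullet gives that, with probability at least $1 - e^{-\Omega(s \ln(m/s))} = 1 - e^{-\Omega(n \ln\ln n/\ln n)}$, the total waiting time $\sum_{i=1}^{m} Z^S_i$ for $S$ to acquire its first $m$ direct descendants is at most $2\ln(m/(2s)) = \frac{1}{16}\ln\ln n - 2\ln 2$, which is strictly less than $\ln\ln n /16$.

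Finally, a union bound over these two events shows that, except with probability $e^{-\Omega(n\ln\ln n/\ln n)}$, both estimates hold simultaneously: $S$ accrues its $m$-th direct descendant strictly before the arrival of the $(n/\sqrt{\ln n})$-th node overall, which is exactly the claim. The main (very minor) obstacle is verifying that the constants line up, specifically that $2 \cdot \frac{1}{32} = \frac{1}{16}$ with the additive $-2\ln 2$ slack from Lemma~\ref{lem:PAsets} giving strict inequality against the $\frac{1}{16}$ lower bound coming from Lemma~\ref{lem:PA1}; this is exactly why the exponent $1/32$ appears in the statement.
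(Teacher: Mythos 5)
Your proposal is correct and follows exactly the paper's argument: apply Lemma~\ref{lem:PA1} with $k=n/\ln n$, $m=n/\sqrt{\ln n}$ to lower-bound the elapsed continuous time by $\ln\ln n/16$, apply Lemma~\ref{lem:PAsets} with $m/s=\ln^{1/32}n$ to upper-bound the time for $S$ to accrue $s\ln^{1/32}n$ direct descendants by $2\ln(m/(2s))<\ln\ln n/16$, and take a union bound. Your explicit verification of the constants (the role of the exponent $1/32$) is a welcome addition the paper leaves implicit.
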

\begin{proof}
Again, immediately from Lemma~\ref{lem:PA1} (plugging in $k = n/\ln n$, $m = n/\sqrt{\ln n}$), we conclude that except with probability $e^{-\Omega(n\ln\ln n/\ln n)}$, the time it takes to go from $n/\ln n$ nodes until $n/\sqrt{\ln n}$ nodes is at least $\ln\ln(n)/16$. Similarly, Lemma~\ref{lem:PAsets} concludes (plug in $m/s = \ln^{1/32}(n), s = n/\ln n$) that except with probability $e^{-\Omega(n\ln \ln n/\ln n)}$, $S$ achieves at least $s\cdot \ln^{1/32}(n)$ descendents by $\ln \ln (n)/16$.
\end{proof}

Corollary~\ref{cor:PAsets} lets us conclude that except with probability $e^{-\Omega(n\ln \ln n/\ln n)}$, the \emph{average degree} of the early nodes is at least $\ln^{1/32}(n)$ by the time the $(n/\sqrt{\ln n})^{th}$ node arrives. Therefore, we can conclude the following:

\begin{corollary}\label{cor:lateattach}
Let $v$ be a critical node arriving after $n/\sqrt{\ln n}$. Then (conditioned on being a critical node), except with probability $\ln^{-1/64}(n)$, $v$ attaches itself to a source with degree at least $\ln^{1/64}(n)$. 
\end{corollary}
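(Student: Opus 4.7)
The plan is to combine Corollary~\ref{cor:PAsets} (a lower bound on the total degree of the early set $S$) with the preferential attachment rule to directly bound the probability that a late-arriving critical node attaches to a low-degree source. First I would condition on the event from Corollary~\ref{cor:PAsets}: by the time the $(n/\sqrt{\ln n})^{\text{th}}$ node arrives, $S$ already has at least $s \cdot \ln^{1/32}(n)$ direct descendants outside $S$ (with $s = n/\ln n$). This fails only with probability $e^{-\Omega(n \ln \ln n / \ln n)}$, which is negligible compared to $\ln^{-1/64}(n)$ and can be absorbed into the final failure budget.

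Next I would translate ``direct descendants of $S$'' into a lower bound on the total degree of $S$. Every edge leaving $S$ contributes one to $\sum_{u \in S} \deg(u)$, so at time $n/\sqrt{\ln n}$ we have $\sum_{u \in S} \deg(u) \geq s \cdot \ln^{1/32}(n)$, and since degrees are monotone non-decreasing the bound still holds at the (later) arrival time of $v$. Now, conditioned on $v$ being a critical node --- i.e., on $v$ attaching to some node of $S$ --- the conditional probability that $v$ attaches to a specific $u \in S$ is exactly $\deg(u)/\sum_{u' \in S}\deg(u')$, because the $2i-1$ normalizer of preferential attachment cancels on both sides of the conditioning. Consequently, the probability that $v$'s source has degree less than $\ln^{1/64}(n)$ is at most
\[
\frac{\sum_{u \in S \,:\, \deg(u) < \ln^{1/64}(n)} \deg(u)}{\sum_{u' \in S} \deg(u')} \;\leq\; \frac{s \cdot \ln^{1/64}(n)}{s \cdot \ln^{1/32}(n)} \;=\; \ln^{-1/64}(n).
\]

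The step I expect to be the most delicate is handling the conditioning cleanly: the ``total degree is large'' event is measurable with respect to the process up through time $n/\sqrt{\ln n}$, while the attachment of $v$ is subsequent, so I must be careful that conditioning on ``$v$ is critical'' does not perturb the distribution of degrees inside $S$ at the moment $v$ arrives. The resolution is that conditional on the set of arrival times and on $v$ attaching somewhere in $S$, the attachment location is still drawn proportional to current degrees among nodes of $S$, so the two-line calculation above is valid. A final union bound combines the $e^{-\Omega(n \ln \ln n / \ln n)}$ failure probability from Corollary~\ref{cor:PAsets} with the $\ln^{-1/64}(n)$ attachment bound to give the claimed conclusion.
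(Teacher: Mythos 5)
Your proposal is correct and follows essentially the same route as the paper's proof: lower-bound the total degree of the early set $S$ via Corollary~\ref{cor:PAsets}, upper-bound the total degree mass of low-degree early nodes by $s\cdot\ln^{1/64}(n)$, and take the ratio under the proportional-to-degree attachment rule. Your added care about monotonicity of degrees and the validity of the conditioning is a welcome elaboration of steps the paper treats implicitly, but the argument is the same.
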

\begin{proof}
Conditioned on being a critical node, $v$ picks a source proportional to degrees. By Corollary~\ref{cor:PAsets}, except with probability $e^{-\Omega(n\ln \ln n/\ln n)}$, the total sum of degrees is at least $s \cdot \ln^{1/32}(n)$. If we sum the degrees of all early nodes with degree at most $\ln^{1/64}(n)$, we get at most $s \cdot \ln^{1/64}(n)$. Therefore, when we pick a node proportional to degrees, the probability that such a node is picked is at most $\ln^{-1/64}(n)$. 
\end{proof}

Combining Corollaries~\ref{cor:kindofearly} and~\ref{cor:lateattach} we are able to conclude now that every critical subtree attaches itself to a source of degree at least $\ln^{1/64}(n)$ with high probability. Finally, we just need to bound the diameter. To this end, we'll make black-box use of a theorem of~\cite{DommersHH09}:

\begin{theorem}[\cite{DommersHH09}]\label{thm:diameter}
Let $G(k)$ be a tree formed by the Preferential Attachment model with $k$ nodes. Then, with probability $1-o(1)$, the diameter of $G(k)$ is $\Theta(\ln k)$.
\end{theorem}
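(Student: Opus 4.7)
Theorem~\ref{thm:diameter} is quoted verbatim from~\cite{DommersHH09}, so the ``proof'' in the paper reduces to invoking that result as a black box. If I were reproving it from scratch, I would split the argument into matching upper and lower bounds on the diameter, with the upper bound being the direction that the rest of the paper actually uses (via Proposition~\ref{prop:good}). Concretely, since there is a distinguished initial node $v_0$, I would bound the height $h(G(k)) := \max_i d(v_i, v_0)$ and then use $\mathrm{diam}(G(k)) \leq 2\, h(G(k))$, and exhibit an explicit long root-to-leaf chain to get the lower direction.

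For the upper bound, the cleanest route uses the Pittel/Poisson embedding introduced above. Let $D_j$ denote the depth of $v_j$ from $v_0$ and define the exponentially tilted potential $M_i(\lambda) := \sum_{j \leq i} \deg_i(v_j)\, e^{\lambda D_j}$. When $v_{i+1}$ attaches to $v_J$, chosen with probability $\deg_i(v_J)/(2i-1)$, the quantity $M_i(\lambda)$ increases by $e^{\lambda D_J}(1 + e^\lambda)$ (the ``$1$'' from $v_J$'s degree rising by one, the ``$e^\lambda$'' from the fresh node $v_{i+1}$ at depth $D_J+1$). Averaging over $J$ yields the recursion $\mathbb{E}[M_{i+1}(\lambda) \mid \mathcal{F}_i] = M_i(\lambda)\bigl(1 + \frac{1 + e^\lambda}{2i-1}\bigr)$, so $\mathbb{E}[M_k(\lambda)] = O(k^{(1+e^\lambda)/2})$. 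Since every degree is at least $1$, we have $e^{\lambda\, h(G(k))} \leq M_k(\lambda)$, and Markov's inequality applied with a small constant $\lambda > 0$ yields $h(G(k)) = O(\ln k)$ with probability $1 - o(1)$. Combined with the height-doubles-the-diameter inequality, this gives the upper bound.

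For the matching lower bound, the simplest approach is to couple subtrees with continuous-time Yule (pure-birth) processes: in Pittel's embedding the descendants of any fixed node evolve independently according to a clock that is effectively a pure-birth chain, whose height grows linearly in continuous time. Since by Lemma~\ref{lem:PA1} the $k^{th}$ arrival occurs at continuous time $\Theta(\ln k)$, standard branching-process concentration produces a root-to-leaf chain of length $\Omega(\ln k)$ with probability $1 - o(1)$. The main technical obstacle here is making the Yule comparison rigorous uniformly over the tree --- which is precisely the delicate content of~\cite{DommersHH09} --- but for the present paper only the upper bound is actually fed into Proposition~\ref{prop:good}, so the lower bound is not on the critical path and a black-box citation is entirely justified.
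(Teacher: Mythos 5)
The paper gives no proof of Theorem~\ref{thm:diameter} at all: it is imported verbatim as a black box from~\cite{DommersHH09}, and you correctly identify this, so there is no ``paper's approach'' to compare against beyond the citation itself. Your supplementary sketch is a genuine (and standard) argument that the paper does not contain. The upper bound via the tilted potential $M_i(\lambda)=\sum_{j\le i}\deg_i(v_j)e^{\lambda D_j}$ is sound: the one-step recursion $\mathbb{E}[M_{i+1}(\lambda)\mid\mathcal{F}_i]\le M_i(\lambda)\bigl(1+\tfrac{1+e^\lambda}{2i-1}\bigr)$ is correct (with $\le$ rather than $=$ if one is careful that $v_0$ receives no further attachments, which only helps), the product telescopes to $O(k^{(1+e^\lambda)/2})$, and Markov with any fixed $\lambda>0$ gives height $O(\ln k)$ w.h.p.; combined with $\mathrm{diam}\le 2h$ this is exactly the direction consumed by Propositions~\ref{prop:good},~\ref{prop:lowdiam}, and~\ref{prop:PA}. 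Your lower-bound sketch via the Yule embedding is only a sketch --- note in particular that the naive ``bounded-degree'' counting lower bound is useless here since preferential attachment trees have polynomially large maximum degree, so a real branching/second-moment argument is unavoidable --- but you flag this honestly and correctly observe that the lower bound is never used downstream. One cosmetic caveat: Proposition~\ref{prop:lowdiam} applies the theorem not to $G(n)$ itself but to the critical \emph{subtrees} (each of which is itself a preferential attachment process started at a later time, by the Fact in Appendix~\ref{app:wrapup}), so whatever proof one supplies should, like yours, apply to the subtree processes as well; your potential-function argument does, since it only uses the attachment probabilities.
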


To wrap everything up, we now do the following. First, condition on the set of critical nodes. Then, for each critical node $v$, condition on the set $S_v$ of nodes in $v$'s critical subtree. But don't yet determine $v$'s source (aside from that it is an early node), and don't yet determine the structure of $v$'s critical subtree. Note that indeed where $v$ attaches is independent of the subtree structure below it, as is the size of the subtree. The former claim is simply because no matter where $v$ attaches itself, the subtree below it grows like an independent preferential attachment process. The latter claim is because no matter how the subtree below $v$ is oriented, the probability that a node joins that subtree is proportional to the size of that subtree, independent of the structure (because only the sum of degrees matters).

We already have from Proposition~\ref{prop:criticalsubtrees} that except with probability $e^{-\Omega(\ln^{15}(n) \ln \ln n)}$, every critical subtree has at most $\ln^{32}(n)$ nodes, so we will proceed under this assumption. Next, we want to claim that with high probability, a $1-o(1)$ fraction of late nodes are in a critical subtree with diameter $O(\ln \ln n)$. 

\begin{proposition}\label{prop:lowdiam}
Except with probability $1-o(1)$, a $1-o(1)$ fraction of late nodes are in a critical subtree with diameter $O(\ln \ln n)$. 
\end{proposition}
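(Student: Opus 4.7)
The plan is to combine the size bound from Proposition~\ref{prop:criticalsubtrees} with Theorem~\ref{thm:diameter} and a linearity-of-expectation argument, then finish via Markov's inequality. The key conceptual point, already noted in the paper, is that once we condition on the size $|H_v|=k$ of a critical subtree, its internal structure is distributed exactly as a preferential attachment tree on $k$ nodes, because the subtree process at $v$ is an independent PA process.

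First, I would condition on the event that $|H_v| \leq \ln^{32} n$ for every critical node $v$, which holds with probability $1-e^{-\Omega(\ln^{15}(n)\ln\ln n)}$ by Proposition~\ref{prop:criticalsubtrees}. Next, fix a sufficiently large constant $C$ (chosen so that the $O(\ln k)$ bound from Theorem~\ref{thm:diameter} evaluated at $k=\ln^{32} n$ is at most $C\ln\ln n$). Call a critical subtree \emph{bad} if its diameter exceeds $C\ln\ln n$, and let $B$ be the number of late nodes lying in bad critical subtrees. Subtrees with $|H_v| \leq C\ln\ln n$ are automatically good since diameter is bounded by size, so only subtrees of size $k > C\ln\ln n$ can contribute to $B$.

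For any critical subtree of size $k > C\ln\ln n$, Theorem~\ref{thm:diameter} gives $\Pr[\text{diameter}(H_v) > C\ln\ln n \mid |H_v|=k] \leq \alpha(k)$ for some function $\alpha(k)=o(1)$ as $k\to\infty$. Since $k > C\ln\ln n \to \infty$ with $n$, we have $\sup_{k > C\ln\ln n}\alpha(k) = \varepsilon_n = o(1)$. Therefore, by linearity of expectation over late nodes,
\[
\mathbb{E}[B] \;=\; \sum_{v \text{ late}} \Pr[v \text{ lies in a bad subtree}] \;\leq\; n \cdot \varepsilon_n \;=\; o(n).
\]
Applying Markov's inequality at the scale $\sqrt{\varepsilon_n}\,n$ gives $\Pr[B \geq \sqrt{\varepsilon_n}\,n] \leq \sqrt{\varepsilon_n} = o(1)$, so with probability $1-o(1)$ we have $B \leq \sqrt{\varepsilon_n}\,n = o(n)$. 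Combining with the high-probability size bound used to condition at the outset, a $1-o(1)$ fraction of late nodes lie in critical subtrees of diameter $O(\ln\ln n)$, as desired.

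The main delicate point is the uniformity of the $o(1)$ in Theorem~\ref{thm:diameter}: we need $\alpha(k) = o(1)$ to hold uniformly over the relevant range $C\ln\ln n < k \leq \ln^{32} n$, not just pointwise. This is fine because the lower endpoint already tends to infinity, but it is the one step where some care is required; everything else is routine bookkeeping on top of the structural facts established earlier in the section.
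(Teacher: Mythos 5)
Your proof is correct and follows essentially the same route as the paper's: condition on the subtree sizes via Proposition~\ref{prop:criticalsubtrees}, use the fact that each critical subtree's internal structure is an independent preferential attachment process, invoke Theorem~\ref{thm:diameter} to bound the expected number of late nodes in high-diameter subtrees, and conclude by concentration. The only (harmless) deviation is that you finish with Markov's inequality at scale $\sqrt{\varepsilon_n}\,n$, whereas the paper exploits the independence of the subtree structures to apply an additive Chernoff bound to the normalized indicators $W_v$; your version is slightly more elementary and still delivers the required $1-o(1)$ guarantee, and your explicit handling of the uniformity of the $o(1)$ in Theorem~\ref{thm:diameter} and of the trivially-good small subtrees is a welcome bit of extra care.
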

\begin{proof}
Note that there are a total of $r=n-n/\ln n$ non-early nodes. For each critical subtree rooted at $v$ of size $D_v$ (but whose structure is yet to be determined), define an indicator random variable $W_v$ which is $D_v/\ln^{32}(n)$ if the diameter exceeds $c \cdot \ln \ln n$, where $c$ is the constant in the big-Oh upper bound promised by Theorem~\ref{thm:diameter}, and $0$ otherwise. Then by Theorem~\ref{thm:diameter}, we know that {$\mathbb{E}[\sum_v W_v] = o(r/\ln^{32} n)$}. Also, all $W_v$ are independent. Therefore, the additive Chernoff bound asserts that the probability that $\sum_v W_v$ exceeds its expectation by more than $n^{3/4}= o(r/\ln^{32}(n))$ is at most $e^{-\Omega(n^{1/3})}$ {(because there are at least $r/\ln^{32}(n)$ random variables being summed, as the sum of $D_v$s is at least $r$ and each $D_v$ is at most $\ln^{32} n$)}. If $\sum_v W_v$ doesn't exceed its expectation, which is $o(r/\ln^{32}(n))$, by more than $o(r/\ln^{32}(n))$, then the total number of nodes in a critical subtree with diameter exceeding $c \cdot \ln \ln n$ is {$o(r)$}, completing the proof.
\end{proof}

Now we have that except with probability $o(1)$, every critical subtree has at most $\ln^{32}(n)$ nodes, and a $1-o(1)$ fraction of non-early nodes are in a critical subtree of diameter $O(\ln \ln n)$. We now proceed under both of these assumptions, and look at any node that is in a critical subtree of diameter $O(\ln \ln n)$. If the root of this tree attaches itself to a source of degree at least $\ln^{1/64}(n)$, then the entire subtree is good. We have shown that this occurs with probability at least $1-\ln^{-1/64}(n)$. Therefore, the expected number of nodes that are in a critical subtree of diameter $O(\ln \ln n)$ but whose root attaches to a source of \emph{low degree} is at most {$r \cdot \ln^{-1/64}(n)$. By Markov's inequality, the number of such nodes is at most $r \cdot \ln^{-1/128}(n)$ }with probability at least $1-\ln^{-1/128}(n)$.

So in summary: we first argued that with high probability, \emph{every} critical subtree has at most the desired number of nodes. Conditioned on this, we argued further that with high probability, almost all non-early nodes are in critical subtrees with the desired diameter. Finally, we argued that conditioned on this, with high probability almost all non-early nodes are further in critical subtrees whose source has the desired degree. Together, this claims what with probability $1-o(1)$, $n-o(n)$ nodes are in good subtrees (as we've argued that this is the case for a $1-o(1)$ fraction of the $n-n/\ln n$ non-early nodes). 

To wrap up, we simply take another union bound with the $o(1)$ failure probability promised by Theorem~\ref{thm:diameter} that the diameter of the entire graph is not $O(\ln n)$. 
\end{proof}

\begin{proof}[Proof of Theorem~\ref{thm:PA}]
Simply combine Propositions~\ref{prop:good} and~\ref{prop:PA}. Together, they say that with probability $1-o(1)$, only $o(n)$ nodes are \emph{not} in good subtrees. Moreover, the expected number of nodes that \emph{are} in good subtrees but do \emph{not} finalize by $n\cdot\frac{\ln n}{32 \ln \ln n}$ is also $o(n)$. Therefore, Markov's inequality alone suffices to claim with that with probability $1-o(1)$, $n-o(n)$ nodes have finalized by $n\cdot\frac{\ln n}{32 \ln \ln n}$.

Theorem~\ref{thm:majority} claims that with probability $1-o(1)$, $n/2 + \delta n/4$ nodes have a \correct annoucement. When both of these conditions hold, at most $o(n)$ nodes can possibly change their future announcement from this point, and therefore the \correct majority holds until termination.
\end{proof}

\begin{proof}[Proof of Proposition~\ref{prop:countingMary}]
Corollary~\ref{cor:height} allows us to conclude that such nodes are nearly-finalized with respect to their parent at $\frac{n\ln n}{64 \ln \ln n}$ with probability $1-o(1)$. Next, we want to claim that most of the descendents of such nodes are finalized by $\frac{n \ln n}{64 \ln \ln n}$.

We'll again consider epochs of time passing from $v$ towards its descendents. We already know that with probability $1-o(1)$, $v$ is nearly-finalized with respect to its parent (plugging in $X = 2M^h$, $T = \frac{n\ln n}{64 \ln \ln n}$, $Y = h$ to Theorem~\ref{thm:stable}), so assume that this holds. We then want to take $T = \frac{n\ln n}{64 (\ln \ln n)^2}$ and apply Corollary~\ref{cor:counting}. This immediately lets us conclude that with probability $1-e^{- \frac{\ln n}{64 (\ln \ln)^2}}$, $v$ has $\lfloor M/2\rfloor \geq M/3$ finalized children. 

Now, for each of $v$'s non-finalized children, we'll apply Corollary~\ref{cor:counting} again with the same choice of $T$. We'll continue this process recursively until we reach the bottom. 

So, the probability that a single non-finalized descendent does not have the desired number of finalized children by the end of its prescribed epoch is at most $e^{-\frac{\ln n}{64 (\ln \ln n)^2}}$. Taking a union bound over all $2M^h$ nodes gives that with probability at most $2M^h e^{-\frac{\ln n}{64 (\ln \ln n)^2}}$ is there \emph{any} failure, and as $M \leq \ln n$, $h = \ln \ln n$, this entire bound is $o(1)$.

Now, we want to further use Corollary~\ref{cor:counting} to conclude that once a node is finalized, all of its descendants are finalized by the end of the epochs. Let now $x$ be some node that is finalized by its intended epoch $i$. Then $x$ is of height $h-i$ and is a $(2M^{h-i},i)$-leaf. Corollary~\ref{cor:counting} therefore immediately implies that the probability that any of $x$'s descendents are not finalized by the end of an additional $(h-i)\frac{\ln n}{64 (\ln \ln n)^2}$ steps is at most $2M^{h-i} e^{-\frac{\ln n}{64 (\ln \ln n)^2}}$. Taking a further union bound over all $2M^h$ nodes each with failure probability at most $e^{-\frac{\ln n}{64 (\ln \ln n)^2}}$ (to announce during its corresponding epoch), we see that this entire failure probability is again $o(1)$. 

So let's recap what we have now. First, we have an $o(1)$ failure probability that $v$ is not nearly-finalized by $n\cdot \frac{\ln n}{64 \ln \ln n}$. Next, we have an $o(1)$ failure probability that \emph{any} non-finalized node does not have at least $M/3$ finalized children by the end of its prescribed epoch. Finally, we have an $o(1)$ failure probability that \emph{any} node with an ancestor who was finalized during their prescribed epoch is not finalized by the end of the entire process (which is exactly $n \cdot \frac{\ln n}{32 \ln \ln n}$). So except with probability $o(1)$, at least an $1/3$ fraction of the remaining descendents get a finalized ancestor in each epoch, and is itself finalized by the end.

The final step is just a simple counting: if the number of remaining descendents without a finalized ancestor shrinks by a $2/3$ factor, and there are $h$ epochs, then the number of unfinalized nodes at the end is at most a $(2/3)^h$ fraction of the initial set.
\end{proof}

\begin{proof}[Proof of Theorem~\ref{thm:Mary}]
For $M \leq \ln n$ there's not much left to wrap up. We just need to count the number of nodes that we've just claimed are finalized by $n\cdot \frac{\ln n}{32 \ln \ln n}$. Observe first that there are $\geq M^H$ nodes of height $\leq \ln \ln n$, and at most $2M^{H-\ln \ln n}$ nodes of height $\geq \ln \ln n$ (where $H$ denotes the height of the root). Therefore, a $1-o(1)$ fraction of all nodes are of height $\leq \ln \ln n$. Moreover, Proposition~\ref{prop:countingMary} proves that a $1-o(1)$ fraction of such nodes are finalized by $n\cdot\frac{\ln n}{32 \ln \ln n}$. Therefore, we conclude that only $o(n)$ nodes in the entire graph are not finalized by $n\cdot\frac{\ln n}{32\ln\ln n}$, except with probability $o(1)$. Theorem~\ref{thm:majority} claims that with probability $1-o(1)$, $n/2 + \delta n/4$ nodes have a \correct annoucement. When both of these conditions hold, at most $o(n)$ nodes can possibly change their future announcement from this point, and therefore the \correct majority holds until termination.

For $M \geq \ln n$, the argument is actually simpler: all pairs $(u,v)$ of nodes contain some node $x$ on $P(u,v)$ that has degree at least $\ln n$. By Proposition~\ref{prop:redalwaysred}, $x$ is safe thru $n\ln^2 n$ even ignoring $S_x$ with probability $1-o(1)$. Therefore, Lemma~\ref{lem:shortpath3} guarantees that $\varepsilon_{uv}^{n\ln^2 n} = o(1)$, and Lemma~\ref{lem:influence} guarantees that there's a \correct majority with probability $1-o(1)$ at $n\ln^2 n$. However, the diameter is $O(\ln n)$, and therefore Corollary~\ref{cor:stopping} guarantees that the entire process stabilizes by $n\ln^2 n$ with probability $1-o(1)$. So taking a union bound, we see that with probability $1-o(1)$ the process has stabilized in a \correct majority by $n\ln^2 n$.
\end{proof}

\end{document}